\newtheorem{theorem}{Theorem}[section]
\newtheorem{corollary}{Corollary}
\newtheorem{lemma}[theorem]{Lemma}
\newtheorem{proposition}{Proposition}
\newtheorem*{problem}{Problem}
\theoremstyle{definition}
\newtheorem{definition}[theorem]{Definition}
\newtheorem{remark}{Remark}
\newtheorem{example}{Example}
\newcommand{\unitsof}[1]{{#1^{\times}}}
\newcommand{\field}[1]{\mathbb{F}_{#1}}
\newcommand{\relvar}[2]{\buildrel {#2} \over {#1}}
\newcommand{\eqvar}[1]{\relvar{=}{\mathrm{#1}}}
\newcommand{\eqdef}{\eqvar{\scriptscriptstyle\triangle}}
\newcommand{\grmat}{\mathfrak{G}} % the set of good random matrices
\newcommand{\seq}[1]{\mathbf{#1}} % vector or sequence
\newcommand{\rseq}[1]{\tilde{\mathbf{#1}}} % random vector
\newcommand{\mat}[1]{\mathbf{#1}} % matrix
\newcommand{\rmat}[1]{\tilde{\mathbf{#1}}} % random matrix
\newcommand{\matset}[1]{\mathcal{#1}} % sets of matrices
\newcommand{\rset}[1]{\tilde{#1}}
\DeclareMathOperator{\supp}{supp}
\DeclareMathOperator{\rank}{rank}
\newcommand{\rankd}{\mathrm{d}}
\newcommand{\matfield}[1]{\matset{F}_{#1}}
\newcommand{\nzmatfield}[1]{\unitsof{\matset{F}_{#1}}}
\newcommand{\gbinom}[3][q]{\genfrac{[}{]}{0pt}{}{#2}{#3}_{#1}} % Gaussian binomial coefficient
\newcommand{\uorder}{\mathrm{O}} %Big O notation
\DeclareMathOperator{\AG}{AG}
\DeclareMathOperator{\PG}{PG}
\DeclareMathOperator{\Hom}{Hom}
\newcommand{\minsize}{\mathrm{\mu}}
\newcommand{\lhom}{\mathrm{w}_{\mathrm{\ell}}}
\newcommand{\rhom}{\mathrm{w}_{\mathrm{r}}}
\newcommand{\nlhom}{\overline{\mathrm{w}}_{\mathrm{\ell}}}
\newcommand{\nrhom}{\overline{\mathrm{w}}_{\mathrm{r}}}
\newcommand{\LAG}{\AG_{\mathrm{\ell}}}
\newcommand{\RAG}{\AG_{\mathrm{r}}}
\title[Good Random Matrices over Finite Fields]
      {Good Random Matrices over Finite Fields}
\author[Shengtian Yang and Thomas Honold]{}%{\today}
\subjclass{94B05, 15B52, 60B20, 15B33, 51E21}
\keywords{Linear code, maximum-rank-distance code, MRD code, intersecting code, random matrix, homogeneous weight, dense set of matrices, affine geometry of matrices, blocking set}
\begin{document}
\maketitle

%% Enter the first author's name and address:
\centerline{\scshape Shengtian Yang}
\medskip
{\footnotesize
 %% please put the address of the first author
 \centerline{Zhengyuan Xiaoqu 10-2-101, Fengtan Road}
   \centerline{Hangzhou 310011, China}
} %% Do not forget to end the {\footnotesize by the sign }

\medskip

\centerline{\scshape Thomas Honold}
\medskip
{\footnotesize
 %% please put the address of the second author
 \centerline{Department of Information Science and Electronic Engineering}
 \centerline{Zhejiang University}
   \centerline{38 Zheda Road, Hangzhou 310027, China}
} %

\bigskip

%% The name of the associate editor will be entered by an editorial staff
 \centerline{(Communicated by Ivan Landjev)}

\begin{abstract}
  The random matrix uniformly distributed over the set of all
  $m$-by-$n$ matrices over a finite field plays an important role in
  many branches of information theory. In this paper a generalization
  of this random matrix, called $k$-good random matrices, is
  studied. It is shown that a $k$-good random $m$-by-$n$ matrix with a
  distribution of minimum support size is uniformly distributed over a
  maximum-rank-distance (MRD) code of minimum rank distance
  $\min\{m,n\}-k+1$, and vice versa. Further examples of $k$-good
  random matrices are derived from homogeneous weights on matrix
  modules. Several applications of $k$-good random matrices are given,
  establishing links with some well-known combinatorial problems.
  Finally, the related combinatorial concept of a $k$-dense set of
  $m$-by-$n$ matrices is studied, identifying such sets as blocking
  sets with respect to $(m-k)$-dimensional flats in a certain
  $m$-by-$n$ matrix geometry and determining their minimum size in
  special cases.
\end{abstract}

\section{Introduction}

Let $\field{q}$ be the finite field of order $q$ and $\rmat{G}$ the
random matrix uniformly distributed over the set $\field{q}^{m \times
  n}$ of all $m\times n$ matrices over $\field{q}$. It is well known
that the linear code ensemble $\{\seq{u}\rmat{G}: \seq{u} \in
\field{q}^m\}$ generated by $\rmat{G}$ as a generator matrix
(nontrivial for $m<n$) is good in the sense of the asymptotic
Gilbert-Varshamov (GV) bound \cite{Barg200209}. In fact, the ensemble
$\{\seq{v} \in \field{q}^n: \rmat{G} \seq{v}^T = \seq{0}\}$ generated
by $\rmat{G}$ as a parity-check matrix (nontrivial for $m<n$) is also
good \cite{Gallager196300}. The application of $\rmat{G}$ is not
confined to channel coding. It also turns out to be good for
Slepian-Wolf coding \cite{Csiszar198207}, lossless joint
source-channel coding (lossless JSCC) \cite{Yang200904}, etc.

The success of $\rmat{G}$ in information theory exclusively depends on
the fundamental property
\begin{equation}\label{eq:StrongSCCGood}
  P\{\seq{u}\rmat{G} = \seq{v}\} = q^{-n} \qquad \forall \seq{u} \in
  \field{q}^m \setminus \{\seq{0}\}, \forall\seq{v} \in \field{q}^n.
\end{equation}
Actually, any random matrix satisfying \eqref{eq:StrongSCCGood} has
the same performance as $\rmat{G}$ for channel coding, lossless JSCC,
etc. We call such random matrices good random matrices.

\begin{definition}\label{df:GoodRandomMatrix}
  A \emph{good random matrix} is a random matrix satisfying
  \eqref{eq:StrongSCCGood}. The collection of all good
  random $m\times n$ matrices over $\field{q}$ is denoted by
  $\grmat(m,n,\field{q})$.
\end{definition}

In what follows it is usually safe to identify random matrices, i.e.\
measurable maps $\Omega \to \field{q}^{m\times n}$  defined on
some abstract probability space $(\Omega, \mathcal{A}, P)$
and relative to the full measure algebra on $\field{q}^{m\times n}$,
with their associated probability distributions. In this model
$\grmat(m,n,\field{q})$ is just the set of functions
$f\colon\field{q}^{m\times n}\to\mathbb{R}$ satisfying $f(\mat{A})\geq 0$
for all $\mat{A}\in\field{q}^{m\times n}$ and
$\sum_{\mat{A}:\seq{u}\mat{A}=\seq{v}}f(\mat{A})=q^{-n}$ for all
$\seq{u}\in\field{q}^m\setminus\{\seq{0}\}$, $\seq{v}\in\field{q}^n$.

In some sense, the random matrix $\rmat{G}$ is trivial, since its
distribution has support $\field{q}^{m \times n}$, the set of all
$m\times n$ matrices. Naturally, one expects to find other good random
matrices with distributions of smaller support size. In
\cite{Yang200909}, it was shown that a random $m\times n$ matrix
uniformly distributed over a Gabidulin code, a special kind of
maximum-rank-distance (MRD) code \cite{Delsarte197800,
  Gabidulin198501, Roth199103}, is a good random matrix with a
distribution of support size as small as $q^{\max\{m,n\}}$. However,
this is not the end of the story. Further questions arise. What is the
minimum achievable support size (of the distribution) of a good random
matrix and what is the relation between good random matrices and MRD
codes?

In this paper, we answer the above questions. At first, we prove two
fundamental facts---a random matrix is good if and only if its
transpose is good, and a random matrix uniformly distributed over
an MRD code is good. Using these facts, we show that the minimum
support size of a good random $m\times n$ matrix is $q^{\max\{m,n\}}$
and that a random $m\times n$ matrix with support size
$q^{\max\{m,n\}}$ is good if and only if it is uniformly distributed
over an $(m,n,1)$ MRD code. Based on this result and knowledge about
MRD codes, we also explain how to construct good random matrices with
minimum support size as well as general good random matrices with
larger support size.

Furthermore, we extend the above results to families of special good
random matrices called $k$-good random matrices ($k=1,2,3,\dots$). An
ordinary good random matrix is a $1$-good random matrix in this sense,
and a $(k+1)$-good random matrix is necessarily a $k$-good random
matrix. It is shown that there is a one-to-one correspondence (up to
probability distribution) between $k$-good random $m\times n$ matrices
with minimum support size and $(m,n,k)$ MRD codes.  

Additional examples of $k$-good random matrices are provided by
so-called homogeneous weights on $\field{q}^{m\times n}$, which are
defined in terms of the action of the ring of $m\times m$ (resp.\
$n\times n$) matrices over $\field{q}$ on $\field{q}^{m\times
  n}$. Originally these weight functions had been introduced as a
suitable generalization of the Hamming weight on the prime fields
$\mathbb{Z}_p$ to rings of integers $\mathbb{Z}_n$. Later they were
generalized to other finite rings and modules, including the case
under consideration.

Next some examples are
given to show applications of $k$-good random matrices and their close
relation with some well-studied problems in coding theory and
combinatorics.

Finally we investigate the related (but weaker) combinatorial concept
of $k$-dense sets of $m\times n$ matrices over $\field{q}$. When
viewed as sets of linear transformations, these $k$-dense sets induce
by restriction to any $k$-dimensional subspace of $\field{q}^m$ the
full set of linear transformations from $\field{q}^k$ to
$\field{q}^n$.

We show that $k$-dense sets of $m\times n$ matrices over $\field{q}$
are equivalent to blocking sets with respect to $(m-k)$-dimensional
flats (i.e.\ point sets having nonempty intersection with every
$(m-k)$-dimensional flat) in a certain matrix geometry, which arises
from the module structure of $\field{q}^{m\times n}$ relative to the
right action by the ring of $n\times n$ matrices over $\field{q}$.
Among all $k$-dense subsets $\matset{A}\subseteq\field{q}^{m\times n}$ 
those meeting every flat in the same
number of points are characterized by the fact that the random $m\times n$
matrix $\rmat{A}$ uniformly distributed over $\matset{A}$ is $k$-good.

The study of blocking sets in classical projective or affine
geometries over finite fields is central to finite geometry and has
important applications (to the packing problem of coding theory, for
example). Quite naturally the focus lies on minimal blocking sets and
their properties.  We close the section on $k$-dense sets of matrices
by determining the minimum size of $k$-dense sets (blocking sets with
respect to $(m-k)$-dimensional flats) in certain cases. The example of
binary $3\times 2$ matrices - small enough to be solved by hand
computation, but already requiring a considerable amount of geometric
reasoning - is worked out in detail.

The paper is organized as follows. In
Section~\ref{sec:GoodRandomMatrix}, we study the minimum support size
of good random matrices and their relation to MRD codes. In
Section~\ref{sec:KGoodRandomMatrix}, we introduce and study $k$-good
random matrices. Homogeneous weights and their use for the
construction of $k$-good random matrices are discussed in
Section~\ref{sec:whom}. In Section~\ref{sec:applic}, we discuss the
application of $k$-good random matrices to some well-known
combinatorial problems.
Section~\ref{sec:AlgebraicAspects} contains the material on
$k$-dense sets of matrices, and Section~\ref{sec:Conclusion}
concludes our paper with some suggestions for further work.

In the sequel, if not explicitly indicated otherwise, all matrices and
vectors are assumed to have entries in the finite field $\field{q}$. Vectors
are denoted by boldface lowercase letters such as $\seq{u}$,
$\seq{v}$, $\seq{w}$, which are regarded as row vectors. Matrices are
denoted by boldface capital letters such as $\mat{A}$, $\mat{B}$,
$\mat{C}$. By a tilde we mean that a matrix (resp. vector) such as
$\rmat{A}$ (resp. $\rseq{v}$) is random, i.e.\ subject to some
probability distribution $\mat{A}\mapsto P(\rmat{A}=\mat{A})$ on
$\field{q}^{m\times n}$ (in the case of matrices).\footnote{The
exact nature of the random variables $\rmat{A}$, $\rmat{v}$ does not
matter in our work, so that they can be safely identified with their
distributions (respectively, joint distributions in the case of
several random variables).} Sets of matrices are denoted
by script capital letters such as $\matset{A}$, $\matset{B}$,
$\matset{C}$. For the set $\{\mat{A}\mat{B}: \mat{A} \in \matset{A}\}$
(resp. $\{\mat{A}\mat{B}: \mat{B} \in \matset{B}\}$) of products of
matrices, we use the short-hand $\matset{A}\mat{B}$
(resp. $\mat{A}\matset{B}$). The Gaussian binomial
coefficient $\gbinom{n}{m}$ is defined by
\[
\gbinom{n}{m} \eqdef \left\{\begin{array}{ll}
\displaystyle\frac{\prod_{i=n-m+1}^n (q^i-1)}{\prod_{i=1}^m (q^i-1)}
&\text{for $0\le m \le n$}, \\
0 &\text{otherwise}.\footnotemark
\end{array}\right.
\]
\footnotetext{Here we tacitly assume that an empty product is $1$.}%
Finally, we will say that $\mat{A}\in\field{q}^{m\times n}$ is
of \emph{full rank} if $\rank(\mat{A})=\min\{m,n\}$.

\section{Good Random Matrices and MRD Codes}
\label{sec:GoodRandomMatrix}

For any random matrix $\rmat{A}$, we define the \emph{support} of $\rmat{A}$ by
\begin{equation}
\supp(\rmat{A}) \eqdef \{\mat{A} \in \field{q}^{m \times n}: P\{\rmat{A} = \mat{A}\} > 0\}.
\end{equation}
It is clear that the support size of a good random $m\times n$ matrix
is greater than or equal to $q^{n}$. On the other hand, in
\cite{Yang200909}, we constructed a good random matrix with support
size $q^{\max\{m,n\}}$, which gives an upper bound on the minimum
support size of good random matrices. Combining these two facts then
gives
\begin{equation}\label{eq:MinimumSupportSizeApproximation}
q^n \le \min_{\rmat{A} \in \grmat(m,n,\field{q})} |\supp(\rmat{A})| \le q^{\max\{m,n\}}.
\end{equation}
When $m \le n$, the two bounds coincide and hence give the exact
minimum support size. When $m > n$, however, it is not clear what
the minimum support size is. To answer this question, we need the
following fundamental theorem about good random matrices.

\begin{theorem}\label{th:GoodRandomMatrixDuality}
A random matrix is good if and only if its transpose is good.
\end{theorem}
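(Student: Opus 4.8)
The plan is to characterize goodness through a Fourier transform on $\field{q}^{m\times n}$ whose defining condition is manifestly symmetric under transposition. Fix a nontrivial additive character $\psi\colon(\field{q},+)\to\mathbb{C}^{\times}$, write $f$ for the distribution of $\rmat{A}$ (so $f(\mat{A})=P\{\rmat{A}=\mat{A}\}$), and define $\hat f(\mat{B})=\sum_{\mat{A}}f(\mat{A})\,\psi(\operatorname{tr}(\mat{B}^T\mat{A}))$ for $\mat{B}\in\field{q}^{m\times n}$. The central fact I would establish is the following: \emph{$\rmat{A}$ is good if and only if $\hat f(\mat{B})=0$ for every rank-one matrix $\mat{B}\in\field{q}^{m\times n}$.} Once this equivalence is proved, the theorem follows with essentially no further work.

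First I would insert into $P\{\seq{u}\rmat{A}=\seq{v}\}=\sum_{\mat{A}}f(\mat{A})\cdot[\seq{u}\mat{A}=\seq{v}]$ the orthogonality identity $q^{-n}\sum_{\seq{w}\in\field{q}^n}\psi(\langle\seq{u}\mat{A}-\seq{v},\seq{w}\rangle)$, which equals $1$ when $\seq{u}\mat{A}=\seq{v}$ and $0$ otherwise. Using the trace identity $\langle\seq{u}\mat{A},\seq{w}\rangle=\operatorname{tr}\bigl((\seq{u}^T\seq{w})^T\mat{A}\bigr)$ (a routine consequence of cyclicity), the inner sum over $\mat{A}$ collapses to a single Fourier mode, giving $P\{\seq{u}\rmat{A}=\seq{v}\}=q^{-n}\sum_{\seq{w}}\psi(-\langle\seq{v},\seq{w}\rangle)\,\hat f(\seq{u}^T\seq{w})$. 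Since $\hat f(\seq{0})=\sum_{\mat{A}}f(\mat{A})=1$, the requirement \eqref{eq:StrongSCCGood} is equivalent to $\sum_{\seq{w}\neq\seq{0}}\psi(-\langle\seq{v},\seq{w}\rangle)\,\hat f(\seq{u}^T\seq{w})=0$ for all $\seq{v}$.

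Reading the last expression as a function of $\seq{v}$, it is a linear combination of the pairwise distinct characters $\seq{v}\mapsto\psi(-\langle\seq{v},\seq{w}\rangle)$ of $\field{q}^n$; by their linear independence it vanishes identically precisely when every coefficient $\hat f(\seq{u}^T\seq{w})$ with $\seq{w}\neq\seq{0}$ is zero. Allowing $\seq{u}$ to range over all nonzero vectors as well, and noting that the outer products $\seq{u}^T\seq{w}$ with $\seq{u},\seq{w}\neq\seq{0}$ are exactly the rank-one matrices in $\field{q}^{m\times n}$, establishes the italicized equivalence. The theorem is then immediate: the distribution of $\rmat{A}^T$ is $f^T(\mat{C})=f(\mat{C}^T)$, a one-line computation gives $\widehat{f^T}(\mat{C})=\hat f(\mat{C}^T)$, and since $\mat{C}\mapsto\mat{C}^T$ maps the rank-one matrices in $\field{q}^{n\times m}$ bijectively onto those in $\field{q}^{m\times n}$, goodness of $\rmat{A}^T$ is the very same condition as goodness of $\rmat{A}$.

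I expect the only subtle point to lie in setting up the equivalence correctly, namely verifying the trace identity that recasts $\langle\seq{u}\mat{A},\seq{w}\rangle$ as the Fourier coefficient of $f$ at the rank-one matrix $\seq{u}^T\seq{w}$, and justifying that vanishing for all $\seq{v}$ forces each coefficient to vanish (orthogonality of the characters of $\field{q}^n$). After that, transpose symmetry is a formality, since transposition preserves rank and in particular permutes the rank-one matrices; no separate analysis of $\rmat{A}^T$ is required.
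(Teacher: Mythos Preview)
Your proof is correct and takes a genuinely different route from the paper. The paper proceeds via an elementary double-counting lemma (Lemma~\ref{le:UniformRandomVector}): if $\seq{u}\rseq{v}^T$ is uniform on $\field{q}$ for every nonzero $\seq{u}$, then $\rseq{v}$ is uniform on $\field{q}^m$. This is proved by summing $P\{\seq{u}\rseq{v}^T=0\}$ over all nonzero $\seq{u}$ in two ways, extracting $P\{\rseq{v}=\seq{0}\}=q^{-m}$, and then shifting. The theorem follows because goodness of $\rmat{A}$ forces $\seq{u}\rmat{A}\seq{v}^T=\seq{u}(\seq{v}\rmat{A}^T)^T$ to be uniform on $\field{q}$, whence $\seq{v}\rmat{A}^T$ is uniform on $\field{q}^m$ by the lemma.

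Your Fourier-analytic characterization---$\rmat{A}$ is good iff $\hat f$ vanishes on all rank-one matrices---is more structural: the symmetry under transposition is built into the criterion itself, so the theorem becomes a one-line corollary rather than requiring a separate uniformity lemma. The trade-off is that the paper's argument stays entirely within elementary probability and counting, with no appeal to characters or their linear independence. Your approach, on the other hand, has the advantage of suggesting the right generalization: the analogous criterion for $k$-goodness (vanishing of $\hat f$ on all matrices of rank at most $k$) would yield Theorem~\ref{th:KGoodRandomMatrixDuality} by the same stroke, whereas the paper needs a substantially more involved counting argument (Lemmas~\ref{le:SpecialAnzahlLemma} and~\ref{le:UniformRandomMatrix}) to handle that case.
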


The following lemma will be needed for the proof.

\begin{lemma}\label{le:UniformRandomVector}
  Let $\rseq{v}$ be a random $m$-dimensional vector over
  $\field{q}$. If for all $\seq{u} \in \field{q}^m \setminus
  \{\seq{0}\}$, $\seq{u} \rseq{v}^T$ is uniformly distributed over
  $\field{q}$, then $\rseq{v}$ is uniformly distributed over
  $\field{q}^m$.
\end{lemma}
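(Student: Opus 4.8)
The plan is to use discrete Fourier analysis on the additive group $\field{q}^{m}$. Fix a nontrivial additive character $\psi\colon\field{q}\to\mathbb{C}^{\times}$; then every character of $\field{q}^{m}$ is of the form $\seq{w}\mapsto\psi(\seq{u}\seq{w}^T)$ for a unique $\seq{u}\in\field{q}^{m}$, and the Fourier coefficients of the distribution $p(\seq{w})\eqdef P\{\rseq{v}=\seq{w}\}$ are $\hat{p}(\seq{u})\eqdef\sum_{\seq{w}\in\field{q}^{m}}p(\seq{w})\,\psi(\seq{u}\seq{w}^T)$. The strategy is to compute these coefficients directly from the hypothesis and then recover $p$ by Fourier inversion.

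First I would group the defining sum for $\hat{p}(\seq{u})$ according to the value $a=\seq{u}\seq{w}^T$, giving $\hat{p}(\seq{u})=\sum_{a\in\field{q}}P\{\seq{u}\rseq{v}^T=a\}\,\psi(a)$. For $\seq{u}=\seq{0}$ this is simply $\hat{p}(\seq{0})=\sum_{\seq{w}}p(\seq{w})=1$. For $\seq{u}\neq\seq{0}$, the hypothesis says $\seq{u}\rseq{v}^T$ is uniform on $\field{q}$, so $P\{\seq{u}\rseq{v}^T=a\}=q^{-1}$ and hence $\hat{p}(\seq{u})=q^{-1}\sum_{a\in\field{q}}\psi(a)=0$, using the standard fact that a nontrivial additive character sums to zero over $\field{q}$. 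Thus the hypothesis annihilates exactly the nontrivial Fourier coefficients.

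Finally I would apply Fourier inversion, $p(\seq{w})=q^{-m}\sum_{\seq{u}\in\field{q}^{m}}\hat{p}(\seq{u})\,\overline{\psi(\seq{u}\seq{w}^T)}$, which follows from the orthogonality relation $\sum_{\seq{u}}\psi(\seq{u}(\seq{w}'-\seq{w})^T)=q^{m}$ if $\seq{w}'=\seq{w}$ and $0$ otherwise. Since only the $\seq{u}=\seq{0}$ term survives and contributes $\hat{p}(\seq{0})=1$, we obtain $p(\seq{w})=q^{-m}$ for every $\seq{w}$, i.e.\ $\rseq{v}$ is uniformly distributed over $\field{q}^{m}$.

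There is essentially no hard step once the character framework is in place: the entire content is the observation that the hypothesis forces all nontrivial Fourier coefficients to vanish. The only points requiring minor care are fixing the conventions for the transform and its inverse so that the normalizing factor $q^{-m}$ comes out correctly, and recording the two elementary orthogonality identities for $\psi$. An alternative purely combinatorial route would induct on $m$, conditioning on one coordinate at a time, but the bookkeeping of conditional distributions is messier than the one-line Fourier computation, so I would prefer the character-theoretic argument.
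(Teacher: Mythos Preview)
Your proof is correct and takes a genuinely different route from the paper. The paper argues combinatorially: it uses only the single consequence $P\{\seq{u}\rseq{v}^T=0\}=q^{-1}$ of the hypothesis, sums this over all nonzero $\seq{u}$, and evaluates the resulting double sum $\sum_{\seq{u}\neq\seq{0}}\sum_{\seq{v}:\seq{u}\seq{v}^T=0}P\{\rseq{v}=\seq{v}\}$ in a second way by switching the order of summation (the inner count being $q^m-1$ or $q^{m-1}-1$ according as $\seq{v}=\seq{0}$ or not). Equating the two expressions gives $P\{\rseq{v}=\seq{0}\}=q^{-m}$, and a translation $\rseq{v}\mapsto\rseq{v}-\seq{v}$ finishes. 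Your character-theoretic argument is slicker and more conceptual---it identifies the hypothesis precisely as the vanishing of all nontrivial Fourier coefficients---while the paper's double-counting proof is more elementary in that it avoids any appeal to characters or Fourier inversion. The paper's approach also foreshadows its later Lemma~\ref{le:UniformRandomMatrix}, where the same double-counting template is reused with ranks in place of scalar values.
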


\begin{proof}
Since $\seq{u} \rseq{v}^T$ is uniformly distributed over $\field{q}$ for all $\seq{u} \in \field{q}^m \setminus \{\seq{0}\}$, we have
\[
\sum_{\seq{v}: \seq{u} \seq{v}^T = 0} P\{\rseq{v} = \seq{v}\} = P\{\seq{u} \rseq{v}^T = 0\} = \frac{1}{q},
\]
so that
\[
\sum_{\seq{u}: \seq{u} \ne \seq{0}} \sum_{\seq{v}: \seq{u} \seq{v}^T = 0} P\{\rseq{v} = \seq{v}\} = \frac{q^m-1}{q}.
\]
On the other hand, we note that
\begin{align*}
\sum_{\seq{u}: \seq{u} \ne \seq{0}} \sum_{\seq{v}: \seq{u} \seq{v}^T = 0} P\{\rseq{v} = \seq{v}\}
&= \sum_{\seq{v} \in \field{q}^m} \sum_{\seq{u}: \seq{u} \ne \seq{0}, \seq{u} \seq{v}^T = 0} P\{\rseq{v} = \seq{v}\} \\
&= (q^m-1) P\{\rseq{v} = \seq{0}\} + (q^{m-1}-1) P\{\rseq{v} \ne \seq{0}\} \\
&= q^{m-1}(q-1) P\{\rseq{v} = \seq{0}\} + q^{m-1}-1.
\end{align*}
Combining these two identities gives $P\{\rseq{v} = \seq{0}\} =
q^{-m}$. Replacing $\rseq{v}$ with $\rseq{v} - \seq{v}$, where $\seq{v}
\in \field{q}^m$(which preserves the uniform distribution of
$\seq{u} \rseq{v}^T$), then gives that $P\{\rseq{v} = \seq{v}\} =
q^{-m}$ for all $\seq{v} \in \field{q}^m$.
\end{proof}

% \rrem{The condition of the lemma says $\sum_{\seq{v}: \seq{u}
%     \seq{v}^T = a} P\{\rseq{v} = \seq{v}\} = P\{\seq{u} \rseq{v}^T =
%   a\} = \frac{1}{q}$ for all
%   $\seq{u}\in\field{q}^m\setminus\{\seq{0}\}$ and all
%   $a\in\field{q}$. Denoting by $\mat{M}$ a
%   $q^m\times(q^m+q^{m-1}+\dots+q)$ point-hyperplane incidence matrix
%   of $\AG(m,\field{q})$ over the rationals, this can be restated as: The
%   $q^m$-dimensional all-one vector is the only non-negative solution
%   of the equation $\seq{x}\mat{M}=(q^{m-1},q^{m-1},\dots,q^{m-1})$. It
%   is in turn equivalent to $\rank(\mat{M})=q^m$, which is a known fact
%   (a property shared by the incidence matrices of all nontrivial
%   $2$-designs and used in the proof of Fisher's Inequality; cf.\
%   \cite[p.~20]{dembowski68}) or \cite{mhall67}.}

\begin{proof}[Proof of Th.~\ref{th:GoodRandomMatrixDuality}]
  Let $\rmat{A}$ be a random $m\times n$ matrix over $\field{q}$. If
  it is good, then for any $\seq{u} \in \field{q}^m \setminus
  \{\seq{0}\}$ and $\seq{v} \in \field{q}^n \setminus \{\seq{0}\}$,
  the product $\seq{u} \rmat{A} \seq{v}^T = \seq{u} (\seq{v}
  \rmat{A}^T)^T$ is uniformly distributed over
  $\field{q}$. Lemma~\ref{le:UniformRandomVector} shows that $\seq{v}
  \rmat{A}^T$ is uniformly distributed over $\field{q}^m$. In other
  words, $\rmat{A}^T$ is good. Conversely, if $\rmat{A}^T$ is good,
  then $\rmat{A} = (\rmat{A}^T)^T$ is good.
\end{proof}

From Th.~\ref{th:GoodRandomMatrixDuality} it is clear that the
minimum support size of good random $m\times n$ matrices is
$q^{\max\{m,n\}}$. Furthermore, we find a close relation between good
random matrices with minimum support size and MRD codes.

Let $\matset{A}$ be a subset of $\field{q}^{m\times n}$ of size
$|\matset{A}|\geq 2$. The
\emph{rank distance} $\rankd(\matset{A})$ of $\matset{A}$ is the
minimum rank of $\mat{X}-\mat{Y}$ over all distinct $\mat{X}, \mat{Y}
\in \matset{A}$. The \emph{Singleton bound} tells us that the size of
$\matset{A}$ satisfies the following inequality
\cite{Delsarte197800, Gabidulin198501, Roth199103}:
\begin{equation}
|\matset{A}| \le q^{\max\{m,n\}(\min\{m,n\} - \rankd(\matset{A}) + 1)}.
\end{equation}
A set of matrices achieving this upper bound is called a
\emph{maximum-rank-distance (MRD) code}.

\begin{definition}
  Suppose $m,n,k$ are positive integers with $k\leq\min\{m,n\}$.
  An $(m,n,k)$ MRD code over $\field{q}$ is a set $\matset{A}$ of
  $q^{k\max\{m,n\}}$ matrices in $\field{q}^{m\times n}$ such that
  $\rankd(\matset{A}) = \min\{m,n\}-k+1$.
\end{definition}

The next lemma gives an equivalent condition for MRD codes. It is
a consequence of \cite[Th.~5.4]{Delsarte197800}. For
the convenience of our readers we give a direct proof below.

\begin{lemma}\label{le:MRDCode}
  Suppose $m,n,k$ are positive integers with $k \le m \le n$ (resp. $m
  \ge n \ge k$). A set $\matset{A} \subseteq \field{q}^{m\times n}$ is
  an $(m,n,k)$ MRD code if and only if $|\matset{A}| =
  q^{k\max\{m,n\}}$ and $\mat{B}\matset{A} = \field{q}^{k\times n}$
  (resp. $\matset{A}\mat{B} = \field{q}^{m\times k}$) for every
  full-rank matrix $\mat{B} \in \field{q}^{k\times m}$ (resp. $\mat{B}
  \in \field{q}^{n\times k}$).
\end{lemma}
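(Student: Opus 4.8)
The plan is to settle the case $k \le m \le n$ in full and obtain the case $m \ge n \ge k$ by transposition. In the first case $\max\{m,n\}=n$ and $\min\{m,n\}=m$, so an $(m,n,k)$ MRD code is a set of $q^{kn}$ matrices with rank distance $m-k+1$. The observation that drives the whole argument is that, once $|\matset{A}| = q^{kn} = |\field{q}^{k\times n}|$ is assumed, the condition $\mat{B}\matset{A} = \field{q}^{k\times n}$ is equivalent to the map $\mat{A}\mapsto\mat{B}\mat{A}$ being injective on $\matset{A}$ (a surjection between finite sets of equal size is a bijection), i.e.\ to the nonexistence of distinct $\mat{X},\mat{Y}\in\matset{A}$ with $\mat{B}(\mat{X}-\mat{Y}) = \mat{0}$.

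The key linear-algebra fact I would isolate first is this: for a full-rank $\mat{B}\in\field{q}^{k\times m}$ (necessarily of rank $k$) and any $\mat{M}\in\field{q}^{m\times n}$, the equation $\mat{B}\mat{M} = \mat{0}$ holds precisely when the $k$-dimensional row space of $\mat{B}$ is contained in the left null space of $\mat{M}$, which has dimension $m-\rank(\mat{M})$. Hence some full-rank $\mat{B}$ annihilates $\mat{M}$ on the left if and only if $m-\rank(\mat{M})\ge k$, that is, if and only if $\rank(\mat{M}) \le m-k$. With this in hand both directions are short. For the forward direction, if $\matset{A}$ is MRD then every difference $\mat{X}-\mat{Y}$ of distinct elements has $\rank(\mat{X}-\mat{Y}) \ge m-k+1 > m-k$, so no full-rank $\mat{B}$ can satisfy $\mat{B}(\mat{X}-\mat{Y})=\mat{0}$; thus $\mat{A}\mapsto\mat{B}\mat{A}$ is injective for every full-rank $\mat{B}$, and by the matching cardinalities it is a bijection onto $\field{q}^{k\times n}$.

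For the backward direction, assume $|\matset{A}| = q^{kn}$ and $\mat{B}\matset{A} = \field{q}^{k\times n}$ (equivalently, injectivity of $\mat{A}\mapsto\mat{B}\mat{A}$) for every full-rank $\mat{B}$. Were there distinct $\mat{X},\mat{Y}$ with $\rank(\mat{X}-\mat{Y}) \le m-k$, the fact above would produce a full-rank $\mat{B}$ with $\mat{B}(\mat{X}-\mat{Y}) = \mat{0}$, contradicting injectivity; hence $\rankd(\matset{A}) \ge m-k+1$. The Singleton bound gives $\rankd(\matset{A}) \le m-k+1$, so equality holds and $\matset{A}$ is an $(m,n,k)$ MRD code. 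The case $m\ge n\ge k$ then follows by applying the settled case to $\matset{A}^T$: transposition preserves rank, hence rank distance and the MRD property, it converts $\matset{A}\mat{B} = \field{q}^{m\times k}$ into $\mat{B}^T\matset{A}^T = \field{q}^{k\times m}$, and $\mat{B}\mapsto\mat{B}^T$ is a bijection between full-rank matrices in $\field{q}^{n\times k}$ and full-rank matrices in $\field{q}^{k\times n}$.

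I expect no serious obstacle here; the content is essentially the dimension count linking $\mat{B}\mat{M}=\mat{0}$ to $\rank(\mat{M})\le m-k$, together with two places that require care: translating $\mat{B}\matset{A}=\field{q}^{k\times n}$ into injectivity via the cardinality identity $|\matset{A}| = q^{kn}$, and remembering to invoke the Singleton bound in the backward direction so as to pin down $\rankd(\matset{A})$ exactly rather than merely bounding it from below.
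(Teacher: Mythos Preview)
Your proposal is correct and follows essentially the same approach as the paper: both reduce $\mat{B}\matset{A}=\field{q}^{k\times n}$ to injectivity of $\mat{A}\mapsto\mat{B}\mat{A}$ via the cardinality match, and both use the linear-algebra fact that a full-rank $\mat{B}\in\field{q}^{k\times m}$ annihilates $\mat{M}$ iff $\rank(\mat{M})\le m-k$. The only cosmetic differences are that you make the appeal to the Singleton bound explicit in the backward direction (the paper leaves it implicit) and you handle the case $m\ge n\ge k$ by transposition rather than by saying ``similarly''.
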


\begin{proof}
(Necessity) It suffices to show that $\mat{B}(\mat{X} - \mat{Y}) =
  \mat{0}$ implies $\mat{X} = \mat{Y}$ for any $\mat{X}, \mat{Y} \in
  \matset{A}$. Note that $\rank(\mat{B}) = k$, so that
\[
\rank(\mat{X}-\mat{Y}) \le \rank(\{\seq{v}\in\field{q}^n:
\mat{B}\seq{v}^T = \seq{0}\}) = m-k,
\]
which implies that $\mat{X} = \mat{Y}$ because $\rankd(\matset{A}) = m-k+1$.

(Sufficiency) It suffices to show that $\rank(\mat{X}-\mat{Y}) \le
m-k$ implies $\mat{X} = \mat{Y}$ for any $\mat{X}, \mat{Y} \in
\matset{A}$. Since $\rank(\mat{X}-\mat{Y}) \le m-k$ implies that there
exists $\mat{B} \in \field{q}^{k\times m}$ such that $\rank(\mat{B}) =
k$ and $\mat{B}(\mat{X}-\mat{Y}) = \mat{0}$, we immediately conclude
$\mat{X} = \mat{Y}$ from the condition $|\matset{A}| =
|\mat{B}\matset{A}| = q^{kn}$.

The case $m \ge n \ge k$ is done similarly.
\end{proof}

The next two theorems establish the relation between good random
matrices and MRD codes.

\begin{theorem}\label{th:GoodRandomMatrixVsMRDCode}
A random matrix uniformly distributed over an $(m,n,k)$ MRD code is good.
\end{theorem}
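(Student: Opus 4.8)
The plan is to reduce everything to the rectangular case $m\le n$ and there to exploit the characterization of MRD codes furnished by Lemma~\ref{le:MRDCode}. Observe first that the transpose map $\mat{A}\mapsto\mat{A}^T$ preserves rank, so that if $\matset{A}$ is an $(m,n,k)$ MRD code then $\matset{A}^T=\{\mat{A}^T:\mat{A}\in\matset{A}\}$ is an $(n,m,k)$ MRD code (the size $q^{k\max\{m,n\}}$ and the rank distance $\min\{m,n\}-k+1$ are both symmetric in $m$ and $n$). Moreover, if $\rmat{A}$ is uniform over $\matset{A}$, then $\rmat{A}^T$ is uniform over $\matset{A}^T$. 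Hence, once the theorem is established for $m\le n$, the case $m>n$ follows by applying it to the $(n,m,k)$ MRD code $\matset{A}^T$ (for which $n\le m$) and then invoking Theorem~\ref{th:GoodRandomMatrixDuality} to pass back from $\rmat{A}^T$ to $\rmat{A}$.

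So assume $m\le n$, whence $\max\{m,n\}=n$ and $|\matset{A}|=q^{kn}$, and let $\rmat{A}$ be uniform over $\matset{A}$, so that $P\{\rmat{A}=\mat{A}\}=q^{-kn}$ for each $\mat{A}\in\matset{A}$. Fix $\seq{u}\in\field{q}^m\setminus\{\seq{0}\}$ and $\seq{v}\in\field{q}^n$; the goal is to show that exactly $q^{(k-1)n}$ matrices $\mat{A}\in\matset{A}$ satisfy $\seq{u}\mat{A}=\seq{v}$, since then $P\{\seq{u}\rmat{A}=\seq{v}\}=q^{(k-1)n}\cdot q^{-kn}=q^{-n}$, which is exactly \eqref{eq:StrongSCCGood}.

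The key step is to realize $\seq{u}\mat{A}$ as one row of a larger product to which Lemma~\ref{le:MRDCode} applies. Since $\seq{u}\ne\seq{0}$ and $k\le m$, I can extend $\seq{u}$ to a full-rank matrix $\mat{B}\in\field{q}^{k\times m}$ having $\seq{u}$ as its first row. By Lemma~\ref{le:MRDCode}, $\mat{B}\matset{A}=\field{q}^{k\times n}$; comparing cardinalities $|\matset{A}|=q^{kn}=|\field{q}^{k\times n}|$ shows that the map $\mat{A}\mapsto\mat{B}\mat{A}$ is in fact a bijection from $\matset{A}$ onto $\field{q}^{k\times n}$. Under this bijection, the condition $\seq{u}\mat{A}=\seq{v}$ says precisely that the first row of $\mat{B}\mat{A}$ equals $\seq{v}$, and the number of matrices in $\field{q}^{k\times n}$ with prescribed first row is $q^{(k-1)n}$. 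This yields the desired count and completes the plan.

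The only delicate points are bookkeeping rather than conceptual: one must check that $\seq{u}$ really does extend to a rank-$k$ matrix (immediate from $\seq{u}\ne\seq{0}$ and $k\le m$) and that the cardinality argument legitimately upgrades the surjectivity $\mat{B}\matset{A}=\field{q}^{k\times n}$ of Lemma~\ref{le:MRDCode} to a bijection. I expect the main obstacle to be purely organizational---ensuring the reduction to $m\le n$ via transposition is airtight, so that the single counting argument covers both rectangular shapes.
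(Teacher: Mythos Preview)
Your proof is correct and follows essentially the same approach as the paper: reduce to $m\le n$ via transposition and Theorem~\ref{th:GoodRandomMatrixDuality}, extend $\seq{u}$ to a full-rank $\mat{B}\in\field{q}^{k\times m}$, and use Lemma~\ref{le:MRDCode} together with the cardinality match $|\matset{A}|=q^{kn}$ to count preimages. Your write-up is slightly more explicit than the paper's (you spell out that the transpose of an MRD code is an MRD code and that surjectivity plus equal cardinality gives a bijection), but the argument is the same.
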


\begin{proof}
  By Th.~\ref{th:GoodRandomMatrixDuality} we may assume
    $k\leq m\leq n$.  Let $\matset{A}$ be an $(m,n,k)$ MRD code,
    $\seq{u}\in\field{q}^m\setminus\{\seq{0}\}$,
    $\seq{v}\in\field{q}^n$. We must show that
    $|\{\mat{A}\in\matset{A}:\seq{u}\mat{A}=\seq{v}\}|=q^{(k-1)n}$. We
  choose a full-rank matrix $\mat{B}\in\field{q}^{k\times m}$ with
  first row equal to $\seq{u}$. By Lemma~\ref{le:MRDCode} the number
  of $\mat{A}\in\matset{A}$ satisfying $\seq{u}\mat{A}=\seq{v}$ equals
  the number of matrices in $\field{q}^{k\times n}$ with first row
  equal to $\seq{v}$, i.e.\ $q^{(k-1)n}$ as asserted.
\end{proof}

\begin{theorem}\label{th:GoodRandomMatrixWithMinimumSupportSize}
  The minimum support size of a good random $m\times n$ matrix is
  $q^{\max\{m,n\}}$. A random $m\times n$ matrix with support size
  $q^{\max\{m,n\}}$ is good if and only if it is uniformly distributed
  over an $(m,n,1)$ MRD code.
\end{theorem}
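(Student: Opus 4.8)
The plan is to settle the minimum-support-size claim at once and then focus on the characterization, whose substance lies entirely in the ``only if'' direction.

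First I would dispose of the minimum support size. The bound \eqref{eq:MinimumSupportSizeApproximation} already gives $q^n \le \mu \le q^{\max\{m,n\}}$, where $\mu$ denotes the minimum support size over $\grmat(m,n,\field{q})$. Since Theorem~\ref{th:GoodRandomMatrixDuality} sets up a support-size-preserving correspondence between good $m\times n$ matrices and good $n\times m$ matrices, the lower bound $q^n$ may be applied in both orientations, forcing $\mu\ge q^{\max\{m,n\}}$ and hence equality.

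For the characterization, the ``if'' direction is immediate: an $(m,n,1)$ MRD code has exactly $q^{\max\{m,n\}}$ elements, and by Theorem~\ref{th:GoodRandomMatrixVsMRDCode} the uniform distribution over any MRD code is good. The work is in the converse. Here I would let $\rmat{A}$ be good with $S\eqdef\supp(\rmat{A})$ of size $q^{\max\{m,n\}}$, and reduce by transposition duality to the case $m\le n$, so that $|S|=q^n$. The key observation I would use is that for each fixed nonzero $\seq{u}\in\field{q}^m$ the map $\phi_{\seq{u}}\colon S\to\field{q}^n$, $\mat{A}\mapsto\seq{u}\mat{A}$, is a \emph{bijection}: by \eqref{eq:StrongSCCGood} we have $P\{\seq{u}\rmat{A}=\seq{v}\}=q^{-n}>0$ for every $\seq{v}\in\field{q}^n$, so $\phi_{\seq{u}}$ is onto, and $|S|=q^n=|\field{q}^n|$ upgrades this to a bijection. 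From this single fact both remaining claims drop out at once. First, every fiber $\phi_{\seq{u}}^{-1}(\seq{v})$ is a singleton carrying probability $q^{-n}$, so $\rmat{A}$ is uniform on $S$. Second, $\seq{u}S=\field{q}^n$ holds for every nonzero $\seq{u}$, which---since full-rank $1\times m$ matrices are exactly the nonzero row vectors---is precisely the hypothesis of Lemma~\ref{le:MRDCode} with $k=1$; combined with $|S|=q^n=q^{1\cdot\max\{m,n\}}$ this identifies $S$ as an $(m,n,1)$ MRD code.

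The main obstacle is the realization that the extremal support size is exactly what collapses every fiber of the multiplication maps to a single point. Once bijectivity of each $\phi_{\seq{u}}$ is established, the uniformity of $\rmat{A}$ and the MRD property follow in parallel with no further computation, the latter routed cleanly through Lemma~\ref{le:MRDCode} rather than through the rank-distance definition directly.
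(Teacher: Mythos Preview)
Your proposal is correct and follows essentially the same route as the paper: reduce to $m\le n$ via Theorem~\ref{th:GoodRandomMatrixDuality}, observe that $\seq{u}(\supp\rmat{A})=\field{q}^n$ for every nonzero $\seq{u}$, and invoke Lemma~\ref{le:MRDCode} with $k=1$. The only difference is cosmetic: you spell out the bijectivity of $\phi_{\seq{u}}$ and derive uniformity from singleton fibers, whereas the paper compresses both conclusions (MRD property and uniformity) into a single sentence.
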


\begin{proof}
  The first statement is an easy consequence of
  Theorems~\ref{th:GoodRandomMatrixDuality} and
  \ref{th:GoodRandomMatrixVsMRDCode}. Let us prove the second
  statement.

At first, it follows from Th.~\ref{th:GoodRandomMatrixVsMRDCode}
that a random matrix uniformly distributed over an $(m,n,1)$ MRD code
is a good random matrix with support size
$q^{\max\{m,n\}}$. Conversely, for a good random $m\times n$ matrix
$\rmat{A}$ with $|\supp(\rmat{A})| = q^{\max\{m,n\}}$ and $m \le n$,
we have $|\supp(\rmat{A})| = q^{n}$ and $\seq{u}(\supp(\rmat{A})) =
\field{q}^n$ for every $\seq{u} \in \field{q}^m\setminus\{\seq{0}\}$,
which implies that $\supp(\rmat{A})$ is an $(m,n,1)$ MRD code
(Lemma~\ref{le:MRDCode}) and the probability distribution is
uniform. As for the case of $m>n$, transpose the random matrix and
apply Th.~\ref{th:GoodRandomMatrixDuality}.
\end{proof}

\begin{remark}
  For a good random $m\times n$ matrix $\rmat{A}$ with minimum support
  size, if $\supp(\rmat{A})$ contains the zero matrix, then it follows
  from Th.~\ref{th:GoodRandomMatrixWithMinimumSupportSize} that
\begin{equation}
P\{\rank(\rmat{A}) = \min\{m,n\}\} = 1-q^{-\max\{m,n\}}.
\end{equation}
This is an important property. For comparison, recall that for the
random matrix $\rmat{G}$ defined at the beginning of this paper, we
only have
\[
P\{\rank(\rmat{G}) = \min\{m,n\}\} = \prod_{i=0}^{\min\{m,n\}-1} (1-q^{i-\max\{m,n\}}).
\]
\end{remark}

It is known that $(m,n,k)$ MRD codes exist for all positive
  integers $m,n,k$ with $k\leq\min\{m,n\}$. The standard construction
  is based on the representation of $\field{q}$-linear endomorphisms
  of $\field{q^m}$ by $q$-polynomials of $q$-degree less than $m$
  (polynomials of the form $\sum_{i=0}^{m-1}a_iX^{q^i}$ with
  $a_i\in\field{q^m}$) and is analogous to that of the classical
  Reed-Solomon codes: Take the set $\matset{L}_k$ of all
  $\field{q}$-linear transformations
  $L\colon\field{q^m}\to\field{q^m}$,
  $x\mapsto\sum_{i=0}^{k-1}a_ix^{q^i}$ with $a_i\in\field{q^m}$ (i.e.\
  those represented by $q$-polynomials of $q$-degree less than $k$)
  and form the corresponding set $\matset{A}_k$ of $q^{km}$ matrices
  over $\field{q}$ representing the linear transformations in
  $\matset{L}_k$ with respect to some fixed basis of $\field{q^m}$
  over $\field{q}$. The set $\matset{A}_k$ forms an $(m,m,k)$ MRD
  code. "Rectangular" $(m,n,k)$ MRD codes with $m\geq n$ can then be
  obtained by deleting the last $m-n$ columns, say, of each matrix in
  $\matset{A}_k$ (This follows from Lemma~\ref{le:MRDCode}.)

The construction of MRD codes just described has been found independently in
\cite{Delsarte197800, Gabidulin198501, Roth199103}. In
\cite{Delsarte197800} they were called \emph{Singleton
  systems}. Following recent practice
we refer to them as \emph{Gabidulin codes}.

For the full-rank case
$\rankd(\matset{A})=\min\{m,n\}$ we now restate the construction
in the language of matrix fields, so as to
provide some new insights. Without loss of generality, we suppose
again $m \ge n$.

At first, choose a subring $\matfield{m}$ (containing $1$) of the ring
of all invertible $m\times m$ matrices that is a finite field of order
$q^m$. For example, suppose that $\alpha$ is a primitive element of
$\field{q^m}$ and all elements in $\field{q^m}$ are identified with a
row vector with respect to the basis $\{1, \alpha, \alpha^2, \ldots,
\alpha^{m-1}\}$ of $\field{q^m}$ over $\field{q}$. Writing
$\mat{K}=[\alpha^T\;(\alpha^2)^T\;\hdots\;(\alpha^{m})^T]$ (``%
  companion matrix'' of the minimum polynomial of $\alpha$ over
  $\field{q}$), the set $\matfield{m} = \{\mat{0}, \mat{I}, \mat{K},
\mat{K}^2, \ldots, \mat{K}^{q^m-2}\}$ forms such a
field. The subfield $\matfield{m}$ is clearly an $(m,m,1)$
  MRD code. It is unique up to similarity transformations, i.e\ every
  other subfield of order $q^m$ has the form $\mat{P}^{-1}
  \matfield{m} \mat{P}$ for an appropriate invertible $m\times m$
  matrix $\mat{P}$.

Next, choose an arbitrary $m\times n$ full-rank matrix $\mat{A}$. Then
the linear subspace $\matfield{m} \mat{A}$ of $\field{q}^{m\times
  n}$ is an $(m,n,1)$ MRD code.  We note that to each $\matfield{m}
\mat{A}$ there corresponds an orbit of $\nzmatfield{m}$ on the set
$\Omega(m,n,\field{q})$ of all $m\times n$ full-rank matrices, where
$\nzmatfield{m}$ denotes the multiplicative subgroup of nonzero
elements of $\matfield{m}$ and the action is given by
$(\mat{F},\mat{X}) \mapsto \mat{F}\mat{X}$ for $\mat{F} \in
\nzmatfield{m}$ and $\mat{X} \in \Omega(m,n,\field{q})$. Therefore, for given
$\matfield{m}$, the number of such generated MRD codes is equal to the
number of orbits, i.e.,
\begin{equation}
  \frac{|\Omega(m,n,\field{q})|}{|\nzmatfield{m}|} = \frac{\prod_{i=0}^{n-1}(q^m-q^i)}{q^m-1} = \prod_{i=1}^{n-1}(q^m-q^i).
\end{equation}

\begin{remark}
  \label{re:orthomorphism}
  MRD codes over $\field{q}$ with parameters $(m,2,1)$ (or, mutatis
  mutandis, $(2,n,1)$) are closely related to the concept of an
  orthomorphism and that of a complete mapping of the abelian group
  $(\field{q}^m,+)$. A map $f\colon\field{q}^m\to\field{q}^m$ is said
  to be an \emph{orthomorphism} of $(\field{q}^m,+)$ if both $f$ and $x\mapsto
  f(x)-x$ are bijections of $\field{q}^m$, and a \emph{complete mapping} if $f$
  and $x\mapsto f(x)+x$ are bijections; see
  \cite{evans92,niederreiter-robinson82} for further information.

  An $(m,2,1)$ MRD code $\matset{A}$ over $\field{q}$ with $m\geq 2$
  can be represented in the form
  $\matset{A}=\bigr\{(\seq{x}|f(\seq{x}));\seq{x}\in\field{q}^m\bigr\}$
  with a unique function $f\colon\field{q}^m\to\field{q}^m$. The MRD
  code property requires further that $f$ is a bijection and
  $f(\seq{x})-f(\seq{y})\neq\alpha(\seq{x}-\seq{y})$ for all
  $\seq{x},\seq{y}\in\field{q}^m$ with $\seq{x}\neq\seq{y}$ and all
  $\alpha\in\field{q}$; in other words, the maps $x\mapsto
  f(x)+\alpha x$ must be bijections of $\field{q}^m$ for all
  $\alpha\in\field{q}$. In the special case $q=2$ MRD codes with
  parameters $(m,2,1)$, orthomorphisms of $\field{2}^m$ and complete
  mappings of $\field{2}^m$ are thus equivalent.  This observation
  implies in particular that there exist binary $(m,2,1)$ MRD codes
  which are neither linear nor equal to a coset of a linear code (on
  account of the known existence of nonlinear complete mappings of
  $\field{2}^m$ for $m=4$; see \cite{yuan-tong-zhang07}).
\end{remark}

Th.~\ref{th:GoodRandomMatrixWithMinimumSupportSize} together with
the above comments gives the construction of good random matrices with
minimum support size. For the construction of a general good random
matrix, we need the following simple property.

\begin{theorem}\label{th:GeneralConstruction}
  Let $\rmat{A}$ be a good random $m\times n$ matrix and $\rmat{B}$ an
  arbitrary random $m\times n$ matrix. Let $\rmat{P}$ be a random
  invertible $m\times m$ matrix and $\rmat{Q}$ a random invertible
  $n\times n$ matrix. If $\rmat{A}$ is independent of $(\rmat{P},
  \rmat{Q}, \rmat{B})$, then $\rmat{P} \rmat{A} \rmat{Q} + \rmat{B}$
  is good.
\end{theorem}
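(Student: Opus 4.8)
The plan is to verify the defining property \eqref{eq:StrongSCCGood} directly, exploiting the fact that goodness of a random matrix $\rmat{G}$ is equivalent to the statement that $\seq{u}\rmat{G}$ is uniformly distributed over $\field{q}^n$ for every $\seq{u}\in\field{q}^m\setminus\{\seq{0}\}$. Accordingly, I would fix a nonzero row vector $\seq{u}$ and show that $\seq{u}(\rmat{P}\rmat{A}\rmat{Q}+\rmat{B})$ is uniformly distributed over $\field{q}^n$.

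First I would condition on the event $\{(\rmat{P},\rmat{Q},\rmat{B})=(\mat{P},\mat{Q},\mat{B})\}$ for fixed invertible $\mat{P}$, $\mat{Q}$ and arbitrary $\mat{B}$. On this event the expression becomes $(\seq{u}\mat{P})\rmat{A}\mat{Q}+\seq{u}\mat{B}$. Since $\mat{P}$ is invertible and $\seq{u}\neq\seq{0}$, the vector $\seq{u}'\eqdef\seq{u}\mat{P}$ is again nonzero. The independence hypothesis guarantees that, conditionally on this event, $\rmat{A}$ still has its original (good) distribution, so $\seq{u}'\rmat{A}$ is uniformly distributed over $\field{q}^n$.

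The remaining observations are that right multiplication by the invertible matrix $\mat{Q}$ and translation by the fixed vector $\seq{u}\mat{B}$ are both bijections of $\field{q}^n$, hence preserve the uniform distribution. Therefore, conditionally on every admissible value of $(\rmat{P},\rmat{Q},\rmat{B})$, the vector $\seq{u}(\rmat{P}\rmat{A}\rmat{Q}+\rmat{B})$ is uniform over $\field{q}^n$. Averaging these conditional distributions over the distribution of $(\rmat{P},\rmat{Q},\rmat{B})$ via the law of total probability yields the same uniform distribution unconditionally, which is exactly the goodness of $\rmat{P}\rmat{A}\rmat{Q}+\rmat{B}$.

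I do not anticipate a serious obstacle here; the only point requiring care is the bookkeeping of independence and conditioning---specifically, that $\rmat{A}$ being independent of the triple $(\rmat{P},\rmat{Q},\rmat{B})$ is what allows us to substitute the marginal (good) law of $\rmat{A}$ inside each conditional computation, even though the relevant test vector $\seq{u}'=\seq{u}\mat{P}$ itself depends on the conditioning value $\mat{P}$.
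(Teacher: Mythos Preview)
Your argument is correct; the paper itself omits the proof entirely (``The proof is left to the reader''), so there is nothing to compare against beyond noting that your direct conditioning-and-bijection verification is precisely the routine check the authors had in mind. Your careful remark about independence---that the conditioning value $\mat{P}$ determines the test vector $\seq{u}'=\seq{u}\mat{P}$ but $\rmat{A}$ remains good conditionally---is the only subtlety, and you have handled it.
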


The proof is left to the reader.

In Th.~\ref{th:GeneralConstruction}, if $\supp(\rmat{A})$ is a
subspace of $\field{q}^{m\times n}$ and $\rmat{P}$ and $\rmat{Q}$ are
simply identity matrices, then
\[
|\supp(\rmat{A} + \rmat{B})| = |\supp(\rmat{A})|
|\pi_{\supp(\rmat{A})}(\supp(\rmat{B}))|,
\]
where $\pi_{\supp(\rmat{A})}$ denotes the canonical projection from
$\field{q}^{m\times n}$ onto $\field{q}^{m\times
  n}/\supp(\rmat{A})$. Therefore, based on good random matrices with
minimum support size, it is possible to construct a good random matrix
whose support size is a multiple of $q^{\max\{m,n\}}$.

\section{$k$-Good Random Matrices}
\label{sec:KGoodRandomMatrix}

In Section~\ref{sec:GoodRandomMatrix} we showed a one-to-one
correspondence between good random $m\times n$ matrices with minimum
support size and $(m,n,1)$ MRD codes. From the viewpoint of
aesthetics, the current ``picture'' of good random matrices, MRD codes
and their relations is not satisfactory, since there is yet no
appropriate place for a general $(m,n,k)$ MRD code with $k > 1$. In
this section we study properties of random matrices uniformly
distributed over general $(m,n,k)$ MRD codes and round off the picture. 
% Not surprisingly, these random
% matrices are more interesting than ordinary good random matrices.
% Let us now introduce the definition of these special
% good random matrices.

\begin{definition}\label{df:KGoodRandomMatrix}
  Let $k,m,n$ be positive integers with $k \le \min\{m,n\}$. A random
  $m\times n$ matrix $\rmat{A}$ is said to be \emph{$k$-good} if for every
  full-rank matrix $\mat{M} \in \field{q}^{k\times m}$, the product
  $\mat{M} \rmat{A}$ is uniformly distributed over $\field{q}^{k\times
    n}$. The collection of all $k$-good random $m\times n$ matrices is
  denoted by $\grmat_k(m,n,\field{q})$.
\end{definition}
Thus a random $m\times n$ matrix $\rmat{A}$ is
  $k$-good if for any choice of $k$ linearly independent vectors $\seq{u}_1,
  \ldots, \seq{u}_k \in \field{q}^m$ and any choice of $k$ arbitrary
  vectors $\seq{v}_1, \ldots, \seq{v}_k \in \field{q}^n$ we have
\[
P\{\seq{u}_1 \rmat{A} = \seq{v}_1, \seq{u}_2 \rmat{A} = \seq{v}_2, \ldots, \seq{u}_k \rmat{A} =
\seq{v}_k\} = q^{-kn};
%\qquad \forall \seq{v}_1, \ldots, \seq{v}_k \in \field{q}^n.
\]
equivalently, the random vectors
$\seq{u}_1\rmat{A},\seq{u}_2\rmat{A},\dots,\seq{u}_k\rmat{A}\in\field{q}^n$ are
uniformly distributed and independent.

In the language of linear transformations,
Def.~\ref{df:KGoodRandomMatrix} requires that for every
$k$-dimensional subspace $U\subseteq\field{q}^m$ the restriction map
$\Hom(\field{q}^m,\field{q}^n)\to\Hom(U,\field{q}^n)$, $f\mapsto f|_U$
induces the uniform distribution on $\Hom(U,\field{q}^n)$. Here
$\Hom(V,W)$ denotes the space of all linear transformations from $V$
to $W$.

In the sequel, when speaking of a $k$-good random $m\times n$ matrix, we shall tacitly assume that $k \le \min\{m,n\}$. By definition, it is clear that
\[
\grmat_{\min\{m,n\}}(m,n,\field{q}) \subseteq \cdots \subseteq \grmat_{2}(m,n,\field{q}) \subseteq \grmat_{1}(m,n,\field{q}) = \grmat(m,n,\field{q}).
\]

Our goal is now to generalize the one-to-one correspondence of
Th.~\ref{th:GoodRandomMatrixWithMinimumSupportSize} to $k$-good
random $m\times n$ matrices and $(m,n,k)$ MRD codes. To this end, we
shall first establish a generalization of
Th.~\ref{th:GoodRandomMatrixDuality}.

\begin{theorem}\label{th:KGoodRandomMatrixDuality}
A random $m\times n$ matrix is $k$-good if and only if its transpose
is $k$-good. 
\end{theorem}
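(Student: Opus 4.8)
The plan is to mimic the structure of the proof of Theorem~\ref{th:GoodRandomMatrixDuality}, since the $k=1$ case of the present statement is exactly that theorem. The key conceptual step is to recognize that $k$-goodness is the natural ``vector-valued'' analogue of goodness, and that the duality between $\rmat{A}$ and $\rmat{A}^T$ should again follow by testing against arbitrary linear functionals. Concretely, $\rmat{A}$ is $k$-good iff for every full-rank $\mat{M}\in\field{q}^{k\times m}$ the product $\mat{M}\rmat{A}$ is uniform over $\field{q}^{k\times n}$; I want to transfer uniformity of $\mat{M}\rmat{A}$ to uniformity of $\mat{N}\rmat{A}^T$ for every full-rank $\mat{N}\in\field{q}^{k\times n}$.

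First I would reduce the claim about uniformity of a $\field{q}^{k\times n}$-valued random matrix to a statement about scalars, by establishing a matrix version of Lemma~\ref{le:UniformRandomVector}: a random $k\times n$ matrix $\rmat{Z}$ is uniformly distributed over $\field{q}^{k\times n}$ if and only if $\seq{s}\rmat{Z}\seq{t}^T$ is uniform over $\field{q}$ for every $\seq{s}\in\field{q}^k\setminus\{\seq{0}\}$ and every $\seq{t}\in\field{q}^n\setminus\{\seq{0}\}$ (equivalently, for every nonzero linear functional on $\field{q}^{k\times n}$). The ``only if'' direction is immediate; for the ``if'' direction one can either invoke Lemma~\ref{le:UniformRandomVector} applied row-by-row after fixing $\seq{s}=\seq{e}_i$ and varying over combinations, or more cleanly prove it directly by the same orbit-counting / character-type argument used in that lemma, now applied to the $kn$-dimensional space $\field{q}^{k\times n}$. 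I expect this to be the main obstacle, in the sense that it is the only genuinely new computation; once it is in hand the rest is formal.

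With that lemma available, the forward direction proceeds as follows. Assume $\rmat{A}$ is $k$-good and fix a full-rank $\mat{N}\in\field{q}^{k\times n}$; I must show $\mat{N}\rmat{A}^T$ is uniform over $\field{q}^{k\times m}$, i.e.\ that $\rmat{A}^T$ is $k$-good. By the matrix lemma it suffices to check that $\seq{s}(\mat{N}\rmat{A}^T)\seq{t}^T$ is uniform over $\field{q}$ for all nonzero $\seq{s}\in\field{q}^k$, $\seq{t}\in\field{q}^m$. Rewriting $\seq{s}\mat{N}\rmat{A}^T\seq{t}^T=\seq{t}\rmat{A}(\seq{s}\mat{N})^T=\seq{t}\rmat{A}\mat{N}^T\seq{s}^T$, and noting that $\seq{t}\neq\seq{0}$ together with $\mat{N}$ full-rank forces $\seq{s}\mat{N}\neq\seq{0}$, I can realize this scalar as a single entry-functional applied to $\mat{M}\rmat{A}$ for a suitably chosen full-rank $\mat{M}\in\field{q}^{k\times m}$ having $\seq{t}$ as a row; $k$-goodness of $\rmat{A}$ makes $\mat{M}\rmat{A}$ uniform, hence every nonzero scalar combination of its entries is uniform over $\field{q}$. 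This yields uniformity of $\mat{N}\rmat{A}^T$ and therefore the $k$-goodness of $\rmat{A}^T$.

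Finally, the converse is automatic: applying the forward implication to $\rmat{A}^T$ shows that if $\rmat{A}^T$ is $k$-good then $(\rmat{A}^T)^T=\rmat{A}$ is $k$-good, exactly as in the proof of Theorem~\ref{th:GoodRandomMatrixDuality}. The only place where care is needed is bookkeeping with the full-rank hypotheses on $\mat{M}$ and $\mat{N}$ and the choice of $\mat{M}$ with prescribed row $\seq{t}$; I would phrase the argument symmetrically in rows and columns so that the roles of $m$ and $n$ are manifestly interchangeable, making the duality transparent.
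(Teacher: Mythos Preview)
Your proposed ``matrix version of Lemma~\ref{le:UniformRandomVector}'' is false, and this is a genuine gap. The condition that $\seq{s}\rmat{Z}\seq{t}^T$ be uniform over $\field{q}$ for all nonzero $\seq{s}\in\field{q}^k$, $\seq{t}\in\field{q}^n$ is \emph{not} equivalent to testing against all nonzero linear functionals on $\field{q}^{k\times n}$: the functionals $\mat{Z}\mapsto\seq{s}\mat{Z}\seq{t}^T$ are exactly those of the form $\mat{Z}\mapsto\text{tr}(\mat{C}^T\mat{Z})$ with $\mat{C}=\seq{s}^T\seq{t}$ of rank~$1$, and for $k\geq 2$ these do not exhaust the nonzero linear functionals. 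In fact, by Lemma~\ref{le:UniformRandomVector} your condition on $\rmat{Z}$ says precisely that $\seq{s}\rmat{Z}$ is uniform over $\field{q}^n$ for every nonzero $\seq{s}$, i.e.\ that $\rmat{Z}$ is $1$-good. But $1$-good random $k\times n$ matrices need not be uniformly distributed: for a concrete counterexample take $\rmat{Z}$ uniform over a $(k,n,1)$ MRD code (e.g.\ $k=n=2$, $q=2$, and the code $\matset{A}$ from the paper); this satisfies your hypothesis yet has support size $q^{\max\{k,n\}}<q^{kn}$.

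Consequently your forward argument collapses: you correctly deduce that $\seq{s}(\mat{N}\rmat{A}^T)\seq{t}^T$ is uniform for all nonzero $\seq{s},\seq{t}$, but this only shows that $\mat{N}\rmat{A}^T$ is $1$-good, not uniform over $\field{q}^{k\times m}$, so you cannot conclude that $\rmat{A}^T$ is $k$-good. The paper circumvents this by proving a genuinely stronger test (Lemma~\ref{le:UniformRandomMatrix}): a random $k\times n$ matrix $\rmat{N}$ (with $k\leq n$) is uniform iff $\mat{M}\rmat{N}^T$ is uniform over $\field{q}^{k\times k}$ for every \emph{full-rank} $\mat{M}\in\field{q}^{k\times n}$. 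This criterion is strong enough to close the loop (one applies it twice, once in each direction), but its sufficiency direction requires real work---the paper uses counting arguments over rank strata (Lemmas~\ref{le:GeneralizedAnzahlTheorem} and~\ref{le:SpecialAnzahlLemma}) rather than a one-line character computation. If you want to salvage a scalar-functional approach, you would need to handle all linear functionals $\mat{Z}\mapsto\text{tr}(\mat{C}^T\mat{Z})$ with arbitrary nonzero $\mat{C}$, and then trace through how $\text{tr}(\mat{C}^T\mat{N}\rmat{A}^T)=\text{tr}(\mat{C}\rmat{A}\mat{N}^T)$ behaves when $\mat{C}$ is not of full rank; this is doable but is no longer the simple row-extension trick you sketch.
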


Analogous to the proof of Th.~\ref{th:GoodRandomMatrixDuality}, we
need a generalization of Lemma~\ref{le:UniformRandomVector}. But
before that, we need some preparatory counting lemmas. 

\begin{lemma}[{\cite[Th.~3.3]{Hirschfeld199800}}]
\label{le:GeneralizedAnzahlTheorem}
Let $k,l,m,n$ be positive integers with $k \le \min\{l,m\}$ and $l+m-k
\le n$. Let $M$ be an $m$-dimensional subspace of
$\field{q}^n$. Define the set 
\[
\mathcal{L} \eqdef \{L: \mbox{$L$ is an $l$-dimensional subspace of
  $\field{q}^n$ such that $\dim(L \cap M) = k$}\}.
\]
Then
\[
|\mathcal{L}| = q^{(l-k)(m-k)} \gbinom{m}{k} \gbinom{n-m}{l-k}.
\]
\end{lemma}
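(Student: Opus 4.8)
The plan is to count the subspaces in $\mathcal{L}$ by stratifying them according to their intersection with $M$. Every $L\in\mathcal{L}$ determines the $k$-dimensional subspace $K\eqdef L\cap M$ of $M$, so I would first fix such a $K$ and count the $l$-dimensional subspaces $L$ with $L\cap M=K$; since $\dim M=m$, the number of admissible $K$ is $\gbinom{m}{k}$ (this uses $k\le m$). The point of the argument is that the count of $L$ for a fixed $K$ turns out to depend only on the ambient dimensions, not on the choice of $K$, so the final answer will be this count times $\gbinom{m}{k}$.

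Next I would pass to the quotient $V\eqdef\field{q}^n/K$, writing $\overline{X}$ for the image of any subspace $X\supseteq K$. For subspaces containing $K$ one has $\overline{L}\cap\overline{M}=\overline{L\cap M}$, so the condition $L\cap M=K$ is equivalent to $\overline{L}\cap\overline{M}=\{\seq{0}\}$. Taking preimages then gives a bijection between the $l$-subspaces $L$ of $\field{q}^n$ with $L\cap M=K$ and the $(l-k)$-subspaces of $V$ meeting the fixed subspace $\overline{M}$ trivially; note that $\dim V=n-k$ and $\dim\overline{M}=m-k$.

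This reduces the problem to a single core computation: in a space of dimension $N$, count the $s$-subspaces meeting a fixed $d$-subspace $W$ trivially. I would count ordered $s$-tuples $(u_1,\dots,u_s)$ that are linearly independent modulo $W$, where the $i$-th vector must avoid the $(d+i-1)$-dimensional space $W+\langle u_1,\dots,u_{i-1}\rangle$; this gives $\prod_{i=0}^{s-1}(q^N-q^{d+i})$ such tuples. Dividing by the number $\prod_{i=0}^{s-1}(q^s-q^i)$ of ordered bases of a fixed $s$-subspace and simplifying the resulting ratio yields $q^{ds}\gbinom{N-d}{s}$. Substituting $N=n-k$, $d=m-k$, $s=l-k$ gives $q^{(l-k)(m-k)}\gbinom{n-m}{l-k}$ for the number of $L$ with $K$ fixed, and multiplying by $\gbinom{m}{k}$ produces the stated formula.

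I do not expect a serious obstacle, as this is a routine Anzahl computation; the main thing to keep track of is the role of the hypotheses. The condition $k\le\min\{l,m\}$ guarantees that both inclusions $K\subseteq M$ and $K\subseteq L$ are dimensionally possible, while the inequality $l+m-k\le n$ is precisely $(l-k)+(m-k)\le n-k$, which ensures that $\gbinom{n-m}{l-k}$ is nonzero, i.e.\ that an $(l-k)$-subspace can actually meet the $(m-k)$-subspace $\overline{M}$ trivially inside $V$. The only mildly delicate step is justifying $\overline{L}\cap\overline{M}=\overline{L\cap M}$ and that the preimage correspondence is a genuine bijection; both follow from the standard lattice isomorphism between subspaces of $\field{q}^n$ containing $K$ and subspaces of $\field{q}^n/K$.
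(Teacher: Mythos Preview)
Your argument is correct and is the standard Anzahl computation. Note, however, that the paper does not supply its own proof of this lemma: it is quoted directly from Hirschfeld's book (as indicated by the citation in the lemma header), so there is no paper proof to compare against. Your approach---fix the intersection $K=L\cap M$, pass to the quotient by $K$, and count complements of $\overline{M}$ by building an ordered basis---is exactly the classical one and would serve as a self-contained replacement for the citation.
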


\begin{lemma}\label{le:SpecialAnzahlLemma}
  Let $k, l, m, n$ be positive integers with $k \le l \le m \le n$ and
  $k+n \ge l+m$. Let $\mat{N}$ be an $m\times n$ matrix of rank
  $l$. Define the set
\[
\matset{M} \eqdef \{\mat{M} \in \field{q}^{m\times n}: \rank(\mat{M})
= m, \rank(\mat{M} \mat{N}^T) = k\}.
\]
Then
\[
|\mathcal{M}| = q^{k(n-l-m+k)} \gbinom{l}{k} \gbinom{n-l}{m-k}
\left[\prod_{i=0}^{m-1} (q^m-q^i)\right].
\]
\end{lemma}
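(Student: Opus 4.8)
The plan is to reduce the count to a subspace-intersection count to which Lemma~\ref{le:GeneralizedAnzahlTheorem} applies directly. First I would parametrize the full-rank matrices $\mat{M}$ by their row spaces: a matrix $\mat{M}\in\field{q}^{m\times n}$ with $\rank(\mat{M})=m$ is the same datum as an $m$-dimensional subspace $R\subseteq\field{q}^n$ (its row space) together with an ordered basis of $R$ (its sequence of rows). There are exactly $\prod_{i=0}^{m-1}(q^m-q^i)$ ordered bases of any fixed $m$-dimensional space, so this factor---which already appears in the claimed formula---pulls out, provided the condition $\rank(\mat{M}\mat{N}^T)=k$ depends only on $R$ and not on the chosen basis.

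The key step is the rank formula. Writing $S$ for the row space of $\mat{N}$ (so $\dim S=l$), I would view $\mat{M}\mat{N}^T$ as the composite linear map $\field{q}^m\to\field{q}^m$ sending a column vector $\seq{y}$ to $\mat{M}(\mat{N}^T\seq{y})$. The image of $\seq{y}\mapsto\mat{N}^T\seq{y}$ is exactly $S$, and left multiplication by $\mat{M}$ (as a map $\field{q}^n\to\field{q}^m$) has kernel $R^\perp$, the orthogonal complement of $R$ with respect to the standard dot product, of dimension $n-m$. Rank--nullity then gives
\[
\rank(\mat{M}\mat{N}^T)=\dim\mat{M}(S)=l-\dim(S\cap R^\perp),
\]
which depends only on $R$. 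Hence $\rank(\mat{M}\mat{N}^T)=k$ is equivalent to $\dim(S\cap R^\perp)=l-k$.

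Finally I would pass to orthogonal complements. The map $R\mapsto T\eqdef R^\perp$ is an involutive bijection between $m$-dimensional and $(n-m)$-dimensional subspaces of $\field{q}^n$, so counting admissible $R$ is the same as counting $(n-m)$-dimensional subspaces $T$ with $\dim(S\cap T)=l-k$. This is precisely the setting of Lemma~\ref{le:GeneralizedAnzahlTheorem}, applied in $\field{q}^n$ with $S$ (dimension $l$) as the distinguished subspace, counting subspaces of dimension $n-m$ that meet $S$ in dimension $l-k$. The lemma's hypotheses translate exactly into the standing assumptions: its requirement $l-k\le n-m$ is equivalent to $k+n\ge l+m$, while the remaining inequalities follow from $k\le l\le m\le n$. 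Substituting, the exponent simplifies to $k(n-l-m+k)$, and the symmetry $\gbinom{a}{b}=\gbinom{a}{a-b}$ turns $\gbinom{l}{l-k}$ into $\gbinom{l}{k}$ and $\gbinom{n-l}{n-m-l+k}$ into $\gbinom{n-l}{m-k}$; multiplying back the ordered-basis factor yields the claimed value.

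The main obstacle I anticipate is purely the rank computation in the second paragraph: one must correctly identify the image and kernel of the composite and be careful that the relevant kernel is $R^\perp$ rather than the (different) kernel of $\mat{N}^T$ itself. A minor point to check separately is the degenerate case $l=k$, where the intersection dimension $l-k$ vanishes and Lemma~\ref{le:GeneralizedAnzahlTheorem} is being invoked at the boundary of its stated range; there the number of $(n-m)$-dimensional subspaces meeting $S$ trivially is standard and agrees with the formula.
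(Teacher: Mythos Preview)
Your proposal is correct and follows essentially the same approach as the paper: parametrize full-rank $\mat{M}$ by the row space $R$ together with an ordered basis, reduce the rank condition to a subspace-intersection condition, and invoke Lemma~\ref{le:GeneralizedAnzahlTheorem}. The only cosmetic difference is a dual organization---the paper works with the row map $g\colon\seq{v}\mapsto\seq{v}\mat{N}^T$, obtains the equivalent condition $\dim(R\cap\ker g)=m-k$ with $\dim(\ker g)=n-l$, and applies the Anzahl lemma directly to $R$ rather than passing to $R^\perp$.
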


\begin{proof}
  Define mappings $f: \field{q}^m \to \field{q}^n$,
  $\seq{u} \mapsto \seq{u}\mat{M}$ and $g: \field{q}^n \to
  \field{q}^m$, $\seq{v} \mapsto \seq{v}\mat{N}^T$. By
  assumption we have $\dim(f(\field{q}^m)) = m$ and
  $\dim(g(\field{q}^n)) = l$. Our goal is to count the number of
  mappings $f$ such that $\dim(g(f(\field{q}^m))) = k$. Since
  $g(f(\field{q}^m)) \cong f(\field{q}^m) / (f(\field{q}^m) \cap \ker
  g)$ and $\dim(\ker g) = n-\dim(g(\field{q}^n)) = n-l$,
  Lemma~\ref{le:GeneralizedAnzahlTheorem} shows that the number of
  possible image spaces $f(\field{q}^m)$ is
\[
q^{k(n-l-m+k)} \gbinom{l}{k} \gbinom{n-l}{m-k},
\]
so that the number of all possible mappings $f$ is
\[
q^{k(n-l-m+k)} \gbinom{l}{k} \gbinom{n-l}{m-k}
\left[\prod_{i=0}^{m-1} (q^m-q^i)\right].
\]
\end{proof}

We are now ready to state and prove the generalization of
Lemma~\ref{le:UniformRandomVector}, which will be used in the proof of
Th.~\ref{th:KGoodRandomMatrixDuality}.

\begin{lemma}\label{le:UniformRandomMatrix}
A random $m\times n$ matrix $\rmat{N}$ with $m \le n$ is uniformly distributed over $\field{q}^{m\times n}$ if and only if for every full-rank matrix $\mat{M} \in \field{q}^{m\times n}$, the product $\mat{M}\rmat{N}^T$ is uniformly distributed over $\field{q}^{m\times m}$.
\end{lemma}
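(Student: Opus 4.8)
The forward implication is routine. If $\rmat{N}$ is uniform over $\field{q}^{m\times n}$, then $\rmat{N}^T$ is uniform over $\field{q}^{n\times m}$, and for a full-rank $\mat{M}\in\field{q}^{m\times n}$ the linear map $\mat{X}\mapsto\mat{M}\mat{X}$ from $\field{q}^{n\times m}$ to $\field{q}^{m\times m}$ is surjective (since $\mat{M}$ has rank $m$ it admits a right inverse) with all fibres of equal size $q^{(n-m)m}$; it therefore pushes the uniform distribution forward to the uniform distribution, so $\mat{M}\rmat{N}^T$ is uniform. The substance of the lemma is the converse, which I would approach as follows.

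The decisive observation is that the hypothesis is translation invariant: replacing $\rmat{N}$ by $\rmat{N}-\mat{N}_0$ turns $\mat{M}\rmat{N}^T$ into $\mat{M}\rmat{N}^T-\mat{M}\mat{N}_0^T$, still uniform over $\field{q}^{m\times m}$. It therefore suffices to prove the single equality $P\{\rmat{N}=\mat{0}\}=q^{-mn}$; applying it to each translate $\rmat{N}-\mat{N}_0$ then gives $P\{\rmat{N}=\mat{N}_0\}=q^{-mn}$ for every $\mat{N}_0$, which is exactly uniformity. Setting $p_l\eqdef P\{\rank(\rmat{N})=l\}$ for $0\le l\le m$ and noting that $P\{\rmat{N}=\mat{0}\}=p_0$, the target reduces to showing $p_0=q^{-mn}$.

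To reach $p_0$ I would pass to rank statistics of the products. For every full-rank $\mat{M}$ the uniformity of $\mat{M}\rmat{N}^T$ gives $P\{\rank(\mat{M}\rmat{N}^T)=k\}=\nu_k q^{-m^2}$, where $\nu_k$ is the (fixed) number of $m\times m$ matrices of rank $k$. Summing this identity over all full-rank $\mat{M}\in\field{q}^{m\times n}$ and interchanging the order of summation yields
\[
\sum_{l=0}^{m} p_l\, C(l,k)\;=\;|\Omega|\,\nu_k\,q^{-m^2},\qquad k=0,1,\dots,m,
\]
where $\Omega$ is the set of full-rank $m\times n$ matrices and $C(l,k)$ is the number of full-rank $\mat{M}$ satisfying $\rank(\mat{M}\mat{N}^T)=k$ for a fixed $\mat{N}$ of rank $l$. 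By Lemma~\ref{le:SpecialAnzahlLemma} this count depends only on $l$ and $k$, so the display is a genuine linear system in the rank probabilities $p_0,\dots,p_m$.

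The point that makes the system solvable is its triangularity: the formula of Lemma~\ref{le:SpecialAnzahlLemma} carries the factor $\gbinom{l}{k}$, which vanishes for $k>l$, so the equation indexed by $k$ involves only $p_k,\dots,p_m$ and has a nonzero diagonal coefficient $C(k,k)$. Solving downward from $k=m$ determines $p_m,p_{m-1},\dots,p_0$ uniquely. Since the uniformly distributed matrix satisfies the hypothesis as well and hence obeys the same system with the same right-hand sides, its rank distribution is the unique solution too; comparing the two forces $p_0$ to equal the rank-$0$ probability of the uniform matrix, namely $q^{-mn}$, and the translation reduction then finishes the proof. The main obstacle is not conceptual but bookkeeping: confirming that $C(l,k)$ vanishes exactly when $k>l$ and is nonzero on the diagonal, reading the hypotheses $k\le l\le m\le n$ and $k+n\ge l+m$ of Lemma~\ref{le:SpecialAnzahlLemma} as describing precisely the support of $C(l,k)$, and treating the degenerate indices $l=0$ and $k=0$ (outside the lemma's positivity range) directly. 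The genuinely subtle idea is recognizing that the product condition pins down only rank statistics, and that translation invariance is what upgrades the single number $p_0$ to full uniformity of $\rmat{N}$.
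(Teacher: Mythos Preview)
Your argument is correct and follows the same architecture as the paper's proof: both handle necessity by the surjective-linear-map observation, and for sufficiency both sum $P\{\rank(\mat{M}\rmat{N}^T)=k\}$ over all full-rank $\mat{M}$, rewrite this via Lemma~\ref{le:SpecialAnzahlLemma} as a linear system in the rank probabilities $p_l=P\{\rank(\rmat{N})=l\}$, extract $p_0=q^{-mn}$, and finish by translation invariance.

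The one substantive difference is in how the linear system is dispatched. The paper solves it explicitly, invoking two Gaussian-binomial identities to obtain the full rank distribution $p_l=q^{-mn}\gbinom{n}{l}\prod_{i=0}^{l-1}(q^m-q^i)$ before reading off $p_0$. You instead argue abstractly: the system is triangular with nonzero diagonal, hence has a unique solution, and since the uniform distribution satisfies the same hypotheses its rank probabilities solve the same system; comparing gives $p_0=q^{-mn}$ directly. Your route is cleaner and sidesteps the identity manipulations, at the cost of not producing the explicit rank distribution (which the paper does not use elsewhere anyway). The bookkeeping caveats you flag---the boundary cases $k=0$, $l=0$, and the constraint $k+n\ge l+m$ in Lemma~\ref{le:SpecialAnzahlLemma}---are real but harmless: $C(l,k)=0$ whenever $k>l$ or $l-k>n-m$ for elementary rank reasons, and $C(k,k)\neq 0$, so the back-substitution from $k=m$ down to $k=0$ goes through.
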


\begin{proof}
(Necessity) Given any full-rank matrix $\mat{M} \in \field{q}^{m\times n}$, it is clear that the set $\{\mat{N} \in \field{q}^{m\times n}: \mat{M}\mat{N}^T = \mat{0}\}$ is an $m(n-m)$-dimensional subspace of $\field{q}^{m\times n}$. Consequently, $P\{\mat{M}\rmat{N}^T = \mat{0}\} = q^{m(n-m)} q^{-mn} = q^{-m^2}$. Finally, for any $\mat{K} \in \field{q}^{m\times m}$, there exists a matrix $\mat{N} \in \field{q}^{m\times n}$ such that $\mat{M}\mat{N}^T = \mat{K}$, so we have $P\{\mat{M}\rmat{N}^T = \mat{K}\} = P\{\mat{M}(\rmat{N}-\mat{N})^T = \mat{0}\} = P\{\mat{M}\rmat{N}^T = \mat{0}\} = q^{-m^2}$.

(Sufficiency) Since $\mat{M}\rmat{N}^T$ is uniformly distributed over $\field{q}^{m\times m}$ for any full-rank matrix $\mat{M} \in \field{q}^{m\times n}$, we have
\[
\sum_{\mat{M}: \rank(\mat{M})=m} P\{\rank(\mat{M}\rmat{N}^T) = k\}
= q^{-m^2} \gbinom{m}{k} \left[\prod_{i=0}^{k-1} (q^m-q^i)\right] \left[\prod_{i=0}^{m-1} (q^n-q^i)\right].
\]
On the other hand, the sum can be rewritten as
\begin{align*}
\sum_{\mat{M}: \rank(\mat{M})=m} P\{\rank(\mat{M}\rmat{N}^T) = k\}
&= \sum_{\mat{M}: \rank(\mat{M})=m} \sum_{\substack{\mat{N}: \rank(\mat{N}) \ge k,\\ \rank(\mat{M}\mat{N}^T)=k}} P\{\rmat{N} = \mat{N}\} \\
&= \sum_{l=k}^{\min\{m, n-m+k\}}
\sum_{\substack{(\mat{M}, \mat{N}): \rank(\mat{M}) = m,\\
\rank(\mat{N}) = l,\\
\rank(\mat{M}\mat{N}^T)=k}} P\{\rmat{N} = \mat{N}\}.
\end{align*}
It follows from Lemma~\ref{le:SpecialAnzahlLemma} that
\[
\begin{split}
\sum_{\substack{(\mat{M}, \mat{N}): \rank(\mat{M}) = m,\\
\rank(\mat{N}) = l,\\
\rank(\mat{M}\mat{N}^T)=k}} P\{\rmat{N} = \mat{N}\}
&= q^{k(n-l-m+k)} \gbinom{l}{k} \gbinom{n-l}{m-k} \left[\prod_{i=0}^{m-1} (q^m-q^i)\right] \\
&\quad \times P\{\rank(\rmat{N}) = l\}.
\end{split}
\]
Combining the above identities gives
\[
\begin{split}
&\sum_{l=k}^{\min\{m, n-m+k\}}
q^{-k(l-k)} \gbinom{l}{k} \gbinom{n-l}{m-k}
P\{\rank(\rmat{N}) = l\} \\
&\qquad = q^{-mn} q^{(m-k)(n-m)} \gbinom{n}{m} \gbinom{m}{k} \left[\prod_{i=0}^{k-1} (q^m-q^i)\right]
\qquad \text{for $k=0,1,\ldots,m$}.
\end{split}
\]
Solving these equations with the identities
\[
\sum_{l=k}^{\min\{m, n-m+k\}} \gbinom{n-m}{l-k}
\left[\prod_{i=0}^{l-k-1} (q^{m-k}-q^i)\right] 
= q^{(m-k)(n-m)}
\]
and
\[
\gbinom{n}{l} \gbinom{l}{k} \gbinom{n-l}{m-k} = \gbinom{n}{m}
\gbinom{m}{k}\gbinom{n-m}{l-k}, 
\]
we obtain
\[
P\{\rank(\rmat{N}) = l\} = q^{-mn} \gbinom{n}{l}
\left[\prod_{i=0}^{l-1} (q^m-q^i)\right] \qquad \text{for $l = 0, 1,
  \ldots, m$}. 
\]
In particular, $P\{\rmat{N} = \mat{0}\} = q^{-mn}$. Replacing
$\rmat{N}$ with $\rmat{N} - \mat{N}$ 
%(where $\mat{N} \in \field{q}^{m\times n}$) 
then shows that $P\{\rmat{N} = \mat{N}\} =
q^{-mn}$ for all $\mat{N} \in \field{q}^{m\times n}$.
\end{proof}

\begin{proof}[Proof of Th.~\ref{th:KGoodRandomMatrixDuality}]
  Let $\rmat{A}$ be a $k$-good random $m\times n$ matrix. Then for any
  full-rank matrix $\mat{M} \in \field{q}^{k \times m}$, the product
  $\mat{M} \rmat{A}$ is uniformly distributed over $\field{q}^{k\times
    n}$. From Lemma~\ref{le:UniformRandomMatrix}, it follows that for
  any full-rank matrix $\mat{N} \in \field{q}^{k\times n}$, the
  product $(\mat{M} \rmat{A}) \mat{N}^T = \mat{M} (\mat{N}
  \rmat{A}^T)^T$ is uniformly distributed over $\field{q}^{k\times
    k}$. Lemma~\ref{le:UniformRandomMatrix} further shows that
  $\mat{N} \rmat{A}^T$ is uniformly distributed over
  $\field{q}^{k\times m}$. In other words, $\rmat{A}^T$ is
  $k$-good. Conversely, if $\rmat{A}^T$ is $k$-good then $\rmat{A} =
  (\rmat{A}^T)^T$ is $k$-good.
\end{proof}

The next theorem, a generalization of
Th.~\ref{th:GoodRandomMatrixWithMinimumSupportSize}, gives the
one-to-one correspondence between $k$-good random matrices of
minimum support size and $(m,n,k)$ MRD codes.

\begin{theorem}\label{th:KGoodRandomMatrixWithMinimumSupportSize}
  The minimum support size of a $k$-good random $m\times n$ matrix is
  $q^{k\max\{m,n\}}$. A random $m\times n$ matrix with support size
  $q^{k\max\{m,n\}}$ is $k$-good if and only if it is uniformly
  distributed over an $(m,n,k)$ MRD code.
\end{theorem}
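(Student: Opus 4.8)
The plan is to mirror the proof of Theorem~\ref{th:GoodRandomMatrixWithMinimumSupportSize}, replacing a single nonzero vector $\seq{u}$ by a full-rank matrix $\mat{M}\in\field{q}^{k\times m}$ and invoking Theorem~\ref{th:KGoodRandomMatrixDuality} and Lemma~\ref{le:MRDCode} in place of their $k=1$ analogues. Since transposition is a probability-preserving bijection and both the $k$-good property and the MRD-code property are self-dual, I would first reduce everything to the case $k\le m\le n$, where $q^{k\max\{m,n\}}=q^{kn}$; the case $m>n$ is then recovered by passing to $\rmat{A}^T$.

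For the lower bound on the support size I would argue as follows. If $\rmat{A}$ is $k$-good and $\mat{M}\in\field{q}^{k\times m}$ is any full-rank matrix, then $\mat{M}\rmat{A}$ is uniform over $\field{q}^{k\times n}$, so $\supp(\mat{M}\rmat{A})=\field{q}^{k\times n}$. Since $\mat{M}(\supp(\rmat{A}))\supseteq\supp(\mat{M}\rmat{A})=\field{q}^{k\times n}$ while the left-hand side has at most $|\supp(\rmat{A})|$ elements, this forces $|\supp(\rmat{A})|\ge q^{kn}$; the case $m>n$ follows by transposing. The matching upper bound comes from the known existence of $(m,n,k)$ MRD codes of size $q^{k\max\{m,n\}}$ together with the forward implication of the second statement (uniform over MRD is $k$-good).

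The heart of the argument is a bijection supplied by Lemma~\ref{le:MRDCode}: for an $(m,n,k)$ MRD code $\matset{A}$ with $k\le m\le n$ we have $|\matset{A}|=q^{kn}$ and $\mat{M}\matset{A}=\field{q}^{k\times n}$ for every full-rank $\mat{M}\in\field{q}^{k\times m}$, so, the two sets having equal cardinality $q^{kn}$, the map $\mat{A}\mapsto\mat{M}\mat{A}$ is a bijection $\matset{A}\to\field{q}^{k\times n}$. This one observation drives both directions. For the implication that a uniform distribution over an MRD code is $k$-good, the bijection immediately yields that $\mat{M}\rmat{A}$ is uniform over $\field{q}^{k\times n}$. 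Conversely, if $\rmat{A}$ is $k$-good with $|\supp(\rmat{A})|=q^{kn}$, then for any fixed full-rank $\mat{M}$ one has $\mat{M}(\supp(\rmat{A}))=\field{q}^{k\times n}$, and since domain and codomain both have size $q^{kn}$, the restriction of $\mat{A}\mapsto\mat{M}\mat{A}$ to $\supp(\rmat{A})$ is again a bijection. From this I would read off two conclusions at once: pulling back the uniform law of $\mat{M}\rmat{A}$ along the bijection forces $P\{\rmat{A}=\mat{A}\}=q^{-kn}$ for every $\mat{A}\in\supp(\rmat{A})$, so the distribution is uniform; and since $\mat{M}\,\supp(\rmat{A})=\field{q}^{k\times n}$ holds for all full-rank $\mat{M}$ with $|\supp(\rmat{A})|=q^{kn}$, Lemma~\ref{le:MRDCode} identifies $\supp(\rmat{A})$ as an $(m,n,k)$ MRD code.

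Because Lemma~\ref{le:MRDCode} and Definition~\ref{df:KGoodRandomMatrix} both speak in terms of multiplication by full-rank matrices in $\field{q}^{k\times m}$, the two notions align almost verbatim and I do not anticipate a substantive obstacle. The only point needing care is the duality bookkeeping through Theorem~\ref{th:KGoodRandomMatrixDuality}: when $m>n$ one works with $\rmat{A}^T$, full-rank matrices in $\field{q}^{k\times n}$, and the target $\field{q}^{k\times m}$, and the cardinalities $q^{kn}$ versus $q^{km}$ must be tracked consistently. Everything else is a routine transcription of the $k=1$ argument.
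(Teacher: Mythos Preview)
Your proposal is correct and follows essentially the same route as the paper: reduce to $m\le n$ via Theorem~\ref{th:KGoodRandomMatrixDuality}, use the surjectivity $\mat{M}\,\supp(\rmat{A})=\field{q}^{k\times n}$ to get the lower bound and then, at the minimum support size, upgrade it to a bijection to force both uniformity and (via Lemma~\ref{le:MRDCode}) the MRD property. You spell out the bijection argument for both implications more explicitly than the paper does, but the ideas and the dependencies are identical.
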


\begin{proof}
  Owing to Th.~\ref{th:KGoodRandomMatrixDuality}, it suffices to
  establish the theorem for $m \le n$. It is clear that the support
  size of a $k$-good random $m\times n$ matrix must not be smaller
  than $q^{kn} = q^{k\max\{m,n\}}$. Lemma~\ref{le:MRDCode} shows that
  a random $m\times n$ matrix uniformly distributed over an $(m,n,k)$
  MRD code achieves this lower bound. On the other hand, for a
  $k$-good random $m\times n$ matrix $\rmat{A}$ with
  $|\supp(\rmat{A})| = q^{k\max\{m,n\}}$, we have $|\supp(\rmat{A})| =
  q^{kn}$ and $\mat{M}(\supp(\rmat{A})) = \field{q}^{k\times n}$ for
  every full-rank matrix $\mat{M} \in \field{q}^{k\times m}$, which
  implies that $\supp(\rmat{A})$ is an $(m,n,k)$ MRD code
  (Lemma~\ref{le:MRDCode}) and the probability distribution is
  uniform.
\end{proof}

A corollary follows immediately.

\begin{corollary}
  The random matrix uniformly distributed over $\field{q}^{m\times n}$
  is the unique $\min\{m,n\}$-good random $m\times n$ matrix (up to
  probability distribution).
\end{corollary}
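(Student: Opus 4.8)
The plan is to read the corollary off as the extremal specialization of Theorem~\ref{th:KGoodRandomMatrixWithMinimumSupportSize} at $k=\min\{m,n\}$, once one identifies which MRD code occurs in this degenerate case. Throughout I would write $k=\min\{m,n\}$.

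First I would pin down the numerology. By Theorem~\ref{th:KGoodRandomMatrixWithMinimumSupportSize} the minimum support size of a $k$-good random $m\times n$ matrix is $q^{k\max\{m,n\}}$, which for $k=\min\{m,n\}$ equals $q^{\min\{m,n\}\cdot\max\{m,n\}}=q^{mn}$. Since the support of any random $m\times n$ matrix is by definition a subset of $\field{q}^{m\times n}$, and $|\field{q}^{m\times n}|=q^{mn}$, every $\min\{m,n\}$-good random matrix has support size at most $q^{mn}$. Together with the lower bound this forces the support size to be exactly $q^{mn}$, so the support is all of $\field{q}^{m\times n}$; in particular the support-minimality hypothesis of the theorem is automatically satisfied by \emph{every} $\min\{m,n\}$-good matrix.

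Next I would identify the relevant MRD code. An $(m,n,\min\{m,n\})$ MRD code is a set of $q^{mn}$ matrices in $\field{q}^{m\times n}$ with $\rankd=\min\{m,n\}-\min\{m,n\}+1=1$. Here the rank-distance requirement merely asserts that the matrices are pairwise distinct, which is automatic, so the only effective constraint is the cardinality $q^{mn}=|\field{q}^{m\times n}|$; hence the code must be all of $\field{q}^{m\times n}$. Applying the equivalence of Theorem~\ref{th:KGoodRandomMatrixWithMinimumSupportSize} to an arbitrary $\min\{m,n\}$-good random matrix (whose support size is $q^{mn}$ by the previous step) then shows that it is uniformly distributed over this code, i.e.\ over $\field{q}^{m\times n}$. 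Conversely, the uniform distribution on $\field{q}^{m\times n}$ is itself $\min\{m,n\}$-good, being the uniform distribution on the $(m,n,\min\{m,n\})$ MRD code $\field{q}^{m\times n}$; this gives existence, and the preceding sentence gives uniqueness.

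There is no real obstacle beyond keeping track of the two exponents, the one point worth spelling out being that a rank distance of $1$ imposes no condition on a set of distinct matrices, so that the extremal MRD code collapses to the full matrix space. Everything else is a transcription of the theorem that precedes the corollary.
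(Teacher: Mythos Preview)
Your argument is correct and is exactly the intended derivation: the paper simply states that the corollary ``follows immediately'' from Theorem~\ref{th:KGoodRandomMatrixWithMinimumSupportSize}, and you have spelled out precisely the two observations needed---that $q^{\min\{m,n\}\max\{m,n\}}=q^{mn}$ forces full support, and that the only $(m,n,\min\{m,n\})$ MRD code is $\field{q}^{m\times n}$ itself.
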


There is also a generalization of Th.~\ref{th:GeneralConstruction}
for constructing a general $k$-good random matrix.

\begin{theorem}\label{th:KGeneralConstruction}
  Let $k,m,n,s,t$ be positive integers such that $k \le s \le m$ and
  $k \le t \le n$. Let $\rmat{A}$ be a $k$-good random $m\times n$
  matrix and $\rmat{B}$ an arbitrary random $s\times t$ matrix. Let
  $\rmat{P}$ be a random full-rank $s\times m$ matrix and $\rmat{Q}$ a
  random full-rank $n\times t$ matrix. If $\rmat{A}$ is independent of
  $(\rmat{P}, \rmat{Q}, \rmat{B})$, then $\rmat{P} \rmat{A} \rmat{Q} +
  \rmat{B}$ is a $k$-good random $s\times t$ matrix.
\end{theorem}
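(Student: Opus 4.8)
The plan is to verify the defining property of $k$-goodness directly for $\rmat{C} \eqdef \rmat{P}\rmat{A}\rmat{Q} + \rmat{B}$: for an arbitrary full-rank matrix $\mat{M} \in \field{q}^{k\times s}$ I must show that $\mat{M}\rmat{C}$ is uniformly distributed over $\field{q}^{k\times t}$. The organizing device is to condition on the value of the triple $(\rmat{P},\rmat{Q},\rmat{B})$. Since $\rmat{A}$ is independent of this triple and $\rmat{P}$, $\rmat{Q}$ are full-rank with probability one, it suffices to prove that for every \emph{fixed} full-rank $\mat{P}\in\field{q}^{s\times m}$, every fixed full-rank $\mat{Q}\in\field{q}^{n\times t}$, and every fixed $\mat{B}\in\field{q}^{s\times t}$, the random matrix
\[
\mat{M}\mat{P}\,\rmat{A}\,\mat{Q} + \mat{M}\mat{B}
\]
is uniformly distributed over $\field{q}^{k\times t}$. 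A conditional distribution that is the same (uniform) for every conditioning value is automatically the unconditional distribution as well, which is exactly what is needed.

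With $\mat{P},\mat{Q},\mat{B}$ now fixed, I would set $\mat{M}' \eqdef \mat{M}\mat{P} \in \field{q}^{k\times m}$ and first pin down its rank. Because $\mat{M}$ has rank $k$ (full-rank, as $k\le s$) and $\mat{P}$ has rank $s$ (full-rank, as $s\le m$), right multiplication $\seq{y}\mapsto\seq{y}\mat{P}$ is an injective linear map $\field{q}^s\to\field{q}^m$ and therefore does not lower the rank of $\mat{M}$; hence $\mat{M}'$ is a full-rank $k\times m$ matrix. Invoking the $k$-goodness of $\rmat{A}$ (Def.~\ref{df:KGoodRandomMatrix}), the product $\mat{M}'\rmat{A}$ is then uniformly distributed over $\field{q}^{k\times n}$.

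It remains to track the effect of right-multiplying the uniform $k\times n$ matrix $\mat{M}'\rmat{A}$ by $\mat{Q}$ and adding the constant $\mat{M}\mat{B}$. Since $\mat{Q}$ has rank $t$ (full-rank, as $t\le n$), the $\field{q}$-linear map $\field{q}^{k\times n}\to\field{q}^{k\times t}$, $\mat{U}\mapsto\mat{U}\mat{Q}$, is surjective (it acts as the surjection $\seq{v}\mapsto\seq{v}\mat{Q}$ on each of the $k$ rows), so all its fibres have the same cardinality $q^{k(n-t)}$ and it carries the uniform distribution to the uniform distribution. Thus $(\mat{M}'\rmat{A})\mat{Q}$ is uniform over $\field{q}^{k\times t}$, and adding the fixed matrix $\mat{M}\mat{B}$ is a bijection of $\field{q}^{k\times t}$ that preserves uniformity. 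This settles the conditional claim, and the conditioning argument of the first paragraph finishes the proof.

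The computation is in the end a bookkeeping exercise, and I expect no genuinely hard step; the only points that must be handled with care are the two rank claims, both of which rely on the hypotheses $k\le s\le m$ and $k\le t\le n$. These ensure, respectively, that $\mat{P}$ embeds $\field{q}^s$ into $\field{q}^m$ (so that $\mat{M}\mat{P}$ remains full rank and the $k$-goodness of $\rmat{A}$ is actually applicable) and that right multiplication by $\mat{Q}$ is a surjection (so that uniformity is preserved rather than collapsed). If either inequality failed, the corresponding map could drop rank and the argument would break down, so I would regard the role of these dimension constraints as the substantive content lurking behind the otherwise routine conditioning.
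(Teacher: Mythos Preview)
Your proof is correct and follows exactly the approach sketched in the paper, which simply notes that for fixed full-rank $\mat{M},\mat{P},\mat{Q}$ the product $\mat{MP}\rmat{A}\mat{Q}$ is uniformly distributed over $\field{q}^{k\times t}$ and leaves the details to the reader. You have supplied precisely those details: the conditioning on $(\rmat{P},\rmat{Q},\rmat{B})$, the rank preservation under $\mat{M}\mapsto\mat{M}\mat{P}$, the surjectivity of right multiplication by $\mat{Q}$, and the final translation by $\mat{M}\mat{B}$.
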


The theorem readily follows from the observation that, under the
  given conditions, for any full-rank matrices $\mat{M}\in
  \field{q}^{k\times s}$, $\mat{P}\in
  \field{q}^{s\times m}$, $\mat{Q}\in
  \field{q}^{n\times t}$ the product $\mat{MP}\rmat{A}\mat{Q}$ is
  uniformly distributed over $\field{q}^{k\times t}$. The details are
  left to the reader.

\section{Homogeneous Weights on Matrix Spaces}\label{sec:whom}

Further examples of $k$-good random matrices are provided by the
so-called left and right homogeneous weights on $\field{q}^{m\times n}$, suitable
scaled to turn it into a probability distribution. Denoting by $R_t$
the ring of $t\times t$ matrices over $\field{q}$, the set 
$\field{q}^{m\times n}$ can be regarded as an $R_m$-$R_n$ bimodule,
i.e.\ it is both a left $R_m$-module and a right $R_n$-module
(relative to the natural actions) and has the property
that $\mat{AXB}=(\mat{AX})\mat{B}=\mat{A}(\mat{XB})$ is independent of
the choice of parenthesis. The left homogeneous weight
$\lhom\colon\field{q}^{m\times n}\to\mathbb{R}$ is uniquely defined by the
following axioms:
\begin{enumerate}[(H1)]
\item $\lhom(\mat{0})=0$;
\item $\lhom(\mat{UX})=\lhom(\mat{X})$ for all
  $\mat{X}\in\field{q}^{m\times n}$, $\mat{U}\in
  R_m^\times$;\footnotemark
\item $\sum_{\mat{X}\in\matset{U}}\lhom(\mat{X})=|\matset{U}|$
  for all cyclic left submodules $\matset{U}\neq\{\mat{0}\}$ of
  $\field{q}^{m\times n}$.
\end{enumerate}
\footnotetext{By $R_m^\times$ we denote the group of units of $R_m$,
  i.e.\ the general linear group of degree $m$ over $\field{q}$.}
This represents the more general definition of a homogeneous
weight in \cite{greferath-schmidt99}, adapted to the case of modules
over finite rings considered in \cite{st:homippi}. 
% The normalization factor
% $q^{-mn}$ in (H3) arises from the requirement that $\lhom$ be a probability
% distribution on $\field{q}^{mn}$.
% \rrem{[By Shengtian] The above explanation may be inaccurate,
% since as shown in Example~1,
% $\lhom$ is not a probability distribution as $m < n$. The exact reason,
% I think, is the first displayed formula in the proof of Th.~\ref{thm:hom}.
% However I feel it somewhat difficult to summarize it clearly.}
According to \cite[Prop.~7]{st:homippi} the function $\lhom$ is
explicitly given by
\begin{equation}
  \label{eq:lhom}
  \lhom(\mat{X})=1-\frac{(-1)^{\rank\mat{X}}}{(q^m-1)(q^{m-1}-1)
    \dotsm(q^{m-\rank\mat{X}+1}-1)}.
\end{equation}
The left submodules of $\field{q}^{m\times n}$ are in one-to-one
correspondence with the subspaces of $\field{q}^n$ by the map 
sending a left submodule $\matset{U}$ to the subspace of $\field{q}^n$
generated by the row spaces of all matrices in $\matset{U}$; see
\cite[Lemma~1]{st:homippi} for example.
In the case $m\geq n$ all left submodules of $\field{q}^{m\times n}$
are cyclic, so that the equation in (H3) holds for all left submodules
$\matset{U}$ of $\field{q}^{m\times n}$ and $\lhom$ is a left
homogeneous weight on $\field{q}^{m\times n}$ in the stronger sense
defined in \cite{st:homippi}. If $m<n$ then $\field{q}^{m\times n}$
contains non-cyclic left submodules (those which correspond to
subspaces of $\field{q}^n$ of dimension $>m$) and the equation
in (H3) does not remain true for these.

The right homogeneous weight $\rhom\colon\field{q}^{m\times
  n}\to\mathbb{R}$ is defined in the analogous way using the right
$R_n$-module structure of $\field{q}^{m\times n}$. The preceding
remarks hold mutatis mutandis for $\rhom$. From \eqref{eq:lhom} it is
clear that $\lhom\neq\rhom$ in the ``rectangular'' case $m\neq n$
(while of course $\lhom=\rhom$ for $m=n$).

Obviously $\lhom$ (and similarly $\rhom$) can be scaled by a constant
$\gamma>0$ to turn it into a probability distribution on
$\field{q}^{m\times n}$. The normalized version $\nlhom=\gamma\lhom$ satisfies
(H1), (H2), and $\sum_{\mat{X}\in\field{q}^{m\times
    n}}\lhom(\mat{X})=\gamma|\matset{U}|$ for all cyclic left
submodules $\matset{U}\neq\{\mat{0}\}$ of $\field{q}^{m\times n}$ in
place of (H3). The constant is $\gamma=c_{mn}^{-1}$, where
$c_{mn}=\sum_{\mat{X}\in\field{q}^{m\times n}}\lhom(\mat{X})$ is the
total left homogeneous weight of $\field{q}^{m\times n}$.

\begin{lemma}
  \label{le:totalweight}
  For positive integers $m,n$ the total left homogeneous weight of
  $\field{q}^{m\times n}$ is $c_{mn}=q^{mn}-(-1)^mq^{m(m+1)/2}\gbinom{n-1}{m}$.
\end{lemma}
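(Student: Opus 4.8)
The plan is to evaluate $c_{mn}=\sum_{\mat{X}\in\field{q}^{m\times n}}\lhom(\mat{X})$ by grouping the matrices according to rank. Since $\lhom(\mat{X})$ depends on $\mat{X}$ only through $\rank(\mat{X})$ via \eqref{eq:lhom}, and the number of $m\times n$ matrices of rank $r$ is the standard (and $m\leftrightarrow n$ symmetric) quantity $N_r=\gbinom{n}{r}\prod_{i=0}^{r-1}(q^m-q^i)$ --- precisely the rank distribution already computed in the proof of Lemma~\ref{le:UniformRandomMatrix} --- I would write
\[
c_{mn}=\sum_{r=0}^{\min\{m,n\}}N_r\left(1-\frac{(-1)^r}{\prod_{i=0}^{r-1}(q^{m-i}-1)}\right).
\]
The contribution of the constant $1$ sums to $\sum_r N_r=q^{mn}$, the total number of matrices, so it remains to show that the remaining (correction) sum equals $(-1)^mq^{m(m+1)/2}\gbinom{n-1}{m}$.

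First I would simplify the summand of the correction sum. Writing $q^m-q^i=q^i(q^{m-i}-1)$ gives $\prod_{i=0}^{r-1}(q^m-q^i)=q^{\binom{r}{2}}\prod_{i=0}^{r-1}(q^{m-i}-1)$, so the denominators cancel and
\[
\frac{N_r}{\prod_{i=0}^{r-1}(q^{m-i}-1)}=q^{\binom{r}{2}}\gbinom{n}{r}.
\]
Thus the correction sum reduces to the truncated alternating Gaussian sum $\sum_{r=0}^{\min\{m,n\}}(-1)^rq^{\binom{r}{2}}\gbinom{n}{r}$, and the whole lemma comes down to the combinatorial identity
\[
\sum_{r=0}^{s}(-1)^rq^{\binom{r}{2}}\gbinom{n}{r}=(-1)^sq^{\binom{s+1}{2}}\gbinom{n-1}{s}
\]
valid for every integer $s\geq 0$ (note $\binom{s+1}{2}=s(s+1)/2$).

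I would prove this identity by induction on $s$, the case $s=0$ being trivial. For the inductive step I add the term $r=s$ to the hypothesis and factor out $(-1)^sq^{\binom{s}{2}}$, reducing the claim to $\gbinom{n}{s}-\gbinom{n-1}{s-1}=q^s\gbinom{n-1}{s}$, which is exactly the $q$-Pascal rule $\gbinom{n}{s}=\gbinom{n-1}{s-1}+q^s\gbinom{n-1}{s}$; since $\binom{s}{2}+s=\binom{s+1}{2}$ this produces the stated closed form. Finally I take $s=m$: when $m\le n$ the truncated sum is exactly this partial sum, while when $m>n$ the extra terms vanish because $\gbinom{n}{r}=0$ for $r>n$, so the truncated sum still equals the $s=m$ value (which is $0$, matching $\gbinom{n-1}{m}=0$). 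Subtracting from $q^{mn}$ then gives the claimed formula. The one genuinely nontrivial point is the evaluation of this truncated alternating sum: the untruncated version $\sum_{r=0}^{n}(-1)^rq^{\binom{r}{2}}\gbinom{n}{r}=\prod_{k=0}^{n-1}(1-q^k)$ vanishes (its $k=0$ factor is $0$), so the entire value of the lemma is a "boundary" contribution, and the $q$-Pascal induction is the cleanest way to isolate it.
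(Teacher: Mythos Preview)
Your argument is correct and follows the same route as the paper: group by rank, cancel the product $\prod_{i=0}^{r-1}(q^{m-i}-1)$ against the denominator of $\lhom$, and reduce to the partial alternating sum $\sum_{r=0}^{m}(-1)^rq^{\binom{r}{2}}\gbinom{n}{r}=(-1)^mq^{\binom{m+1}{2}}\gbinom{n-1}{m}$. The only difference is in how this last identity is justified: the paper suggests expanding $(1+X)^{-1}\prod_{i=0}^{n-1}(1+q^iX)$ in two ways via the $q$-binomial theorem and reading off the coefficient of $X^m$, whereas you prove it by induction on the upper limit using the $q$-Pascal rule---an equally short and slightly more self-contained verification.
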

\begin{proof}
  For $0\leq r\leq\min\{m,n\}$ the number of rank $r$ matrices in
  $\field{q}^{m\times n}$ equals
  $\gbinom{m}{r}\gbinom{n}{r}\prod_{i=0}^{r-1}(q^r-q^i)
  =q^{r(r-1)/2}\gbinom{n}{r}\prod_{i=0}^{r-1}(q^{m-i}-1)$.
Together with
  \eqref{eq:lhom} this gives
  \begin{align*}
    c_{mn}&=q^{mn}-\sum_{r=0}^m(-1)^rq^{r(r-1)/2}\gbinom{n}{r}\\
    &=q^{mn}-(-1)^mq^{m(m+1)/2}\gbinom{n-1}{m},
  \end{align*}
  where the last step follows from the $q$-analogue of a well-known
  identity for binomial coefficients (the case $q=1$).\footnote{The
    identity is easily proved by expanding
    $(1+X)^{-1}\prod_{i=0}^{n-1}(1+q^iX)$ in two different ways using
    the $q$-binomial theorem \cite{andrews98}.}
\end{proof}
Our particular interest in homogeneous weights in this
paper is due to the following result:
\begin{theorem}
  \label{thm:hom}
  If $m\geq n$ then the normalized left homogeneous weight $\nlhom$
  defines a $k$-good random matrix on $\field{q}^{m\times n}$ for
  $1\leq k\leq n-1$. Similarly, if $m\leq n$ then $\nrhom$
  defines a $k$-good random matrix on $\field{q}^{m\times n}$ for
  $1\leq k\leq m-1$.
\end{theorem}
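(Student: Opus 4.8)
My plan is to pass to the Fourier (character) domain, where $k$-goodness turns into a statement about the vanishing of $\widehat{\nlhom}$ on low-rank matrices, and then to read off those Fourier values directly from axiom (H3) via Poisson summation. Fix a nontrivial additive character $\psi$ of $\field{q}$, write $\langle\mat{E},\mat{X}\rangle=\mathrm{tr}(\mat{E}^{T}\mat{X})$ and $\hat f(\mat{E})=\sum_{\mat{X}}f(\mat{X})\psi(\langle\mat{E},\mat{X}\rangle)$ for $f=\nlhom$. First I would record the Fourier criterion for $k$-goodness. Since $\langle\mat{D},\mat{M}\mat{A}\rangle=\langle\mat{M}^{T}\mat{D},\mat{A}\rangle$, the product $\mat{M}\rmat{A}$ is uniform on $\field{q}^{k\times n}$ exactly when $\hat f(\mat{M}^{T}\mat{D})=0$ for every nonzero $\mat{D}\in\field{q}^{k\times n}$; and as $\mat{M}$ ranges over full-rank $k\times m$ matrices (so $\mat{M}^{T}$ has full column rank) and $\mat{D}$ over nonzero $k\times n$ matrices, the products $\mat{M}^{T}\mat{D}$ range over exactly the nonzero $m\times n$ matrices of rank $\le k$ (here $k\le n-1<n\le m$). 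Hence, by Def.~\ref{df:KGoodRandomMatrix}, $f$ is $k$-good if and only if $\hat f(\mat{E})=0$ for all $\mat{E}$ with $1\le\rank\mat{E}\le k$.

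Next, because $\lhom$ and therefore $f=\nlhom$ depends only on $\rank\mat{X}$ by \eqref{eq:lhom}, it is invariant under $\mat{X}\mapsto\mat{U}\mat{X}\mat{V}$ with $\mat{U}\in R_m^\times$, $\mat{V}\in R_n^\times$; a change of variables in the defining sum shows that $\hat f$ is then a function of rank alone, say $\hat f(\mat{E})=\phi_{\rank\mat{E}}$. It thus suffices to prove $\phi_e=0$ for $1\le e\le n-1$.

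To compute the $\phi_e$ I would feed (H3) into Poisson summation. For a subspace $V\subseteq\field{q}^n$ of dimension $d$ let $\matset{U}_V$ be the left submodule consisting of all $\mat{X}$ whose rows lie in $V$, so $|\matset{U}_V|=q^{md}$. Since $m\ge n$, every left submodule is cyclic, so the normalized (H3) holds for all of them and $\sum_{\mat{X}\in\matset{U}_V}f(\mat{X})=\gamma q^{md}$. By Lemma~\ref{le:totalweight} we have $\gbinom{n-1}{m}=0$ for $m\ge n$, whence $c_{mn}=q^{mn}$ and $\gamma=q^{-mn}$, so this sum equals $q^{-m(n-d)}$. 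Poisson summation over the subspace $\matset{U}_V$, together with the identity $\matset{U}_V^{\perp}=\matset{U}_{V^{\perp}}$ (a matrix annihilates $\matset{U}_V$ under $\langle\cdot,\cdot\rangle$ iff all its rows lie in $V^{\perp}$), converts this into $\sum_{\mat{E}\in\matset{U}_{V^{\perp}}}\hat f(\mat{E})=1$. Writing $W=V^{\perp}$, which ranges over all subspaces of $\field{q}^n$ of dimension $w\le n-1$, and sorting the $\mat{E}$ by rank, this reads
\[
\sum_{e=0}^{w}\gbinom{m}{e}\gbinom{w}{e}\Bigl[\textstyle\prod_{i=0}^{e-1}(q^{e}-q^{i})\Bigr]\phi_e=1\qquad(0\le w\le n-1),
\]
the bracketed coefficient being the number of rank-$e$ matrices with rows in $W$. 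This is a lower-triangular system: $w=0$ gives $\phi_0=1$, and at stage $w$ the coefficient of $\phi_0$ is $1$ while the newly appearing $\phi_w$ has nonzero coefficient, forcing $\phi_w=0$ for $1\le w\le n-1$ by induction. With the Fourier criterion this yields $k$-goodness for $1\le k\le n-1$, and the case $m\le n$ follows verbatim with $\nrhom$ and columns in place of rows (or via Th.~\ref{th:KGoodRandomMatrixDuality}).

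I expect the main obstacle to be the bookkeeping of the Poisson step, namely verifying $\matset{U}_V^{\perp}=\matset{U}_{V^{\perp}}$ with the correct normalization, together with the clean observation that $c_{mn}$ collapses to $q^{mn}$ precisely when $m\ge n$; this is what makes every right-hand side in the triangular system equal to $1$ and lets the induction close. As a consistency check, $\phi_n$ is left entirely unconstrained, matching the fact that $\nlhom$ is not uniform and hence not $n$-good, in agreement with the corollary that the uniform matrix is the only $\min\{m,n\}$-good one.
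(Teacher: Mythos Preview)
Your argument is correct and complete. The Fourier criterion, the identification $\matset{U}_V^{\perp}=\matset{U}_{V^{\perp}}$, the collapse $c_{mn}=q^{mn}$ for $m\ge n$, and the resulting lower-triangular system all check out; in particular the induction really does force $\phi_e=0$ for $1\le e\le n-1$, and the missing equation at $w=n$ correctly explains why the argument stops there.

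Your route, however, is quite different from the paper's. The paper argues on the module side: it fixes a nonzero submodule $\matset{U}$, pushes $\nrhom$ (in the $m\le n$ case) forward to a weight $w$ on the quotient $\field{q}^{m\times n}/\matset{U}\cong\field{q}^{k\times n}$, checks that $w$ again satisfies the axioms (H2), (H3), and then invokes an external structural result (Prop.~5 of \cite{st:homippi}) stating that any such function is an affine combination of the homogeneous weight and the uniform distribution; the boundary data $w(0)=q^{-kn}$ and $\sum w=1$ then pin $w$ down as uniform. By contrast, you work on the character side, convert (H3) for each cyclic submodule into a Poisson identity $\sum_{\mat{E}\in\matset{U}_{V^{\perp}}}\hat f(\mat{E})=1$, and solve the resulting triangular system directly. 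Your approach is more self-contained (it does not appeal to \cite{st:homippi}) and makes the rank threshold $n-1$ visible as the index at which the system runs out of equations; the paper's approach is more structural and would transfer verbatim to other finite modules with a homogeneous weight, without setting up characters.
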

It follows from \cite[Th.~2]{wt:egal} that the normalized homogeneous weight on
a finite Frobenius ring $R$ (normalized in such a way that it forms a
probability distribution) induces the uniform distribution on the
coset space $R/I$ for every left or right ideal $I\neq\{0\}$ of
$R$. Applying this to the matrix rings $R=R_m$ provides the key step
in the proof of the theorem for $m=n$. Here we use similar ideas to prove 
the theorem in the general case. 
\begin{proof}[Proof of Th.~\ref{thm:hom}]
  Clearly $\nlhom\colon\field{q}^{m\times n}\to\mathbb{R}$ and
  $\nrhom\colon\field{q}^{n\times m}\to\mathbb{R}$ are related by
  $\nlhom(\mat{X})=\nrhom(\mat{X}^T)$ for all $\mat{X}\in\field{q}^{m\times
    n}$. Hence by Th.~\ref{th:KGoodRandomMatrixDuality} it
  suffices to prove the second assertion. So we assume from now on
  that $m\leq n$.
  
  The weight $\nrhom\colon\field{q}^{m\times n}\to\mathbb{R}$ gives rise to a
  $k$-good random matrix if and only if for every
  $\mat{B}\in\field{q}^{k\times m}$ with $\rank(\mat{B})=k$ and every
  $\mat{Y}\in\field{q}^{k\times n}$ the equation
  \begin{equation*}
    %\label{thm:hom:proof}
    \sum_{\substack{\mat{X}\in\field{q}^{m\times n}\\\mat{BX}=\mat{Y}}}\nrhom(\mat{X})=q^{-kn}
  \end{equation*}
holds.
The equation $\mat{BX}=\mat{0}$ is equivalent to the statement that
the column space of $\mat{X}$ is contained in the orthogonal space
$U=V^\perp$ of the row space $V$ of $\mat{B}$. Since
$\rank(\mat{B})=k$, we have $\dim(U)=m-k$. Hence
$\matset{U}=\{\mat{X}\in\field{q}^{m\times n};\mat{BX}=\mat{0}\}$ is a
right submodule of $\field{q}^{m\times n}$ of size
$|\matset{U}|=q^{(m-k)n}$, and we have $\matset{U}\neq\{\mat{0}\}$
provided that $1\leq k\leq m-1$. Hence in order to complete the proof,
it suffices to show that $\nrhom$ not only satisfies the analogue of
(H3) but the stronger
\begin{equation}
  \label{thm:hom:proof}
  \sum_{\mat{X}\in\matset{U}+\mat{A}}\nrhom(\mat{X})
  =q^{-mn}|\matset{U}+\mat{A}|=q^{-kn}
\end{equation}
for every coset $\matset{U}+\mat{A}$ of every (cyclic) right submodule
$\matset{U}\neq\{\mat{0}\}$ of $\field{q}^{m\times n}$; in other
words, that $\nrhom$ induces the uniform distribution on the
coset space of every nonzero right submodule of $\field{q}^{m\times n}$.

For the proof of this fact we proceed as follows: Given $\matset{U}$,
we define a weight $w$ on the quotient module
$M=\field{q}^{m\times n}/\matset{U}$ by
$w(\matset{U}+\mat{A})=\sum_{\mat{X}\in\matset{U}+\mat{A}}\nrhom(\mat{X})$.
It is a standard result from ring theory that
$M\cong\field{q}^{k\times  n}$ as right $R_n$-modules. Now we show that
$w$ satisfies the analogues of (H2), (H3), i.e.\
  $w(x)=w(x\mat{U})$ for all $x\in M$, $\mat{U}\in R_n^\times$ and
  $\sum_{x\in N}w(x)=q^{-kn}|N|$ for all (cyclic) submodules $N\neq\{0\}$
  of $M$, and in place of
(H1) the equation $w(0)=q^{-kn}$. Then we appeal to
\cite[Prop.~5]{st:homippi}, which says that the space of all functions
$f\colon M\to\mathbb{R}$ satisfying the
analogues of (H2), (H3) is generated by $\nrhom\colon M\to\mathbb{R}$
and the uniform distribution $M\to\mathbb{R}$,
$\matset{U}+\mat{A}\mapsto q^{-kn}$. So there exist
$\alpha,\beta\in\mathbb{R}$ such that $w(x)=\alpha\nrhom(x)+\beta
q^{-kn}$ for all $x=\matset{U}+\mat{A}\in M$. Setting $x=0$ gives
$\beta=1$ (since $\nrhom(0)=0$). This in turn implies $\alpha=0$, 
since both $w$ and $\nrhom$ are probability distributions. Thus
$w(x)=q^{-kn}$ for all $x\in M$ and we are done.
\end{proof}
\begin{example}
  First we consider the case of binary $2\times 3$ matrices. The space
  $\field{2}^{2\times 3}$ contains $21$ matrices of rank $1$ (parametrized as
  $\seq{u}^T\seq{v}$ with $\seq{u}\in\field{2}^2\setminus\{\seq{0}\}$,
   $\seq{v}\in\field{2}^3\setminus\{\seq{0}\}$) and $42$ matrices of
  rank $2$. The normalized left and
  right homogeneous weights $\nlhom$, $\nrhom$ on $\field{2}^{2\times 3}$ are
  given by the following tables:
  \begin{equation}
    \label{eq:2x3}
    % \begin{array}{c|ccc}
    %   \rank(\mat{X})&0&1&2\\\hline
    %   \#\mat{X}&0&21&42
    % \end{array}\qquad
    \begin{array}{c|ccc}
      \rank(\mat{X})&0&1&2\\\hline
      \nlhom(\mat{X})&0&\frac{1}{42}&\frac{1}{84}
    \end{array}\qquad
    \begin{array}{c|ccc}
      \rank(\mat{X})&0&1&2\\\hline
      \nrhom(\mat{X})&0&\frac{1}{56}&\frac{5}{336}
    \end{array}
  \end{equation}
The weight $\nlhom$ is a probability distribution on
$\field{2}^{2\times 3}$ and satisfies (H1), (H2). However, since
$|\matset{U}|^{-1}\sum_{\mat{X}\in\matset{U}}\nlhom(\mat{X})=\frac{1}{56}$
 for all submodules
$\matset{U}\neq\{\mat{0}\}$ and $\matset{U}\neq\field{2}^{2\times
  3}$, the weight $\nlhom$ cannot yield a $1$-good random
$2\times 3$ matrix over $\field{2}$.

On the other hand, the weight $\nrhom$ defines, 
%satisfies Axioms (H1)--(H3), and 
by Th.~\ref{thm:hom},
a $1$-good random matrix $\rmat{A}\in\field{2}^{2\times
  3}$. This means that every coset of a right submodule $\matset{U}$
of $\field{2}^{2\times 3}$ of size $|\matset{U}|=8$ (which is one of
the modules $\matset{U}_1$, $\matset{U}_2$, $\matset{U}_3$
corresponding to column spaces generated by $\left(
  \begin{smallmatrix}
    1\\0
  \end{smallmatrix}\right)$,
 $\left(
  \begin{smallmatrix}
    0\\1
  \end{smallmatrix}\right)$,
 $\left(
  \begin{smallmatrix}
    1\\1
  \end{smallmatrix}\right)$, respectively)
has total weight $1/8$. For the submodules $\matset{U}_i$ this is
obvious, since they contain the all-zero $2\times 3$ matrix
and $7$ matrices of rank $1$ and weight $1/56$. For the remaining cosets
$\matset{U}_i+\mat{A}$ with $\mat{A}\notin\matset{U}_i$ it implies that
each such coset contains $2$ matrices of rank $1$ and $6$ matrices of
rank $2$.

Then we consider the case of binary $3\times 2$ matrices. By
Th.~\ref{thm:hom} (or direct application of
Th.~\ref{th:KGoodRandomMatrixDuality}) the weight
$\nlhom\colon\field{2}^{3\times 2}\to\mathbb{R}$, which is the
transpose of $\nrhom\colon\field{2}^{2\times 3}\to\mathbb{R}$, defines
a $1$-good random matrix on $\field{2}^{3\times 2}$. Since
$1$-goodness refers to the right module structure of
$\field{2}^{3\times 2}$, this provides additional insight into the
combinatorics of the rank function; namely, one may infer the
following rank distributions of the cosets of right submodules
$\matset{U}$ of $\field{2}^{3\times 2}$ of size
$|\matset{U}|=16$. Each such submodule contains $1$ matrix of rank
$0$, $9$ matrices of rank $1$ and $6$ matrices of rank $2$. Each of
the three cosets $\matset{U}+\mat{A}$ with $\mat{A}\notin\matset{U}$
contains $4$ matrices of rank $1$ and $12$ matrices of rank $2$.
\end{example}

\section{Applications of $k$-Good Random 
Matrices}\label{sec:applic}

As shown at the beginning of this paper, $1$-good random matrices are
of fundamental importance in information theory. Quite naturally one may
wonder whether there are any applications of $k$-good random
matrices with $k>1$. In this section we
present %some examples to show 
such applications 
%of $k$-good random matrices 
and their relations to some well-known combinatorial
problems. Some of the proofs are easy and hence left to the reader.

The following result is a direct consequence of
  Def.~\ref{df:KGoodRandomMatrix} and the fact that distinct
  nonzero vectors in a binary vector space are linearly independent.

\begin{proposition}\label{pr:Application1}
  Let $\rmat{A}$ be a $2$-good random matrix over $\field{2}$. The
  random mapping $F: \field{2}^m\setminus\{\seq{0}\} \to \field{2}^n$
  given by $\seq{u} \mapsto \seq{u}\rmat{A}$ satisfies the
  pairwise-independence condition:
\[
P\{F(\seq{u}_1) = \seq{v}_1, F(\seq{u}_2) = \seq{v}_2\} =
P\{F(\seq{u}_1) = \seq{v}_1\} P\{F(\seq{u}_2) = \seq{v}_2\} = 2^{-2n},
\]
where $\seq{u}_1, \seq{u}_2 \in \field{2}^m \setminus\{\seq{0}\}$,
$\seq{v}_1, \seq{v}_2 \in \field{2}^n$, and $\seq{u}_1 \ne \seq{u}_2$.
\end{proposition}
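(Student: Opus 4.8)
The statement to prove, Proposition~\ref{pr:Application1}, asserts a pairwise-independence property of the random map $\seq{u}\mapsto\seq{u}\rmat{A}$ when $\rmat{A}$ is a $2$-good random matrix over $\field{2}$. The plan is to reduce the whole statement directly to the definition of $2$-goodness (Def.~\ref{df:KGoodRandomMatrix}), using the special feature of the binary field that any two distinct nonzero vectors are automatically linearly independent. Concretely, I would first observe that for $\seq{u}_1,\seq{u}_2\in\field{2}^m\setminus\{\seq{0}\}$ with $\seq{u}_1\neq\seq{u}_2$, the pair $\{\seq{u}_1,\seq{u}_2\}$ is linearly independent: over $\field{2}$ the only nontrivial linear combination is $\seq{u}_1+\seq{u}_2$, which is nonzero precisely because $\seq{u}_1\neq\seq{u}_2$. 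Therefore the $2\times m$ matrix $\mat{M}$ having rows $\seq{u}_1,\seq{u}_2$ has $\rank(\mat{M})=2$, i.e.\ it is a full-rank matrix in $\field{2}^{2\times m}$.

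Next I would invoke the equivalent reformulation of Def.~\ref{df:KGoodRandomMatrix} given in the text immediately following it, which states that a $k$-good random matrix satisfies
\[
P\{\seq{u}_1\rmat{A}=\seq{v}_1,\dots,\seq{u}_k\rmat{A}=\seq{v}_k\}=q^{-kn}
\]
for any $k$ linearly independent vectors $\seq{u}_1,\dots,\seq{u}_k$ and any vectors $\seq{v}_1,\dots,\seq{v}_k$. Applying this with $k=2$, $q=2$, and the linearly independent pair $\seq{u}_1,\seq{u}_2$ identified in the previous step yields
\[
P\{F(\seq{u}_1)=\seq{v}_1,\,F(\seq{u}_2)=\seq{v}_2\}
=P\{\seq{u}_1\rmat{A}=\seq{v}_1,\,\seq{u}_2\rmat{A}=\seq{v}_2\}=2^{-2n},
\]
which is the joint probability claimed.

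Finally I would supply the factorization by computing the marginals. Since a $2$-good matrix is in particular $1$-good (by the inclusion chain $\grmat_2\subseteq\grmat_1=\grmat$ noted in the text), and $1$-goodness is exactly the defining property~\eqref{eq:StrongSCCGood}, each single vector $\seq{u}_i\neq\seq{0}$ gives $P\{F(\seq{u}_i)=\seq{v}_i\}=2^{-n}$. Hence the product of the marginals equals $2^{-n}\cdot 2^{-n}=2^{-2n}$, matching the joint probability and establishing both the independence and the explicit value. There is essentially no obstacle here: the entire content is the binary-field observation that distinct nonzero vectors are linearly independent, which lets the general-position hypothesis in the definition of $2$-goodness be dropped. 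The only point requiring a word of care is to state clearly why $\seq{u}_1\neq\seq{u}_2$ (rather than mere nonzeroness) is what guarantees independence over $\field{2}$, so that the full-rank condition of Def.~\ref{df:KGoodRandomMatrix} is genuinely met.
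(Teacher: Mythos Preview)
Your proposal is correct and follows exactly the route the paper indicates: the paper states that the proposition ``is a direct consequence of Def.~\ref{df:KGoodRandomMatrix} and the fact that distinct nonzero vectors in a binary vector space are linearly independent,'' and you have simply spelled out those two ingredients and the marginal computation via $\grmat_2\subseteq\grmat_1$.
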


Pairwise-independence of codewords is a prerequisite in the
direct-part proof of the coding theorem for lossless JSCC (also
including the case of channel coding), so
Prop.~\ref{pr:Application1} indeed provides an alternative
random coding scheme for optimal lossless JSCC
(cf. \cite[Sec. III-C]{Yang200904}).

\begin{definition}[see \cite{Cohen199411}]
  A set of vectors in $\field{q}^n$ is said to be \emph{intersecting}
  if there is at least one position $i$ such that all their $i$-th
  components are nonzero. A linear code is said to be \emph{$k$-wise
    intersecting} if any $k$ linearly independent codewords
  intersect.
\end{definition}

\begin{proposition}\label{pr:Application2}
  Let $\rmat{A}$ be a $k$-good random $m\times n$ matrix. Define the
  random set $\rset{C} \eqdef \{\seq{u}\rmat{A}: \seq{u} \in
  \field{q}^m\}$. Then we have
\[
P\{\text{$\rset{C}$ is not $k$-wise intersecting}\} \le
\left(1-(1-q^{-1})^k\right)^n \prod_{i=0}^{k-1} (q^m-q^i).
\]
Asymptotically, as $m,n$ go to infinity with $m/n = r$,
\[
P\{\text{$\rset{C}$ is not $k$-wise intersecting}\} =
\uorder\left(q^{n\left[(r-1)k + \log_q(q^k - (q-1)^k)\right]}\right),
\]
so that whenever
\begin{equation}\label{eq:IntersectingCodeBound}
  r < 1 - \frac{1}{k} \log_q(q^k - (q-1)^k),\qquad
  (\text{cf. \cite[Th.~3.2]{Cohen199411}})
\end{equation}
$\rset{C}$ is $k$-wise intersecting with high probability that
converges to $1$ as $n \to \infty$. 
\end{proposition}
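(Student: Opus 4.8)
The plan is to bound the probability that $\rset{C}$ fails to be $k$-wise intersecting by a union bound over all choices of $k$ linearly independent codewords. First I would fix a set of $k$ linearly independent vectors $\seq{u}_1,\dots,\seq{u}_k\in\field{q}^m$ and estimate the probability that the corresponding codewords $\seq{u}_1\rmat{A},\dots,\seq{u}_k\rmat{A}$ fail to intersect, i.e.\ that in no coordinate position $i\in\{1,\dots,n\}$ are all $k$ of the $i$-th components simultaneously nonzero. Because $\rmat{A}$ is $k$-good, the random matrix whose rows are $\seq{u}_1\rmat{A},\dots,\seq{u}_k\rmat{A}$ (equivalently $\mat{M}\rmat{A}$ for the full-rank $\mat{M}\in\field{q}^{k\times m}$ with these rows) is uniformly distributed over $\field{q}^{k\times n}$. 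Hence its $n$ columns are i.i.d.\ uniform over $\field{q}^k$, and a fixed column has all entries nonzero with probability $(1-q^{-1})^k$, so it fails to be all-nonzero with probability $1-(1-q^{-1})^k$. By independence across the $n$ columns, the $k$ chosen codewords fail to intersect with probability exactly $\bigl(1-(1-q^{-1})^k\bigr)^n$.

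Next I would apply the union bound. The number of ordered $k$-tuples of linearly independent vectors in $\field{q}^m$ is $\prod_{i=0}^{k-1}(q^m-q^i)$, and an unordered $k$-wise intersecting failure is witnessed by at least one such tuple, so summing the per-tuple probability over all tuples gives
\[
P\{\text{$\rset{C}$ is not $k$-wise intersecting}\}\le
\bigl(1-(1-q^{-1})^k\bigr)^n\prod_{i=0}^{k-1}(q^m-q^i),
\]
which is the first displayed bound. (Using ordered tuples only inflates the count, so the inequality is safe; one could divide by the order of $\mathrm{GL}_k$ to be sharper, but it is unnecessary.)

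For the asymptotic statement I would take logarithms base $q$ of the right-hand side. The factor $\prod_{i=0}^{k-1}(q^m-q^i)$ is $\uorder(q^{km})=\uorder(q^{krn})$ as $m=rn$, contributing $krn$ to the exponent, and $\bigl(1-(1-q^{-1})^k\bigr)^n$ contributes $n\log_q\bigl(1-(1-q^{-1})^k\bigr)$. Writing $1-(1-q^{-1})^k=q^{-k}\bigl(q^k-(q-1)^k\bigr)$, this last term becomes $n\bigl(-k+\log_q(q^k-(q-1)^k)\bigr)$. Collecting the coefficients of $n$ in the exponent yields $(r-1)k+\log_q(q^k-(q-1)^k)$, which matches the stated $\uorder\bigl(q^{n[(r-1)k+\log_q(q^k-(q-1)^k)]}\bigr)$. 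Finally, the probability tends to $0$ (so that $\rset{C}$ is $k$-wise intersecting with probability approaching $1$) precisely when this exponent is negative, i.e.\ $(r-1)k+\log_q(q^k-(q-1)^k)<0$, which rearranges to the threshold $r<1-\frac{1}{k}\log_q(q^k-(q-1)^k)$.

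None of the steps is a genuine obstacle; the only point requiring care is the algebraic manipulation of $\log_q\bigl(1-(1-q^{-1})^k\bigr)$ into the form appearing in the exponent, and the bookkeeping that the $q^{-k}$ from the rescaling cancels neatly against the $krn$ arising from the polynomial factor. The conceptual heart is simply recognizing that $k$-goodness makes the $k$ selected codewords jointly uniform, reducing the intersection question to a trivial i.i.d.\ column computation; everything else is the union bound and elementary asymptotics.
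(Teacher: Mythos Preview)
Your proof is correct and follows essentially the same approach as the paper's sketch: the paper phrases the argument as computing the expected number of linearly independent $k$-tuples $(\seq{u}_i)$ in $\field{q}^m$ whose images $(\seq{u}_i\rmat{A})$ fail to intersect and then applies Markov's inequality, which is exactly your union bound written in expectation form. Your explicit per-tuple computation using the uniform distribution of $\mat{M}\rmat{A}$ and i.i.d.\ columns, together with the exponent bookkeeping, fills in the details the paper leaves implicit.
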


\begin{proof}[Sketch of Proof]
Note that the expectation
\[
\begin{split}
E\Big[\Big|\Big\{(\seq{v}_i)_{i=1}^k \in (\field{q}^n)^k:{}
&\text{$\seq{v}_i \in \rset{C}$ for $i=1,2,\ldots, k$ and
  $(\seq{v}_i)_{i=1}^k$ are}\\
&\text{linearly independent and not intersecting}\Big\}\Big|\Big]
\end{split}
\]
is slightly overestimated by
\[
\begin{split}
E\Big[\Big|\Big\{(\seq{u}_i)_{i=1}^k \in (\field{q}^m)^k:{}
&\text{$(\seq{u}_i)_{i=1}^k$ are linearly independent and}\\
&\text{$(\seq{u}_i\rmat{A})_{i=1}^k$ are not intersecting}\Big\}\Big|\Big].
\end{split}
\]
The latter can be easily computed by the property of a $k$-good random
matrix. The proof is completed by applying Markov's inequality. 
\end{proof}

Proposition~\ref{pr:Application2} shows that a $k$-good random matrix
can achieve the asymptotic (random coding) lower bound \eqref{eq:IntersectingCodeBound}
of maximum rate of linear $k$-wise intersecting codes. Recall that linear $k$-wise
intersecting codes has a close relation to many problems in
combinatorics, such as separating systems \cite{Bose198007,
  Pradhan197609}, qualitative independence \cite{Cohen199411},
frameproof codes \cite{Cohen200005}, etc.

In general, many problems about sequences can be formulated as follows:

\begin{definition}[cf. \cite{Korner199500}]
Suppose $k \ge 2$. For a $k$-tuple $(\seq{v}_i)_{i=1}^k$ of vectors in
$\field{q}^n$, we define the set 
\[
W((\seq{v}_i)_{i=1}^k) \eqdef \{(v_{i,j})_{i=1}^k: j=1,2,\ldots, n\},
\]
where $v_{i,j}$ denotes the $j$-th component of
$\seq{v}_i$.\footnote{Thus
$W((\seq{v}_i)_{i=1}^k)\subseteq\field{q}^k$ is the set of columns of
the matrix with rows $\seq{v}_1,\dots,\seq{v}_k$ (in that order).}
Let $\frak{F}$ be a family of subsets of $\field{q}^k$.
A set $C \subseteq \field{q}^n$ is called an $\frak{F}$-set if
\[
W((\seq{v}_i)_{i=1}^k) \cap S \ne \varnothing \qquad \text{for any
  $k$ distinct $\seq{v}_1, \seq{v}_2, \ldots,
  \seq{v}_k \in C$}
\]
and every $S \in \frak{F}$. The maximum number of elements in an
$\frak{F}$-subset of $\field{q}^n$ is denoted by $N(\frak{F},n)$. 
\end{definition}

For example, a \emph{$(2,1)$-separating system} (see
e.g. \cite{Korner199500}) is an $\frak{F}$-subset of $\field{2}^n$
with
\[
\frak{F} = \{\{(0,0,1), (1,1,0)\}, \{(0,1,0), (1,0,1)\}, \{(1,0,0), (0,1,1)\}\},
\]
and a \emph{$k$-independent family} (see e.g. \cite{Cohen199411, Kleitman197300})
is an $\frak{F}$-subset of $\field{q}^n$ with $\frak{F} =
\{\{\seq{v}\}: \seq{v} \in \field{q}^k\}$. A general asymptotic lower
bound for $N(\frak{F},n)$ is given by the next proposition, which is a
simple extension of the idea in \cite{Korner199500}.

\begin{proposition}\label{pr:GeneralRandomCodingBound}
  Let $\rset{C}(M_n, k) = (\rseq{v}_i)_{i=1}^{M_n}$ be a sequence
  of $M_n$ random vectors in $\field{q}^n$ such that each $\rseq{v}_i$
  is uniformly distributed over $\field{q}^n$ and any $k$ random
  vectors $\rseq{v}_{i_1},\rseq{v}_{i_2},\dots,\rseq{v}_{i_k}$,
  $1\leq i_1<i_2<\dots<i_k\leq M_n$, are independent. Then for any
  given family $\frak{F}$ of subsets of $\field{q}^k$, if
\begin{equation}\label{eq:GeneralRandomCodingBound}
  \limsup_{n\to\infty} \frac{1}{n} \log_q M_n <
  \min\left\{\frac{1}{k-1} 
  \left(k - \log_q \left(q^k - \min_{S \in \frak{F}} |S|\right) \right), 1\right\},
\end{equation}
we can extract from $\rset{C}(M_n, k)$ a random $\frak{F}$-set of size
$\tilde{M}_n$ such that
\[
\lim_{n\to\infty} \frac{\tilde{M}_n}{M_n} = 1 \qquad \text{almost surely}.
\]
\end{proposition}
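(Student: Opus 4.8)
The plan is to run a random-coding-with-expurgation argument in the spirit of \cite{Korner199500}: starting from the $M_n$ random vectors, I delete a vanishing fraction of them so that the survivors form an $\mathfrak{F}$-set. Call an ordered $k$-tuple of distinct indices $(i_1,\dots,i_k)$ \emph{bad} if $\rseq{v}_{i_1},\dots,\rseq{v}_{i_k}$ are distinct vectors for which $W((\rseq{v}_{i_j})_{j=1}^k)\cap S=\varnothing$ for some $S\in\mathfrak{F}$, and call an unordered pair $\{i,j\}$ a \emph{collision} if $\rseq{v}_i=\rseq{v}_j$. Let $B_n$ and $C_n$ denote the numbers of bad tuples and collisions. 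Removing one index from each bad tuple and one index from each collision destroys every conflict, so the surviving vectors are pairwise distinct and (the $\mathfrak{F}$-property being monotone under taking subsets) form an $\mathfrak{F}$-set of size $\tilde M_n\geq M_n-B_n-C_n$. It therefore suffices to show $(B_n+C_n)/M_n\to 0$ almost surely.

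The core step is the computation of $E[B_n]$. Write $R_n\eqdef\frac1n\log_q M_n$ and $s\eqdef\min_{S\in\mathfrak{F}}|S|$. Fix distinct indices $i_1,\dots,i_k$; by the $k$-wise independence and uniformity hypothesis the matrix with rows $\rseq{v}_{i_1},\dots,\rseq{v}_{i_k}$ has all $kn$ entries i.i.d.\ uniform over $\field{q}$, so its $n$ columns are i.i.d.\ uniform over $\field{q}^k$. For a fixed $S$ the event $W\cap S=\varnothing$ says that every column avoids $S$, which has probability $\bigl((q^k-|S|)/q^k\bigr)^n$. A union bound over $\mathfrak{F}$ and over the at most $M_n^k$ ordered $k$-tuples then gives
\[
E[B_n]\leq M_n^k\,|\mathfrak{F}|\left(\frac{q^k-s}{q^k}\right)^n
=|\mathfrak{F}|\,q^{\,n\left[kR_n+\log_q(q^k-s)-k\right]},
\]
so that $E[B_n]/M_n=\uorder\bigl(q^{\,n[(k-1)R_n+\log_q(q^k-s)-k]}\bigr)$, which decays exponentially precisely when $R_n<\frac{1}{k-1}\bigl(k-\log_q(q^k-s)\bigr)$, the first term in the minimum of \eqref{eq:GeneralRandomCodingBound}. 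For the collisions, any two of the vectors are independent and uniform, so $E[C_n]\leq\binom{M_n}{2}q^{-n}\leq q^{\,n(2R_n-1)}$, giving $E[C_n]/M_n=\uorder(q^{\,n(R_n-1)})$, which decays exponentially exactly when $R_n<1$, the second term of the minimum. Hence the hypothesis \eqref{eq:GeneralRandomCodingBound} on $\limsup_n R_n$ yields a $\delta>0$ with $E[B_n+C_n]/M_n=\uorder(q^{-\delta n})$.

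Finally I upgrade this to almost-sure convergence via Borel--Cantelli. For each fixed $\ep>0$, Markov's inequality gives $P\{(B_n+C_n)/M_n>\ep\}\leq E[B_n+C_n]/(\ep M_n)=\uorder(q^{-\delta n}/\ep)$, and these probabilities are summable in $n$; hence almost surely $(B_n+C_n)/M_n\leq\ep$ for all large $n$. Intersecting the resulting full-measure events over a sequence $\ep\downarrow 0$ shows $(B_n+C_n)/M_n\to 0$ almost surely, whence $\tilde M_n/M_n\to 1$ almost surely. The main obstacle I anticipate is bookkeeping rather than conceptual: one must verify that the two decay exponents line up with the two terms of the minimum exactly as above, and that deleting one index per conflict genuinely leaves a legitimate $\mathfrak{F}$-set; the Borel--Cantelli step in turn requires the decay to be strictly exponential, which is why the hypothesis is a strict inequality on the $\limsup$.
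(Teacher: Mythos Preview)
Your proposal is correct and follows essentially the same expurgation argument as the paper: bound the expected number of bad $k$-tuples and collisions via $k$-wise independence, apply Markov's inequality, and finish with Borel--Cantelli. The only cosmetic differences are that the paper removes \emph{all} indices appearing in an undesirable tuple or pair (bounding the loss by $k|\rset{U}|+2|\rset{U}'|$ rather than your $B_n+C_n$) and inserts an explicit $n^{-2}$ factor to make the probabilities summable, whereas you rely directly on the exponential decay; neither difference affects the substance of the argument.
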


\begin{proof}
We call a $k$-tuple $\seq{i}=(i_1,\dots,i_k)\in\{1, 2,\dots,M_n\}^k$
``undesirable'' if $i_1$, $i_2$, \ldots, $i_k$ are distinct and
$W\bigl((\rseq{v}_{i_j})_{j=1}^k\bigr) \cap S = \varnothing$ for some
$S\in \frak{F}$, and denote by $\rset{U}$ the random set consisting of all
undesirable $k$-tuples.
% \[
% \begin{split}
% \rset{U} = \{(i_j)_{j=1}^k \in \{1, 2,\ldots, M_n\}^k:{} &\text{$i_1,
%   i_2, \ldots, i_k$ are distinct and}\\  
% &W((\rseq{v}_{i_j})_{j=1}^k) \cap S = \varnothing \text{ for some $S
%   \in \frak{F}$}\}.
% \end{split}
% \]
For $\seq{i}\in\{1, 2,\dots,M_n\}^k$ with distinct components
$i_1,i_2,\dots,i_k$ and 
$S\in\frak{F}$ we have $P\{W((\rseq{v}_{i_j})_{j=1}^k) \cap S =
  \varnothing\}=(1-|S|/q^k)^n$, since
$(\rseq{v}_{i_1},\rseq{v}_{i_2},\dots,\rseq{v}_{i_k})$ is uniformly
distributed over $(\field{q}^n)^k$. Hence the expected number of
undesirable $k$-tuples satisfies
 
%It follows from the property of $\rset{C}(M_n, k)$ that
\[
E[|\rset{U}|] \le M_n^k \sum_{S\in\frak{F}} \left(1-\frac{|S|}{q^k}\right)^n
\le |\frak{F}| M_n^k \left(1-\frac{\min_{S\in\frak{F}} |S|}{q^k}\right)^n.
\]
Combined with condition~\eqref{eq:GeneralRandomCodingBound}, this gives
\[
E[|\rset{U}|] \le \frac{\alpha M_n}{2kn^2} \qquad \text{for any constant $\alpha \in (0,1)$ and all $n \ge n_0(\alpha, k)$},
\]
so that $P\{|\rset{U}| \ge \alpha M_n/(2k)\} \le n^{-2}$ for all $n
\ge n_0(\alpha, k)$. 
%Taking $\frak{F} = \{\field{q}^2 \setminus \{(v, v) \in \field{q}^2: v \in
%\field{q}\}\}$, 

Next we apply the analogous reasoning with $k'=2$ and the set family 
$\frak{F}'$ having $\bigl\{(u,v)\in\field{q}^2;u\neq v\bigr\}$ as
its single member (a family of subsets of $\field{q}^2$).
Here the random set of ``undesirable pairs'' is
\[
\rset{U}' \eqdef \{(i_1, i_2) \in \{1, 2, \ldots, M_n\}^2; i_1 \ne i_2,
\rseq{v}_{i_1} = \rseq{v}_{i_2}\}.
\]
Using the second inequality $\limsup_{n\to\infty} n^{-1} \log_q
M_n <1$ from \eqref{eq:GeneralRandomCodingBound}, we obtain similarly
$P\{|\rset{U}'| \ge \alpha M_n/4\} \le n^{-2}$ for all $n \ge
n_1(\alpha)$.

Now we remove from $\rset{C}(M_n,k)$ all components which appear in at
least one undesirable $k$-tuple or pair. The resulting sequence
has distinct components and constitutes a random $\frak{F}$-set of 
cardinality $\tilde{M}_n\geq M_n-k|\tilde{U}|-2|\tilde{U}'|$. This gives
\[
P\left\{\frac{\tilde{M}_n}{M_n} \le 1 - \alpha\right\} \le
\frac{2}{n^2} \qquad \text{for all $n \ge \max\{n_0(\alpha, k),
  n_1(\alpha)\}$},
\]
and 
together with the Borel-Cantelli lemma shows that
\[
\liminf_{n\to\infty} \frac{\tilde{M}_n}{M_n} \ge 1 - \alpha \qquad
\text{almost surely}. 
\]
Taking $\alpha = l^{-1}$ with $l = 1,2,\ldots$, we finally obtain
\begin{align*}
P\left\{\lim_{n\to\infty} \frac{\tilde{M}_n}{M_n} = 1\right\}
&= P\left\{\bigcap_{l=1}^\infty \left\{\liminf_{n\to\infty} \frac{\tilde{M}_n}{M_n} \ge 1-\frac{1}{l}\right\}\right\} \\
&= 1 - P\left\{\bigcup_{l=1}^\infty \left\{\liminf_{n\to\infty} \frac{\tilde{M}_n}{M_n} < 1-\frac{1}{l}\right\}\right\} \\
&= 1 - \lim_{l\to\infty} P\left\{\liminf_{n\to\infty} \frac{\tilde{M}_n}{M_n} < 1-\frac{1}{l}\right\} \\
&= 1,
\end{align*}
as desired.
\end{proof}

Proposition~\ref{pr:GeneralRandomCodingBound} tells us that the
asymptotic lower bound \eqref{eq:GeneralRandomCodingBound} for
$N(\frak{F},n)$ can be achieved by a family of special sequences of
random vectors, which may be called \emph{$k$-independent sequences
  (of random vectors)}. The next two propositions provide some ways
for generating a $k$- or $(k+1)$-independent sequence based on a
$k$-good random matrix.

\begin{proposition}\label{pr:Application3A}
  Let $\rmat{A}$ be a $k$-good random $m\times n$ matrix and
  $U=\{\seq{u}_i\}_{i=1}^{M}$ a set of $M$ vectors in $\field{q}^m$
  such that any $k$ of them are linearly independent. Then the random
  mapping $F(i): \{1,2,\ldots,M\} \to \field{q}^n$ given by $i \mapsto
  \seq{u}_i \rmat{A}$ satisfies
\[
P\{F(i_1) = \seq{v}_1, F(i_2) = \seq{v}_2, \ldots, F(i_k) =
\seq{v}_k\} = \prod_{j=1}^k P\{F(i_j) = \seq{v}_j\} = q^{-kn},
\]
where $1\le i_1 < i_2 < \cdots < i_k \le M$, $\seq{v}_1, \seq{v}_2,
\ldots, \seq{v}_k \in \field{q}^n$.
\end{proposition}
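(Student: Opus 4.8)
The plan is to obtain this proposition as a direct corollary of the definition of a $k$-good random matrix (Def.~\ref{df:KGoodRandomMatrix}), using the equivalent formulation — stated immediately after that definition — in terms of $k$ linearly independent vectors. The whole content of the proof is to recognize that the hypothesis on $U$ furnishes exactly the full-rank matrix to which $k$-goodness applies.

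First I would fix indices $1 \le i_1 < i_2 < \cdots < i_k \le M$ together with target vectors $\seq{v}_1,\dots,\seq{v}_k \in \field{q}^n$. The key observation is that, by the hypothesis on $U$, the $k$ chosen vectors $\seq{u}_{i_1},\dots,\seq{u}_{i_k}$ are linearly independent. I would then assemble them as the rows of a matrix $\mat{M} \in \field{q}^{k\times m}$, taking the $j$-th row of $\mat{M}$ to be $\seq{u}_{i_j}$. Since its rows are linearly independent, $\rank(\mat{M}) = k$, so $\mat{M}$ is full-rank, and the $j$-th row of the product $\mat{M}\rmat{A}$ is precisely $\seq{u}_{i_j}\rmat{A} = F(i_j)$.

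By $k$-goodness, $\mat{M}\rmat{A}$ is uniformly distributed over $\field{q}^{k\times n}$. The one substantive point is then the identification of the uniform distribution on the matrix space $\field{q}^{k\times n}$ with the product of $k$ independent row-wise uniform distributions on $\field{q}^n$: this is what converts the single statement ``$\mat{M}\rmat{A}$ is uniform'' into the joint-independence statement about the vectors $F(i_j)$. Consequently $F(i_1),\dots,F(i_k)$ are independent and each uniformly distributed over $\field{q}^n$, which yields at once
\[
P\{F(i_1) = \seq{v}_1, \ldots, F(i_k) = \seq{v}_k\} = \prod_{j=1}^k P\{F(i_j) = \seq{v}_j\} = q^{-kn},
\]
as claimed.

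I do not expect any genuine obstacle here, as the proposition is essentially a restatement of Def.~\ref{df:KGoodRandomMatrix}. The only things worth spelling out are the two identifications above: that the assumption ``any $k$ of the $\seq{u}_i$ are linearly independent'' is exactly what guarantees $\rank(\mat{M})=k$ so that $k$-goodness is applicable, and that uniformity of $\mat{M}\rmat{A}$ on $\field{q}^{k\times n}$ is equivalent to row-wise independence and uniformity. Both are routine, so the proof can be kept very short.
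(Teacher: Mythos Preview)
Your proof is correct and matches the paper's own argument essentially verbatim: form the $k\times m$ matrix $\mat{M}$ with rows $\seq{u}_{i_1},\dots,\seq{u}_{i_k}$, observe it has full rank by hypothesis, and invoke Def.~\ref{df:KGoodRandomMatrix}. The paper's proof is a single sentence to this effect, so your slightly more detailed unpacking of the row-wise uniformity is fine but could be trimmed.
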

\begin{proof}
  By assumption the $k\times m$ matrix with rows
  $\seq{u}_{i_1},\dots,\seq{u}_{i_k}$ has full rank, and the result
  follows immediately from Def.~\ref{df:KGoodRandomMatrix}.
\end{proof}
Finding a set of $M$ vectors in $\field{q}^m$ such that any $k$ of
them are linearly independent is equivalent to finding an $m\times M$
parity-check matrix of a 
% $(M, M-m)$ linear code of
% minimum distance at least $k+1$,
% because any $k$ columns of that matrix are linearly independent. 
$q$-ary linear $[M,K,d]$ code with $K\geq M-m$ and $d\geq k+1$.
% \rrem{[By Shengtian] Here is a conflict of notations. The first and
%   second $k$ have different meaning. Use, for example, $k'$, for the
%   dimension of the linear code.}
It is thus an instance of the packing problem of algebraic coding theory;
see \cite{hirschfeld-storme98,hirschfeld-storme01,ivan-storme11}.\footnote{In
  \cite{hirschfeld-storme98,hirschfeld-storme01} the largest possible size 
  of $U$ (equivalently, the largest number of points in the projective space
  $\PG(m-1,\field{q})$ having the property that any $k$ of them are in
  general position) is denoted by $M_k(m-1,q)$.}
Also note that Prop.~\ref{pr:Application3A} includes
Prop.~\ref{pr:Application1} as a special case.

\begin{proposition}\label{pr:Application3B}
  Let $\rmat{A}$ be a $k$-good random $m\times n$ matrix and
  $\rseq{v}$ a random vector independent of $\rmat{A}$ and
  uniformly distributed over 
  $\field{q}^n$. Let $U=\{\seq{u}_i\}_{i=1}^{M}$ be a set of $M$
  vectors in $\field{q}^m$ such that any $k+1$ of them, as points of
  the affine space $\AG(m,\field{q})$, do not lie on any $(k-1)$-flat of
  $\AG(m,\field{q})$.
  % where $\AG(m,\field{q})$ denotes the
  % $m$-dimensional affine space over $\field{q}$. 
  Then the random
  mapping $F(i): \{1,2,\ldots,M\} \to \field{q}^n$ given by $i \mapsto
  \seq{u}_i \rmat{A} + \rseq{v}$ satisfies
\[
P\{F(i_1) = \seq{v}_1, F(i_2) = \seq{v}_2, \ldots, F(i_{k+1}) =
\seq{v}_{k+1}\} = \prod_{j=1}^{k+1} P\{F(i_j) = \seq{v}_j\} =
q^{-(k+1)n},
\]
where $1\le i_1 < i_2 < \cdots < i_{k+1} \le M$, $\seq{v}_1,
\seq{v}_2, \ldots, \seq{v}_{k+1} \in \field{q}^n$.
\end{proposition}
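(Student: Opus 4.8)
The plan is to reduce Proposition~\ref{pr:Application3B} to Proposition~\ref{pr:Application3A} by augmenting the vectors $\seq{u}_i$ with an extra coordinate that absorbs the random shift $\rseq{v}$, and correspondingly augmenting the random matrix $\rmat{A}$ with an extra row. Concretely, I would form the $(m+1)\times n$ random matrix $\rmat{A}'$ whose first $m$ rows are those of $\rmat{A}$ and whose last row is $\rseq{v}$, and replace each $\seq{u}_i\in\field{q}^m$ by the homogenized vector $\seq{u}_i'=(\seq{u}_i\mid 1)\in\field{q}^{m+1}$. Then $\seq{u}_i'\rmat{A}'=\seq{u}_i\rmat{A}+\rseq{v}=F(i)$, so the maps agree. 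The whole point is that the affine condition on the $\seq{u}_i$ in $\AG(m,\field{q})$ translates into a linear-independence condition on the $\seq{u}_i'$ in $\field{q}^{m+1}$.

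First I would verify this dictionary between affine and linear geometry. A $(k-1)$-flat of $\AG(m,\field{q})$ is the set of affine combinations of $k$ affinely independent points, and $k+1$ points lie on a common $(k-1)$-flat precisely when they are affinely dependent. Under homogenization $\seq{u}\mapsto(\seq{u}\mid 1)$, affine dependence of $\seq{u}_{i_1},\dots,\seq{u}_{i_{k+1}}$ is exactly linear dependence of $\seq{u}_{i_1}',\dots,\seq{u}_{i_{k+1}}'$ in $\field{q}^{m+1}$. Hence the hypothesis ``any $k+1$ of the $\seq{u}_i$ do not lie on a common $(k-1)$-flat'' says precisely that any $k+1$ of the homogenized vectors $\seq{u}_i'$ are linearly independent in $\field{q}^{m+1}$.

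Next I would show that $\rmat{A}'$ is a $(k+1)$-good random $(m+1)\times n$ matrix. This is where the independence and uniformity of $\rseq{v}$ enter. Take a full-rank $\mat{M}\in\field{q}^{(k+1)\times(m+1)}$; I must show $\mat{M}\rmat{A}'$ is uniform over $\field{q}^{(k+1)\times n}$. Writing $\mat{M}=[\mat{M}_0\mid\seq{c}]$ with $\mat{M}_0\in\field{q}^{(k+1)\times m}$ and a final column $\seq{c}$, one computes $\mat{M}\rmat{A}'=\mat{M}_0\rmat{A}+\seq{c}\,\rseq{v}$. Conditioning on $\rseq{v}$ and on the event that removes the degenerate directions, the $k$-goodness of $\rmat{A}$ makes $\mat{M}_0\rmat{A}$ behave like a uniform matrix on the appropriate $k$-dimensional row space, while the independent uniform $\rseq{v}$ supplies the one remaining degree of freedom; the full-rank hypothesis on $\mat{M}$ guarantees these combine to give full uniformity on $\field{q}^{(k+1)\times n}$. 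I expect the bookkeeping in this step to be the main obstacle: one has to handle the case in which $\seq{c}=\seq{0}$ (so $\mat{M}_0$ already has rank $k+1>k$, which would exceed the goodness range of $\rmat{A}$) separately from the case $\seq{c}\neq\seq{0}$, and check that full rank of $\mat{M}$ always forces $\mat{M}_0$ to have rank at least $k$ so that $k$-goodness of $\rmat{A}$ is applicable.

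Finally, with $\rmat{A}'$ shown to be $(k+1)$-good and the $\seq{u}_i'$ any $k+1$ of which are linearly independent, Proposition~\ref{pr:Application3A} applied to $\rmat{A}'$ and the index set $\{\seq{u}_i'\}$ yields immediately
\[
P\{F(i_1)=\seq{v}_1,\dots,F(i_{k+1})=\seq{v}_{k+1}\}=q^{-(k+1)n},
\]
which is the claim. Thus the proof is organized as a three-step reduction: homogenize the vectors, translate the affine hypothesis into linear independence, upgrade $\rmat{A}$ to a $(k+1)$-good matrix $\rmat{A}'$ via the uniform independent row $\rseq{v}$, and invoke the previous proposition.
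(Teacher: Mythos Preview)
Your reduction idea is natural, but the central claim---that the augmented matrix $\rmat{A}'$ is $(k+1)$-good---is false in general, and the paper explicitly warns against exactly this in a footnote to its proof. The case you flag as a potential obstacle is in fact fatal: if $\seq{c}=\seq{0}$, full rank of $\mat{M}=[\mat{M}_0\mid\seq{0}]$ forces $\rank(\mat{M}_0)=k+1$, and then $\mat{M}\rmat{A}'=\mat{M}_0\rmat{A}$ would have to be uniform on $\field{q}^{(k+1)\times n}$. But $\rmat{A}$ is only assumed $k$-good; for instance, if $\rmat{A}$ is uniform on an $(m,n,k)$ MRD code with $m\leq n$, its support has only $q^{kn}$ elements, so $\mat{M}_0\rmat{A}$ cannot possibly be uniform on a set of size $q^{(k+1)n}$. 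Hence $\rmat{A}'$ is not $(k+1)$-good and Proposition~\ref{pr:Application3A} does not apply as stated.

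Your approach can be salvaged, because the homogenized vectors $\seq{u}_i'=(\seq{u}_i\mid 1)$ only ever produce matrices $\mat{M}$ whose last column is the all-ones vector, so $\seq{c}=\seq{0}$ never arises. You therefore do not need full $(k+1)$-goodness of $\rmat{A}'$; you only need uniformity of $\mat{M}\rmat{A}'$ for this restricted family of $\mat{M}$. Verifying that directly---row-reduce so that one row becomes $(\seq{0}\mid 1)$ and the remaining $k$ rows have last entry $0$ and form a full-rank $k\times m$ block---collapses to precisely the paper's computation: condition on $\rseq{v}$, reduce the event to $(\seq{u}_{i_j}-\seq{u}_{i_1})\rmat{A}=\seq{v}_j-\seq{v}_1$ for $j=2,\dots,k+1$, and invoke $k$-goodness of $\rmat{A}$ together with the linear independence of the differences.
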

\begin{proof}
  This can be seen as an ``affine analogue'' of
  Prop.~\ref{pr:Application3A}, and is proved as
  follows:\footnote{Note that we cannot directly apply
    Prop.~\ref{pr:Application3A}, since the random
    $(m+1)\times n$ matrix
    formed from $\rmat{A}$ and $\rseq{v}$ need not be $(k+1)$-good.}
\begin{align*}
&P\{F(i_j) = \seq{v}_j; j = 1, \ldots, k+1\} \\
&\quad = \sum_{\substack{\mat{A} \in \supp{\rmat{A}}\\ \seq{v} \in \field{q}^n}}
q^{-n} P\{\rmat{A} = \mat{A}\} 1\{\seq{u}_{i_j} \mat{A} + \seq{v} = \seq{v}_j; j = 1, \ldots, k+1\} \\
&\quad = q^{-n} \sum_{\mat{A} \in \supp{\rmat{A}}}
\Big( P\{\rmat{A} = \mat{A}\} 1\{(\seq{u}_{i_j}-\seq{u}_{i_1}) \mat{A} = \seq{v}_j - \seq{v}_1; j = 2, \ldots, k+1\} \\
&\qquad \times \sum_{\seq{v} \in \field{q}^n} 1\{\seq{v} = \seq{v}_1 - \seq{u}_{i_1}\mat{A}\} \Big) \\
&\quad = q^{-n} P\{(\seq{u}_{i_j}-\seq{u}_{i_1}) \rmat{A} = \seq{v}_j - \seq{v}_1; j = 2, \ldots, k+1\} \\
&\quad = q^{-(k+1)n},
\end{align*}
where the last equality follows from the $k$-goodness of $\rmat{A}$
and the linear independence of
$\{\seq{u}_{i_j}-\seq{u}_{i_1}\}_{j=2}^{k+1}$.
\end{proof}

Proposition~\ref{pr:Application3B} provides an alternative way for
generating a $k$-independent sequence. For example, to generate a
$2$-independent sequence, we may use a $1$-good random matrix and
simply choose $U=\field{q}^{m}$, which has been well known in the
random coding approach (see, e.g., \cite{Yang200904} and the
references therein). Similarly, for a $3$-independent sequence over
$\field{2}$, we may use a $2$-good random matrix over $\field{2}$ and
let $U=\field{2}^m$. The next corollary states this fact.

\begin{corollary}
  Let $\rmat{A}$ be a $2$-good random matrix over $\field{2}$ and
  $\rseq{v}$ a random vector independent of $\rmat{A}$ and uniformly
  distributed over $\field{2}^n$. The random mapping $F: \field{2}^m
  \to \field{2}^n$ given by $\seq{u} \mapsto \seq{u}\rmat{A} +
  \rseq{v}$ satisfies
\[
P\{F(\seq{u}_1) = \seq{v}_1, F(\seq{u}_2) = \seq{v}_2, F(\seq{u}_3) =
\seq{v}_3\} = \prod_{i=1}^3 P\{F(\seq{u}_i) = \seq{v}_i\} = 2^{-3n}, 
\]
for distinct $\seq{u}_1, \seq{u}_2, \seq{u}_3 \in \field{2}^m$ and
arbitrary $\seq{v}_1, \seq{v}_2, \seq{v}_3 \in \field{2}^n$.
\end{corollary}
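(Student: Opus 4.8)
The plan is to obtain this corollary as the special case of Proposition~\ref{pr:Application3B} in which $k=2$, $q=2$, and $U=\field{2}^m$ is the entire space (so that $M=2^m$ and the three distinct vectors $\seq{u}_1,\seq{u}_2,\seq{u}_3$ are simply three of its members). The only thing to verify before quoting the proposition is its hypothesis on $U$: that any $k+1=3$ of the chosen vectors, regarded as points of $\AG(m,\field{2})$, do not lie on a common $(k-1)=1$-flat, i.e.\ on a common affine line.

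First I would note the elementary geometric fact that, over the binary field, a $1$-flat of $\AG(m,\field{2})$ is a coset $\seq{a}+\langle\seq{b}\rangle$ of a one-dimensional subspace and therefore consists of exactly $q=2$ points. Consequently three pairwise distinct points of $\AG(m,\field{2})$ can never be collinear, and the non-collinearity condition of Proposition~\ref{pr:Application3B} is satisfied by \emph{every} triple of distinct vectors. In particular it holds for the full set $U=\field{2}^m$, with no further restriction on the $\seq{u}_i$ beyond their being distinct.

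With the hypothesis confirmed, I would apply Proposition~\ref{pr:Application3B} directly: setting $F(\seq{u})=\seq{u}\rmat{A}+\rseq{v}$ and taking any distinct $\seq{u}_1,\seq{u}_2,\seq{u}_3\in\field{2}^m$, the proposition yields
\[
P\{F(\seq{u}_1)=\seq{v}_1,F(\seq{u}_2)=\seq{v}_2,F(\seq{u}_3)=\seq{v}_3\}
=\prod_{i=1}^3 P\{F(\seq{u}_i)=\seq{v}_i\}=2^{-3n}
\]
for arbitrary $\seq{v}_1,\seq{v}_2,\seq{v}_3\in\field{2}^n$, which is precisely the assertion. Since the whole argument collapses to a single invocation of the preceding proposition, there is no genuine obstacle here; the only point requiring a moment's thought is the observation that binary affine lines carry just two points, which renders the collinearity hypothesis vacuous for distinct triples and thereby justifies the convenient choice $U=\field{2}^m$.
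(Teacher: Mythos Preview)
Your proposal is correct and matches the paper's approach exactly: the corollary is obtained from Proposition~\ref{pr:Application3B} with $k=2$, $q=2$, $U=\field{2}^m$, and the paper explicitly remarks that it ``depends in an essential way on the fact that no three points in an affine space of order $2$ are collinear,'' which is precisely the verification you supply.
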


Similar to Prop.~\ref{pr:Application1}, the corollary depends in
an essential way on the fact that no three points in an affine space
of order $2$ are collinear.

%\section{The Algebraic Aspect of $k$-Good Random Matrices}
\section{Dense Sets of Matrices}
\label{sec:AlgebraicAspects}

% In this section we investigate a fundamental property of sets
% of matrices, which is crucial for $k$-good random matrices and may be
% regarded as an algebraic version of $k$-good random matrices. 

In this section we investigate a fundamental property of sets
of $m\times n$ matrices over $\field{q}$, which is shared by
the support of every $k$-good random $m\times n$ matrix over $\field{q}$.
Definition~\ref{df:KGoodRandomMatrix} implies that for any
full-rank matrix $\mat{M} \in \field{q}^{k\times m}$ and any matrix
$\mat{K} \in \field{q}^{k\times n}$, there exists a matrix $\mat{A}
\in \supp{\rmat{A}}$ such that $\mat{M} \mat{A} = \seq{K}$. 
% Motivated by this observation, we introduce the concept of
% $k$-dense sets of matrices. 
This motivates the following
\begin{definition}
  \label{df:k-dense}
  Let $k,m,n$ be positive integers with $k \le \min\{m,n\}$. A set
  $\matset{A} \subseteq \field{q}^{m \times n}$ is said to be
  \emph{$k$-dense} if $\mat{M} \matset{A} = \field{q}^{k\times n}$ for
  every full-rank matrix $\mat{M} \in \field{q}^{k\times m}$. As in
  the case of ``good'' we use the terms \emph{$1$-dense} and
  \emph{dense} interchangeably.
\end{definition}
% \rrem{I'm proposing the term ``$k$-dense'' in view of the similarity
%   with the concept of a dense ring of linear transformations and the
%   Jacobson density theorem; see any decent book on ring theory such as
%   Lam's first course \cite{lam91}.}

In the sequel, we tacitly assume that $k \le \min\{m,n\}$. In the
language of linear transformations Def.~\ref{df:k-dense}
requires that for every $k$-dimensional subspace
$U\subseteq\field{q}^m$ the restriction map
$\matset{A}\to\Hom(U,\field{q}^n)$, $f\mapsto f|_U$ is surjective. The
definition also has a nice geometric interpretation, which we now
proceed to discuss.

\begin{definition}
  \label{df:RAG(m,n,q)} The left affine space of
  $m\times n$ matrices over $\field{q}$, denoted by
  $\LAG(m,n,\field{q})$, is the lattice of cosets (including the empty set)
  of left $R_m$-submodules of
  $\field{q}^{m\times n}$. A coset $\mat{A}+\matset{U}$ is called an
  $r$-dimensional flat ($r$-flat) if
  $\matset{U}\cong\field{q}^{m\times r}$ as an
  $R_m$-module. 
Equivalently, the subspace of $\field{q}^n$
generated by the row spaces of all matrices in $\matset{U}$ has
dimension $r$; cf.\ the remarks in
Section~\ref{sec:whom}.\footnote{Note also that the $R_m$-modules
  $\field{q}^{m\times r}$, $r=1,2,\dots$, form a set of
  representatives for the isomorphism classes of finitely
  generated $R_m$-modules (a special case of a general 
  theorem about simple Artinian rings \cite[Ch.~1]{lam91}).}
\end{definition}
As usual, flats of dimension $0$, $1$, $2$, $n-1$ are called points,
lines, planes, and hyperplanes, respectively. The whole geometry
(i.e.\ the flat $\field{q}^{m\times n}$) has dimension
$n$. 

Right affine spaces of rectangular matrices are defined in an
analogous manner and denoted by $\RAG(m,n,\field{q})$. Since
$\mat{A}\mapsto\mat{A}^T$ defines an isomorphism between $\LAG(m,n,\field{q})$
and $\RAG(n,m,q)$, it is sufficient to consider only left (or only right)
affine spaces of rectangular matrices. For notational convenience we
have started our discussion with left affine spaces. Right affine
spaces, being the appropriate framework for dense sets as defined
above, will be used after the following remark.

\begin{remark}
  \label{re:RAG(m,n,q)}
  Other descriptions of $\LAG(m,n,\field{q})$ have appeared in the
  literature. Consider the geometry $H_q^{(m+n-1,m-1)}$ whose points
  are the $(m-1)$-flats of $\PG(m+n-1,q)$ skew to a fixed $(n-1)$-flat
  $W$, and whose lines are the $m$-flats of $\PG(m+n-1,q)$ meeting $W$
  in a point. The map which sends $\mat{A}\in\field{q}^{m\times n}$ to
  the row space of $(\mat{I}_m,\mat{A})$ is easily seen to
  define an isomorphism from $\LAG(m,n,\field{q})$ onto
  $H_q^{(m+n-1,m-1)}$. The geometry $\LAG(m,n,\field{q})$ is related to the
  space of rectangular $m\times n$-matrices over $\field{q}$ (see
  \cite[Ch.~3]{wan96a}), 
  but it is not the same.\footnote{In the Geometry of Matrices lines are
    defined as cosets of $\field{q}$-subspaces generated by rank-one
    matrices.}
  The special case $m=2$, in which
  $H_q^{(n+1)*}\eqdef H_q^{(n+1,1)}$ is an example of a so-called semipartial
  geometry, is discussed in \cite[2.2.7]{clerck-maldeghem95}.

  Another description is by means of a so-called linear representation
  in the ordinary affine space $\AG(n,\field{q^m})$ over the extension
  field $\field{q^m}$: Identify the point set of $\LAG(m,n,\field{q})$
  with that of $\AG(n,\field{q^m})$ by viewing the columns of
  $\mat{A}\in\field{q}^{m\times n}$ as coordinate vectors with respect
  to a fixed basis of $\field{q^m}$ over $\field{q}$. It is then
  readily verified that the lines of $\LAG(m,n,\field{q})$ correspond
  exactly to those lines of $\AG(n,\field{q^m})$, whose associated
  $1$-dimensional subspace (``direction'') is spanned by a vector in
  $\field{q}^n$. In other words, a line $L$ of $\AG(n,\field{q^m})$
  belongs to $\LAG(m,n,\field{q})$ if and only if $L$ meets the
  hyperplane $\PG(n-1,\field{q^m})$ at infinity in a point of the subgeometry
  $\PG(n-1,\field{q})$.\footnote{Thus the line set of
    $\LAG(m,n,\field{q})$ is the union of $\frac{q^n-1}{q-1}$ parallel
    classes of lines of $\AG(n,\field{q^m})$.}  Again the special case
  $m=2$ is mentioned in \cite[2.3.2]{clerck-maldeghem95}.
\end{remark}
The following lemma provides the geometric interpretation of $k$-dense
sets of matrices and the link with $k$-good random matrices.
\begin{lemma}
  \label{le:blocking}
  Let $\matset{A}$ be a nonempty subset of $\field{q}^{m\times n}$ and
  $\rmat{A}$ the random $m\times n$ matrix uniformly distributed over
  $\matset{A}$.
  \begin{enumerate}[(i)]
  \item $\matset{A}$ is $k$-dense if and
  only if it meets every $(m-k)$-flat of $\RAG(m,n,\field{q})$ in at
  least one point, i.e.,
  $\matset{A}$ is a blocking set with respect to $(m-k)$-flats in
  $\RAG(m,n,\field{q})$.
\item $\rmat{A}$ is $k$-good if and
  only if $\matset{A}$ meets every $(m-k)$-flat of $\RAG(m,n,\field{q})$
  in the same number, say $\lambda$, of points.
\end{enumerate}
\end{lemma}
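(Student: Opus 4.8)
The plan is to set up a dictionary between full-rank matrices $\mat{M}\in\field{q}^{k\times m}$ and the $(m-k)$-flats of $\RAG(m,n,\field{q})$, after which both equivalences fall out of Definitions~\ref{df:KGoodRandomMatrix} and~\ref{df:k-dense} by bookkeeping. First I would describe the $(m-k)$-flats explicitly. By the right-hand analogue of the submodule--subspace correspondence recalled in Section~\ref{sec:whom}, a right $R_n$-submodule of $\field{q}^{m\times n}$ is determined by the subspace $W\subseteq\field{q}^m$ spanned by the column spaces of its elements, and it is isomorphic to $\field{q}^{(m-k)\times n}$ precisely when $\dim W=m-k$. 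For a full-rank $\mat{M}$ the equation $\mat{M}\mat{X}=\mat{0}$ holds iff every column of $\mat{X}$ lies in $W\eqdef\ker\mat{M}$, a subspace of dimension $m-k$; since $\mat{M}\mat{X}=\mat{0}$ and $\mat{B}\in R_n$ give $\mat{M}(\mat{X}\mat{B})=\mat{0}$, the set $\{\mat{X}:\mat{M}\mat{X}=\mat{0}\}$ consists of all matrices with columns in $W$, which is the right $R_n$-submodule corresponding to $W$, hence an $(m-k)$-flat through the origin. Every $(m-k)$-dimensional $W$ occurs as $\ker\mat{M}$ for some full-rank $\mat{M}$ (unique up to left multiplication by an invertible $k\times k$ matrix), so I would record the key fact: the $(m-k)$-flats of $\RAG(m,n,\field{q})$ are exactly the solution sets $\matset{F}_{\mat{M},\mat{K}}\eqdef\{\mat{X}\in\field{q}^{m\times n}:\mat{M}\mat{X}=\mat{K}\}$ with $\mat{M}$ full-rank and $\mat{K}\in\field{q}^{k\times n}$; for fixed $\mat{M}$ the $q^{kn}$ flats $\matset{F}_{\mat{M},\mat{K}}$ are the fibers of the surjection $\mat{X}\mapsto\mat{M}\mat{X}$ and thus partition $\field{q}^{m\times n}$ into one parallel class.

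For part~(i) I would just unwind Definition~\ref{df:k-dense}: the condition $\mat{M}\matset{A}=\field{q}^{k\times n}$ says that for each $\mat{K}$ some $\mat{A}\in\matset{A}$ satisfies $\mat{M}\mat{A}=\mat{K}$, i.e.\ that $\matset{A}$ meets every flat $\matset{F}_{\mat{M},\mat{K}}$ in the parallel class of $\mat{M}$. Letting $\mat{M}$ range over all full-rank $k\times m$ matrices exhausts all $(m-k)$-flats by the previous paragraph, so $\matset{A}$ is $k$-dense iff it is a blocking set with respect to $(m-k)$-flats.

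For part~(ii), because $\rmat{A}$ is uniform on $\matset{A}$ I would compute, for full-rank $\mat{M}$ and any $\mat{K}$, that $P\{\mat{M}\rmat{A}=\mat{K}\}=|\matset{A}\cap\matset{F}_{\mat{M},\mat{K}}|/|\matset{A}|$. By Definition~\ref{df:KGoodRandomMatrix}, $\rmat{A}$ is $k$-good iff this equals $q^{-kn}$ for all full-rank $\mat{M}$ and all $\mat{K}$, i.e.\ iff $|\matset{A}\cap\matset{F}_{\mat{M},\mat{K}}|=|\matset{A}|\,q^{-kn}$ for every $(m-k)$-flat. The right-hand side is a constant $\lambda$ not depending on the flat, so this is precisely the statement that $\matset{A}$ meets every $(m-k)$-flat in $\lambda$ points; conversely, if all intersection numbers share a common value $\lambda$, then summing over the $q^{kn}$ flats of one parallel class (which partition $\field{q}^{m\times n}$) forces $\lambda=|\matset{A}|\,q^{-kn}$ and hence the uniformity of $\mat{M}\rmat{A}$.

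The routine part is this bookkeeping with the fibers. The one step that genuinely needs care is the explicit identification in the first paragraph of the $(m-k)$-flats of the matrix geometry with the solution sets $\matset{F}_{\mat{M},\mat{K}}$: it relies on the column-space description of right submodules (the transpose of the row-space description in Section~\ref{sec:whom}) together with the observation that, as $\mat{M}$ ranges over full-rank matrices and $\mat{K}$ over $\field{q}^{k\times n}$, the sets $\matset{F}_{\mat{M},\mat{K}}$ range over all $(m-k)$-flats. Once this dictionary is fixed, both equivalences are immediate.
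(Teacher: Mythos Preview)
Your proposal is correct and follows essentially the same approach as the paper: both identify the $(m-k)$-flats of $\RAG(m,n,\field{q})$ with the solution sets $\{\mat{X}:\mat{M}\mat{X}=\mat{K}\}$ (the paper via the annihilator subspaces $\mat{M}^\perp$ and their cosets) using the column-space description of right submodules, and then read off both equivalences directly from the definitions. Your treatment is more explicit about the converse in~(ii) (summing over a parallel class to pin down $\lambda=|\matset{A}|q^{-kn}$), which the paper leaves to the remark following the lemma, but the argument is the same.
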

The condition in (ii) is equivalent to $\matset{A}$ being a $k$-design of index
$\lambda$ in the sense of \cite{Delsarte197800}.\footnote{For this one
  has to identify matrices in $\field{q}^{m\times n}$ with bilinear
  forms $\field{q}^m\times\field{q}^n\to\field{q}$.}
Since $\field{q}^{m\times n}$ is the union of $q^{kn}$ flats parallel
to a given $(m-k)$-flat, the constant $\lambda$ in (ii) must be equal to
$|\matset{A}|q^{-kn}$.
\begin{proof}
  Obviously a necessary and sufficient condition for
  $\matset{A}\subseteq\field{q}^{m\times n}$ to be $k$-dense is that
  $\matset{A}$ contains a set of coset representatives for every
  annihilator subspace
  $\matset{U}=\mat{M}^\perp\eqdef\{\mat{X}\in\field{q}^{m\times
    n};\mat{MX}=\mat{0}\}$ with $\mat{M}\in\field{q}^{k\times m}$ of
  full rank. We have $\mat{A}\in\matset{U}$ if and only if the column
  space of $\mat{X}$ is contained in the orthogonal complement (an
  $(m-k)$-dimensional space) of the row space of $\mat{M}$. Hence these
  annihilator subspaces are exactly the $(m-k)$-flats of
  $\RAG(m,n,\field{q})$ through $\mat{0}$, and Part~(i) is proved.
  
  For (ii) take an arbitrary $\mat{K}\in\field{q}^{k\times n}$ and 
  note that $P\{\mat{M}\rmat{A} = \mat{K}\}
  =|\{\mat{A}\in\matset{A};\mat{MA}=\mat{K}\}|/|\matset{A}|
  =|\matset{A}\cap(\matset{U}+\mat{A}_0)|/|\matset{A}|$, where
  $\mat{A}_0\in\field{q}^{m\times n}$ is any matrix with $\mat{MA}_0=\mat{K}$.
\end{proof}
As in the case of $k$-good matrices, we are interested in the minimum
size of a $k$-dense subset of $\field{q}^{m\times n}$. We denote this
size by $\minsize_k(m,n,\field{q})$. With the aid of our previous results it
will be easy to determine the numbers $\minsize_k(m,n,\field{q})$ for $m\leq
n$. The case $m>n$ is considerably more difficult and includes, for
example, the problem of determining the minimum size of a blocking set
with respect to $(m-k)$-flats in the ordinary affine space
$\AG(m,\field{q})\cong\RAG(m,1,\field{q})$, which (despite a lot of
research by finite geometers) remains unsolved in general. Here we will restrict
ourselves to easy-to-derive bounds and then work out completely the
smallest nontrivial case.
% $(m,n)=(3,2)$.

% We start with a few remarks about the relation between $k$-dense
% subsets of $\field{q}^{m\times n}$ and $k$-good random $m\times n$ matrices

We start with the case $m\leq n$.
\begin{theorem}
  \label{thm:mleqn}
  For $k\leq m\leq n$ we have $\minsize_k(m,n,\field{q})=q^{kn}$, and
  a subset $\matset{A}\subseteq\field{q}^{m\times n}$ of size $q^{kn}$
  is $k$-dense if and only if it is a (not necessarily linear)
  $(m,n,k)$ MRD code.
\end{theorem}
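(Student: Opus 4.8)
The plan is to prove the two claims---the value $\minsize_k(m,n,\field{q})=q^{kn}$ and the characterization of minimum-size $k$-dense sets---essentially together, by combining the lower bound on $|\matset{A}|$ with the MRD characterization furnished by Lemma~\ref{le:MRDCode}. First I would establish the lower bound $\minsize_k(m,n,\field{q})\geq q^{kn}$. This follows from Part~(i) of Lemma~\ref{le:blocking}: a $k$-dense set is a blocking set with respect to $(m-k)$-flats of $\RAG(m,n,\field{q})$. Since $\field{q}^{m\times n}$ is partitioned into $q^{kn}$ flats parallel to any given $(m-k)$-flat (as noted after Lemma~\ref{le:blocking}), a blocking set must contain at least one point in each of these $q^{kn}$ parallel flats, forcing $|\matset{A}|\geq q^{kn}$.

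Next I would show that any $(m,n,k)$ MRD code is $k$-dense of size exactly $q^{kn}$, which simultaneously shows $\minsize_k(m,n,\field{q})\leq q^{kn}$ and gives one direction of the equivalence. Since $k\leq m\leq n$, an $(m,n,k)$ MRD code has size $q^{k\max\{m,n\}}=q^{kn}$ by definition. The $k$-dense property is then immediate from the characterization in Lemma~\ref{le:MRDCode}: for every full-rank $\mat{M}\in\field{q}^{k\times m}$ we have $\mat{M}\matset{A}=\field{q}^{k\times n}$, which is exactly the defining condition of $k$-denseness in Def.~\ref{df:k-dense}.

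For the converse direction of the equivalence, I would take a $k$-dense set $\matset{A}$ with $|\matset{A}|=q^{kn}$ and deduce that it is an $(m,n,k)$ MRD code. By Def.~\ref{df:k-dense}, for every full-rank $\mat{M}\in\field{q}^{k\times m}$ we have $\mat{M}\matset{A}=\field{q}^{k\times n}$, a set of size $q^{kn}$. Since $|\matset{A}|=q^{kn}$ as well, the surjective map $\mat{A}\mapsto\mat{M}\mat{A}$ from $\matset{A}$ onto $\field{q}^{k\times n}$ must in fact be a bijection. This is precisely the hypothesis $|\matset{A}|=q^{k\max\{m,n\}}$ together with $\mat{M}\matset{A}=\field{q}^{k\times n}$ required in the sufficiency part of Lemma~\ref{le:MRDCode}, so $\matset{A}$ is an $(m,n,k)$ MRD code.

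The main subtlety---though not a deep obstacle---lies in correctly invoking Lemma~\ref{le:MRDCode} in the converse direction: one must observe that the equality $|\matset{A}|=|\mat{M}\matset{A}|=q^{kn}$ forces the multiplication-by-$\mat{M}$ map to be injective on $\matset{A}$, which is exactly what the sufficiency argument in the proof of Lemma~\ref{le:MRDCode} exploits to conclude that distinct elements of $\matset{A}$ differ in rank at least $m-k+1$. I would emphasize that no linearity of $\matset{A}$ is assumed anywhere, consistent with the ``not necessarily linear'' qualifier in the statement, since both Lemma~\ref{le:MRDCode} and the blocking-set argument are purely set-theoretic. The whole proof is therefore a short assembly of previously established facts rather than a genuinely new computation.
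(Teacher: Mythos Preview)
Your proposal is correct and follows essentially the same approach as the paper: the lower bound via the partition into $q^{kn}$ parallel $(m-k)$-flats, and the identification of the extremal sets with MRD codes via Lemma~\ref{le:MRDCode}. The paper's proof is terser and phrases the MRD property geometrically (an $(m,n,k)$ MRD code meets every $(m-k)$-flat in exactly one point), whereas you invoke the algebraic form of Lemma~\ref{le:MRDCode} directly; your write-up also spells out the converse direction more explicitly than the paper does, but the underlying argument is the same.
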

\begin{proof}
  Since $m\leq n$, an $(m,n,k)$ MRD code has size $q^{kn}$ and meets
  every $(m-k)$-flat of $\RAG(m,n,\field{q})$ in exactly $1$ point (cf.\
  Lemma~\ref{le:MRDCode}). On the other hand a $k$-dense subset of
  $\field{q}^{m\times n}$ must have size at least $q^{kn}$ (consider a
  partition of $\field{q}^{m\times n}$ into $q^{kn}$ parallel $(m-k)$-flats).
  \end{proof}
The (nontrivial) minimum blocking sets of Th.~\ref{thm:mleqn} have no analogue
in the geometries $\PG(m,\field{q})$ or $\AG(m,\field{q})$: For $1\leq
k\leq m$ the only
point sets in $\PG(m,\field{q})$ or $\AG(m,\field{q})$ meeting every
$(m-k)$-flat in the same number of points are $\emptyset$ and the whole
point set. This is due to the fact that the corresponding incidence
structures are nontrivial $2$-designs and hence their incidence
matrices have full rank; cf.\ \cite[p.~20]{dembowski68} or the proof
of Fisher's Inequality in \cite[Ch.~10.2]{mhall86}.
% Blocking sets in the classical geometries $\PG(n,\field{q})$ and
% $\PG(n,q)$ are well-studied objects (see, for example, the recent
% survey \cite{blokhuis-sziklai-szonyi11}). 
% \cite{ball11}

If $m\leq n$ then
in $\RAG(m,n,\field{q})$ there exist, for every $t\in\{1,2,\dots,q^{(m-k)n}\}$,
point sets meeting every $(m-k)$-flat in $t$ points.
For example, the union of any $t$ cosets of a fixed linear $(m,n,k)$ MRD
code has this property. The property of meeting all flats of a fixed
dimension in the same number of points is in fact left-right symmetric:
\begin{theorem}
  \label{th:leftright}
  If $\matset{A}\subseteq\field{q}^{m\times n}$ meets every $(m-k)$-flat
  of $\RAG(m,n,\field{q})$ in the same number, say $\lambda$, of
  points, then the same is true for the $(n-k)$-flats of $\LAG(m,n,\field{q})$
  (the corresponding number being $\lambda'=\lambda q^{k(n-m)})$. 
\end{theorem}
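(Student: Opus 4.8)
The plan is to reduce the statement to the left--right duality of $k$-good random matrices (Th.~\ref{th:KGoodRandomMatrixDuality}) by exploiting the probabilistic reformulation of the ``equal-intersection'' property supplied by Lemma~\ref{le:blocking}(ii). The conceptual point is that meeting every $(m-k)$-flat of $\RAG(m,n,\field{q})$ in a constant number of points is not an ad hoc combinatorial condition, but exactly the assertion that the uniform random matrix on $\matset{A}$ is $k$-good; and $k$-goodness is symmetric under transposition, so the conclusion should drop out once the flats are correctly matched up on the two sides.

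First I would let $\rmat{A}$ be the random $m\times n$ matrix uniformly distributed over $\matset{A}$. By Lemma~\ref{le:blocking}(ii), the hypothesis that $\matset{A}$ meets every $(m-k)$-flat of $\RAG(m,n,\field{q})$ in exactly $\lambda$ points is equivalent to $\rmat{A}$ being $k$-good. Th.~\ref{th:KGoodRandomMatrixDuality} then gives that the transpose $\rmat{A}^T$---which is precisely the random $n\times m$ matrix uniformly distributed over $\matset{A}^T\eqdef\{\mat{A}^T:\mat{A}\in\matset{A}\}$, since transposition is a uniformity-preserving bijection of supports---is likewise $k$-good.

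Next I would transfer this back to a statement about flats. Applying Lemma~\ref{le:blocking}(ii) to $\matset{A}^T\subseteq\field{q}^{n\times m}$, with the roles of $m$ and $n$ interchanged, shows that $\matset{A}^T$ meets every $(n-k)$-flat of $\RAG(n,m,\field{q})$ in a constant number of points. Since $\mat{X}\mapsto\mat{X}^T$ is a dimension-preserving isomorphism $\LAG(m,n,\field{q})\to\RAG(n,m,\field{q})$ that carries $\matset{A}$ to $\matset{A}^T$, this is the same as saying that $\matset{A}$ meets every $(n-k)$-flat of $\LAG(m,n,\field{q})$ in a constant number $\lambda'$ of points, which is the asserted conclusion. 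The value of $\lambda'$ then follows by double counting $|\matset{A}|$: a fixed $(m-k)$-flat of $\RAG(m,n,\field{q})$ has $q^{kn}$ parallel translates partitioning $\field{q}^{m\times n}$, so $|\matset{A}|=\lambda q^{kn}$, while a fixed $(n-k)$-flat of $\LAG(m,n,\field{q})$ has $q^{km}$ parallel translates, so $|\matset{A}|=\lambda' q^{km}$; equating gives $\lambda'=\lambda q^{k(n-m)}$.

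The hard part will be the bookkeeping of flat dimensions under transposition, which is the one place I expect to have to be careful rather than routine. Because $k$-goodness is phrased through left multiplication $\mat{M}\rmat{A}$, it governs the \emph{right}-module flats of $\RAG(m,n,\field{q})$; one must verify that these $(m-k)$-flats correspond, after transposing, to the $(n-k)$-flats of $\LAG(m,n,\field{q})$ and not to flats of some other dimension. Tracking the correspondence row space $\leftrightarrow$ column space through the transpose isomorphism, together with the size computations above, pins the dimensions down and closes the argument.
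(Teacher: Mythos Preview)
Your proposal is correct and follows exactly the paper's approach: the paper's proof is the single sentence ``This follows from Lemma~\ref{le:blocking}(ii) and Th.~\ref{th:KGoodRandomMatrixDuality},'' and you have simply unpacked that sentence, including the transpose isomorphism $\LAG(m,n,\field{q})\cong\RAG(n,m,\field{q})$ already noted in the paper and the double-count for $\lambda'$ already observed after Lemma~\ref{le:blocking}.
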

\begin{proof}
  This follows from Lemma~\ref{le:blocking}(ii) and
  Th.~\ref{th:KGoodRandomMatrixDuality}.
\end{proof}
Th.~\ref{th:leftright} does not require that $m\leq n$. In the
case $m>n$ it says, mutatis mutandis, that subsets
$\matset{A}\subseteq\field{q}^{m\times n}$ meeting every $(m-k)$-flat
of $\RAG(m,n,\field{q})$ in a constant number $\lambda$ of points
exist only if $\lambda$ is a multiple of $q^{k(m-n)}$, the smallest
such sets being again the $(m,n,k)$ MRD codes (the case
$\lambda=q^{k(m-n)}$).

\begin{remark}
  \label{re:completesubspace}
  If $\matset{A}\subseteq\field{q}^{m\times n}$ is both $k$-dense and
  an $\field{q}$-subspace of $\field{q}^{m\times n}$, then
  $\matset{A}$ meets every $(m-k)$-flat of $\RAG(m,n,\field{q})$ in the same
  number of points and gives rise to a $k$-good random $m\times
  n$-matrix $\rmat{A}$ in the sense of
  Lemma~\ref{le:blocking}. (This follows by consideration of the
  $\field{q}$-linear maps $\matset{A}\to\field{q}^{k\times n}$,
  $\mat{A}\mapsto\mat{MA}$, with $\mat{M}\in\field{q}^{k\times n}$
  of full rank, which are surjective.) Moreover, Th.~\ref{th:leftright} (or
  Th.~\ref{th:KGoodRandomMatrixDuality}) applies, showing that the
  property ``$k$-dense $\field{q}$-subspace'' is preserved under
  $\mat{A}\mapsto\mat{A}^T$ and the minimum dimension of a $k$-dense
  $\field{q}$-subspace of $\field{q}^{m\times n}$ is $k\max\{m,n\}$.  
\end{remark}

From now on we assume $m>n$. First we collect some general information
about the numbers $\minsize_k(m,n,\field{q})$ in this case.
 
% If we assume $m>n$ and drop the requirement that $\matset{A}$ be a
% subspace, it is not at all clear what the minimum size
% of a $k$-dense subset of $\field{q}^{m\times n}$
% is. Obviously a necessary and sufficient condition for
% $\matset{A}\subseteq\field{q}^{m\times n}$ to
% be $k$-dense is that $\matset{A}$ contains a set of coset
% representatives for every annihilator subspace
% $\mat{M}^\perp=\{\mat{X}\in\field{q}^{m\times n};\mat{MX}=\mat{0}\}$
% with $\mat{M}\in\field{q}^{k\times m}$ of full rank. To help the
% reader appreciate the difficulty of the problem, we briefly discuss
% the case $k=1$. In that case the annihilator subspaces consist of all
% $m\times n$ matrices whose column space is contained in a fixed
% $(m-1)$-dimensional subspace $H\subset\field{q}^m$, and
% Def.~\ref{df:k-dense} requires that the columns of the matrices
% in $\matset{A}$ represent all $q^n$ possible $n$-tuples of cosets of
% $H$ for each $(m-1)$-dimensional subspace $H$. In particular it requires
% that the projection of $\matset{A}$ onto any fixed column is a
% blocking set with respect to hyperplanes in the affine space
% $\AG(m,\field{q})$. 

% The minimum size of such a blocking set is known
% to be $1+m(q-1)$ and is realized (among other configurations) by the
% union of $m$ independent lines through a fixed point of
% $\AG(m,\field{q})$; see \cite{jamison77}, \cite{brouwer-schrijver78}
% or \cite{bruen92}. 
\begin{theorem}
  \label{th:m>n}
  % For integers $k,m,n$ with $1\leq k\leq n<m$
  % let $\minsize_k(m,n,\field{q})$ be the minimum size of a $k$-dense subset of
  % $\field{q}^{m\times n}$. Then
  \begin{enumerate}[(i)]
  \item $\minsize_1(m,1,q)=1+m(q-1)$ for all $m\geq 2$.
  \item For $1\leq k\leq n<m$ we have the bounds
    $q^{kn}<\minsize_k(m,n,\field{q})<q^{km}$.
  \end{enumerate}
\end{theorem}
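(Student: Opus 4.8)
The plan is to phrase everything in the blocking-set language of Lemma~\ref{le:blocking}(i): a subset $\matset{A}\subseteq\field{q}^{m\times n}$ is $k$-dense precisely when it meets every $(m-k)$-flat of $\RAG(m,n,\field{q})$. Under the transpose isomorphism $\RAG(m,1,\field{q})\cong\AG(m,\field{q})$ already noted above, statement~(i) then becomes the classical problem of finding the smallest point set in $\AG(m,\field{q})$ meeting every affine hyperplane, while for~(ii) I would exploit the relation between blocking sets, MRD codes, and $k$-good random matrices established in Sections~\ref{sec:GoodRandomMatrix}--\ref{sec:KGoodRandomMatrix}.

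For the upper bound in~(i) I would exhibit the union of the $m$ coordinate axes through the origin, a set of $1+m(q-1)$ points: a hyperplane $\sum_i a_ix_i=b$ with some $a_i\neq 0$ is met by the $i$-th axis (solve $a_it=b$), and if $b=0$ it passes through the origin. The matching lower bound $\minsize_1(m,1,q)\geq 1+m(q-1)$ is exactly the Jamison / Brouwer--Schrijver theorem on affine blocking sets, whose proof rests on the polynomial method (Combinatorial Nullstellensatz). This is the one genuinely hard ingredient, and I would invoke it rather than reprove it.

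For the upper bound in~(ii) I would start from an $(m,n,k)$ MRD code $\matset{C}$, which exists and has size $q^{km}$. By Theorem~\ref{th:KGoodRandomMatrixWithMinimumSupportSize} the uniform distribution on $\matset{C}$ is $k$-good, so by Lemma~\ref{le:blocking}(ii) the set $\matset{C}$ meets every $(m-k)$-flat in exactly $\lambda=|\matset{C}|q^{-kn}=q^{k(m-n)}$ points; since $m>n$ and $k\geq 1$ we have $\lambda\geq q\geq 2$. Deleting a single point $\mat{P}\in\matset{C}$ therefore leaves every flat blocked (flats through $\mat{P}$ drop only to $\lambda-1\geq 1$ points), so $\matset{C}\setminus\{\mat{P}\}$ is $k$-dense of size $q^{km}-1<q^{km}$.

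For the lower bound in~(ii), partitioning $\field{q}^{m\times n}$ into the $q^{kn}$ flats of one parallel class of $(m-k)$-flats gives $\minsize_k(m,n,\field{q})\geq q^{kn}$ at once. Suppose equality held, say $|\matset{A}|=q^{kn}$ with $\matset{A}$ $k$-dense. Then within each parallel class the $q^{kn}$ points of $\matset{A}$ must hit the $q^{kn}$ flats exactly once each, and as this holds for every parallel class, $\matset{A}$ meets every $(m-k)$-flat in exactly one point; by Lemma~\ref{le:blocking}(ii) the uniform random matrix on $\matset{A}$ would then be $k$-good with support size $q^{kn}$. This contradicts Theorem~\ref{th:KGoodRandomMatrixWithMinimumSupportSize}, whose minimum support size equals $q^{k\max\{m,n\}}=q^{km}>q^{kn}$ for $m>n$. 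Hence the lower inequality is strict, completing the proof.
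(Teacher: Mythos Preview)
Your proof is correct. Part~(i) and the upper bound in~(ii) match the paper's approach; the paper uses Theorem~\ref{th:leftright} instead of the combination of Theorem~\ref{th:KGoodRandomMatrixWithMinimumSupportSize} and Lemma~\ref{le:blocking}(ii), but Theorem~\ref{th:leftright} is itself derived from those, so the arguments coincide. The paper also notes the slightly sharper explicit bound $\minsize_k(m,n,\field{q})\leq q^{km}-q^{k(m-n)}+1$, obtained by deleting $\lambda-1$ points rather than just one, but your single deletion already gives the strict inequality required.

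The genuine difference is in the lower bound of~(ii). The paper argues directly via the Singleton bound: a set of size $q^{kn}$ in $\field{q}^{m\times n}$ must have $\rankd(\matset{A})\leq m-k$ (since $kn\leq m(n-\rankd(\matset{A})+1)$ forces $\rankd(\matset{A})<m-k+1$), so two of its matrices differ by something of rank at most $m-k$, and a suitable $\mat{B}\in\field{q}^{k\times m}$ annihilates that difference, witnessing non-denseness. Your route is instead to observe that a $k$-dense set of size exactly $q^{kn}$ would meet every $(m-k)$-flat in a single point, hence (via Lemma~\ref{le:blocking}(ii)) yield a $k$-good random matrix of support size $q^{kn}$, contradicting the minimum support size $q^{km}$ from Theorem~\ref{th:KGoodRandomMatrixWithMinimumSupportSize}. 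Your argument is shorter and reuses the paper's earlier machinery, while the paper's version is more self-contained and makes the obstruction concrete at the level of the rank-distance structure.
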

\begin{proof}
  (i) A dense subset of $\field{q}^{m\times 1}$ is the same as a
  blocking set with respect to hyperplanes in the ordinary affine
  space $\AG(m,\field{q})$. 
  The minimum size of such a blocking set is known
  to be $1+m(q-1)$ and is realized (among other configurations) by the
  union of $m$ independent lines through a fixed point of
  $\AG(m,\field{q})$; see the original sources
  \cite{brouwer-schrijver78,jamison77}
  or \cite[Cor.~2.3]{ball11}, \cite[Th.~6.1]{blokhuis-sziklai-szonyi11}.

  (ii) For the lower bound note that a set
  $\matset{A}\subseteq\field{q}^{m\times n}$ of size
  $|\matset{A}|=q^{kn}$ has $\rankd(\matset{A})\leq m-k$. (By the
  Singleton bound $kn\leq m(n-\rankd(\matset{A})+1)$, so that
  $\rankd(\matset{A})\leq\frac{(m-k)n}{m}+1<m-k+1$.) Hence there exist
  $\mat{A}_1,\mat{A}_2\in\matset{A}$ such that the column space
  $U\subset\field{q}^m$ of
  $\mat{A}_1-\mat{A}_2$ has dimension $\leq m-k$. For
  a full-rank matrix $\mat{B}\in\field{q}^{k\times m}$ with row space
  contained in $U^\perp$ (such $\mat{B}$ exists since $\dim U^\perp\geq k$)
  we then have $\mat{BA}_1=\mat{BA}_2$, so that
  $|\{\mat{BA};\mat{A}\in\matset{A}\}|<q^{kn}=|\field{q}^{k\times
    n}|$. This shows that $\matset{A}$ is not dense, i.e.\
  $\minsize_k(m,n,\field{q})>q^{kn}$.

  For the upper bound in (ii) choose $\matset{A}$ as a linear
  $(m,n,k)$ MRD code. Then $|\matset{A}|=q^{km}$, and by
  Th.~\ref{th:leftright} the set $\matset{A}$ meets
  every $(m-k)$-flat of $\RAG(m,n,\field{q})$ in $q^{k(m-n)}$ points.
  Hence any subset $\matset{A}'\subseteq\matset{A}$ of
  size at least $q^{km}-q^{k(m-n)}+1$ is still $k$-dense, and consequently
  $\minsize_k(m,n,\field{q})\leq q^{km}-q^{k(m-n)}+1<q^{km}$.
\end{proof}

The bounds in Th.~\ref{th:m>n}(ii) are
rather weak and serve only to refute the obvious guesses
``$\minsize_k(m,n,\field{q})=q^{kn}$'' or
``$\minsize_k(m,n,\field{q})=q^{km}$''. Refining these bounds will be left
for subsequent work. Instead we will now work out completely the
binary case $(m,n)=(3,2)$ (the smallest case left open by
Th.~\ref{th:m>n}).

% Determining the exact values of
% $\minsize_k(m,n,\field{q})$ in general is probably a very difficult
% problem, since it is related to
% the determination of the minimum size of a blocking
% set with respect to $s$-dimensional subspaces in $\AG(m,\field{q})$,
% which is unsolved in the general case.\footnote{Note, however, that
%   due to the requirement $n\geq k$ the present problem does not
%   exactly include the latter for $s\in\{1,2,\dots,m-2\}$.}

Before giving the result we will collect a few combinatorial facts
about the geometry $\RAG(3,2,\field{2})$. Perhaps the most important
property is that the substructure consisting of all 
  lines and planes through a fixed point is isomorphic to
  $\PG(2,\field{2})$. (This can be seen from the parametrization of
  the flats through $\mat{0}$, the all-zero matrix,
  by subspaces of $\field{2}^3$.) 
  In particular each point is contained in $7$ lines and $7$ planes.

  Altogether there are $64$ points, $112$ lines
  falling into $7$ parallel classes of size $16$, and $28$ planes
  falling into $7$ parallel classes of size $4$. A line contains
  $4$ points and is contained in $3$ planes. A plane contains $16$
  points and $12$ lines ($3$ parallel classes of size $4$). Two
  distinct points $\mat{A}_1$, $\mat{A}_2$ are incident with a unique
  line (and hence with $3$ planes) if $\rank(\mat{A}_1-\mat{A}_2)=1$, and
  incident with a unique plane (but not with a line)
  if $\rank(\mat{A}_1-\mat{A}_2)=2$.

  Let us now consider a plane of $\RAG(3,2,\field{2})$, which is
  isomorphic to $\RAG(2,2,\field{2})$. In $\field{2}^{2\times 2}$ there are
  $9$ matrices of rank $1$ (accounting for the nonzero points on the
  $3$ lines through $\mat{0}$) and $6$ matrices of rank $2$, which
  together with $\mat{0}$ form two (linear) MRD codes:
  \begin{align*}
    \matset{A}&=\left\{
        \begin{pmatrix}
          0&0\\0&0
        \end{pmatrix},
        \begin{pmatrix}
          1&0\\0&1
        \end{pmatrix},
        \begin{pmatrix}
          1&1\\1&0
        \end{pmatrix},
        \begin{pmatrix}
          0&1\\1&1
        \end{pmatrix}
        \right\},\\
        \matset{A}'&=\left\{
        \begin{pmatrix}
          0&0\\0&0
        \end{pmatrix},
        \begin{pmatrix}
          0&1\\1&0
        \end{pmatrix},
        \begin{pmatrix}
          1&0\\1&1
        \end{pmatrix},
        \begin{pmatrix}
          1&1\\0&1
        \end{pmatrix}
        \right\}.
  \end{align*}
  Every further MRD code is a coset of either $\matset{A}$ or
  $\matset{A}'$. The $12$ lines of $\RAG(2,2,\field{2})$ and the $8$
  MRD codes impose on $\field{2}^{2\times 2}$ the structure of the
  affine plane of order $4$.\footnote{This is particularly visible in
    the second model of $\RAG(2,2,\field{2})$ described in
    Rem.~\ref{re:RAG(m,n,q)}. The sets $\matset{A}$, $\matset{A}'$ are
    the two lines of $\AG(2,\field{4})$ of the form
    $\field{4}(1,\alpha)$ with
    $\alpha\in\field{4}\setminus\field{2}$.}

Clearly the MRD codes are exactly the blocking sets (with respect to lines)
of $\RAG(2,2,\field{2})$ of minimum size $4$. For slightly larger
blocking sets we have the following result, which is needed in the proof
of our next theorem. 
  \begin{lemma}
    \label{le:(3,2)}
    Let $\matset{S}$ be a blocking set in the plane $\RAG(2,2,\field{2})$.
    \begin{enumerate}[(i)]
    \item If $|\matset{S}|=5$ then $\matset{S}$ contains an MRD
      code and is disjoint from another MRD code.
    \item If $|\matset{S}|=6$ then $\matset{S}$ is disjoint from
      an MRD code.
    \item If $|\matset{S}|\in\{7,8\}$ then there exist $3$ mutually
      non-collinear points outside $\matset{S}$.
      \end{enumerate}
  \end{lemma}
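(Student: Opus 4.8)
The plan is to work entirely in the affine-plane-of-order-$4$ model of $\RAG(2,2,\field{2})$ described above (cf.\ Rem.~\ref{re:RAG(m,n,q)}), identifying the $16$ points of $\field{2}^{2\times 2}$ with the cells of a $4\times 4$ grid $\field{4}^2$ so that the two MRD parallel classes become the rows and columns of the grid, while the three classes of (rank-one) lines of $\RAG(2,2,\field{2})$ become the three ``diagonal'' classes (the lines of slope $s\in\field{4}\setminus\{0\}$). The dictionary I will use throughout is: two points are non-collinear (their difference has rank $2$) iff they lie in a common row or a common column, and collinear (rank-one difference) iff they lie in different rows and different columns. A short argument then shows that \emph{three mutually non-collinear points are nothing but three points of a single row or column}, i.e.\ a triple contained in one MRD code: if three points were pairwise in a common row-or-column but not all on one line, one of them is forced into an ``L-shape'', which always contains a collinear pair. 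Consequently the three conclusions translate into purely numerical statements about the row-counts $r_i$ and column-counts $c_j$ of $\matset{S}$: ``$\matset{S}$ contains an MRD code'' means some $r_i=4$ or $c_j=4$; ``$\matset{S}$ is disjoint from an MRD code'' means some $r_i=0$ or $c_j=0$; and ``there are three non-collinear points outside $\matset{S}$'' means some $r_i\le 1$ or $c_j\le 1$.

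The engine driving the numerical analysis is a double count of the collinear pairs of $\matset{S}$. On one hand this number equals $\binom{|\matset{S}|}{2}-\sum_i\binom{r_i}{2}-\sum_j\binom{c_j}{2}$ (since non-collinear pairs are exactly the same-row and same-column pairs); on the other hand it equals $\sum_L\binom{a_L}{2}$, summed over the $12$ diagonal lines, where $a_L=|\matset{S}\cap L|$. Because $\matset{S}$ is a blocking set, every $a_L\ge 1$, and within each of the three diagonal classes the four values $a_L$ sum to $|\matset{S}|$; this rigidly constrains the admissible distributions.

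For (i) ($|\matset{S}|=5$) the identity forces $\sum_i\binom{r_i}{2}+\sum_j\binom{c_j}{2}=7$, whose only solutions are the profiles giving a full line (a $(4,1,0,0)$ row- or column-distribution, which simultaneously yields \emph{two} empty rows or columns and so settles both halves at once) and the profiles $(3,1,1,1)$ against $(3,2,0,0)$. The latter I will rule out geometrically: a $(3,2,0,0)$ column profile crowds all five points into two columns, and then the ``hole'' in the $3$-column together with a missing point of the $2$-column (necessarily in a different row) spans a diagonal line disjoint from $\matset{S}$, contradicting the blocking property. For (ii) ($|\matset{S}|=6$), suppose to the contrary that every row and column meets $\matset{S}$; then the same double count, combined with the per-class bound on $\sum\binom{a_L}{2}$, pins down the unique surviving configuration, in which \emph{all five} parallel classes have profile $(3,1,1,1)$. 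The five $3$-point lines (one per class) would then partition the $\binom{6}{2}=15$ pairs of points into five triangles. This is impossible, since a triangle decomposition of $K_6$ would require each vertex to split its $5$ incident edges into pairs --- and $5$ is odd. Hence some row or column is empty, as claimed.

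For (iii) the case $|\matset{S}|=7$ is immediate pigeonhole: the $9$-point complement must put $\ge 3$ points in some row. The case $|\matset{S}|=8$ is the crux and the step I expect to fight hardest with, since naive pigeonhole and even a Cauchy--Schwarz bound on the $a_L$ leave open the perfectly balanced configuration with exactly $2$ points in every row and every column. Here I will again invoke the double count: it forces two of the three diagonal classes to contain a \emph{full} line of $\matset{S}$, so $\matset{S}\supseteq L_1\cup L_2$ for diagonals $L_1,L_2$ of distinct slopes. These meet in a single point $z$, so $|L_1\cup L_2|=7$, and the remaining eighth point must, by the $2$-per-row and $2$-per-column conditions, lie in both the row and the column of $z$ --- forcing it to equal $z$, a contradiction. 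Thus the balanced configuration cannot be a blocking set, so some $r_i\le1$ or $c_j\le1$, and the three required non-collinear points exist. The main obstacles are exactly these balanced/regular cases in (ii) and the size-$8$ part of (iii), where the argument must pass from mere counting to the genuine combinatorics of $\field{4}$ (the $K_6$ parity obstruction and the two-full-diagonals intersection argument, respectively).
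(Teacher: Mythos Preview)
Your proof is correct (with the minor slip that $(3,1,1,1)$ in part~(i) should read $(3,1,1,0)$), but the route differs substantially from the paper's.

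The paper works in the ``natural'' linear representation where the three $\RAG$-line classes are the $\field{2}$-rational slopes $0,1,\infty$ and the two MRD classes are the irrational slopes; you instead place the MRD codes as rows and columns, which makes the non-collinearity condition (and its reduction to ``three points in a common MRD code'') transparent but shifts the blocking constraint onto the three diagonal classes. For (i) the paper argues via ovals and a direct line-count to force a $3$-secant MRD line and then its fourth point, whereas you use the pair double count; incidentally, the $(3,1,1,0)/(3,2,0,0)$ case you exclude geometrically is already impossible on cruder grounds, since a $(3,2,0,0)$ column profile confines $\matset{S}$ to two columns and caps every row count at $2$. For (ii) the paper simply invokes the known fact that the minimum blocking set in $\AG(2,\field{4})$ has size $7$; your self-contained argument---forcing all five class profiles to $(3,1,1,1)$ and then observing that the five $3$-lines would decompose $K_6$ into triangles, impossible by the odd-degree parity obstruction---is more elementary and rather elegant. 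For (iii) with $|\matset{S}|=8$ both proofs pass through a double count, but the paper's (relegated to a footnote) counts $3$-lines through each point to reach a non-integral $16/3$, while yours forces two full diagonals of distinct slopes into $\matset{S}$ and then pins the stray eighth point to their intersection. Overall your approach is more uniform---a single counting identity drives all three parts---at the cost of a bit more case analysis; the paper's is terser but leans on an external result for~(ii).
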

  \begin{proof}
    (i) Consider $\matset{S}$ as a set of points in
    $\AG(2,\field{4})$. If $\matset{S}$ meets every line
    of $\AG(2,\field{4})$ in at most
    $2$ points, then $\matset{S}$ is an oval and has
    intersection pattern $2,2,1,0$ with each parallel class of
    lines of $\AG(2,\field{4})$. In particular $\matset{S}$ is not a
    blocking set in $\RAG(2,2,\field{2})$. Likewise, if $\matset{S}$
    meets some line of $\RAG(2,2,\field{2})$ in at least $3$
    points, it cannot be a blocking set (since it cannot block all lines 
    in the corresponding parallel class). 
    Hence $\matset{S}$ meets some line $\matset{L}$ outside
    $\RAG(2,2,\field{2})$ (i.e.\ an MRD code)
    in at least $3$ points. But then $\matset{S}$ must also contain the
    $4$-th point on $\matset{L}$, which is the point of concurrency of
    $3$ lines in $\RAG(2,2,\field{2})$. Thus $\matset{S}$ contains
    $\matset{L}$ and is disjoint from some line (MRD code) parallel to
    $\matset{L}$.

    (ii) This follows from the fact that the minimum size of a
    blocking set in $\AG(2,\field{4})$ is $7$.
    
    (iii) Since the point set can be partitioned into $4$ MRD codes
    (in two different ways), this is clear for $|\matset{S}|=7$, and
    for $|\matset{S}|=8$ it can fail only if $\matset{S}$ meets every
    MRD code in $2$ points. However a set $\matset{S}$ with this property 
    must contain a line of $\AG(2,\field{4})$
    and, by symmetry, also be disjoint from some line. Hence it cannot
    be a blocking set in $\RAG(2,2,\field{2})$.\footnote{If there is
      no $4$-line, then every point of $\matset{S}$ is on three
      $2$-lines and two $3$-lines, which gives $8\cdot 2/3$ as the
      number of $3$-lines, a contradiction.}  
  \end{proof}
  The following property of $\RAG(3,2,\field{2})$ will also be needed.
  \begin{lemma}
    \label{le:triangle}
    If $x$, $y$ are two non-collinear points of $\RAG(3,2,\field{2})$,
    then every point $z$ collinear with both $x$ and $y$ is contained
    in the (unique) plane generated by $x$ and $y$.
  \end{lemma}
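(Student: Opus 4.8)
The plan is to translate the three notions ``non\nobreakdash-collinear'', ``collinear'', and ``plane generated by two points'' into statements about column spaces of $3\times 2$ matrices in $\field{2}^3$, after which the lemma reduces to a one\nobreakdash-line dimension count.

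First I recall the combinatorial facts collected just above the lemma. Regarding the points of $\RAG(3,2,\field{2})$ as matrices in $\field{2}^{3\times 2}$, two distinct points $p,q$ are collinear precisely when $\rank(p-q)=1$, and are joined by a unique plane precisely when $\rank(p-q)=2$; in either case the flat through $\mat{0}$ in direction $p-q$ is determined by the column space of $p-q$ in $\field{2}^3$. Accordingly, for non\nobreakdash-collinear $x,y$ I set $W$ to be the column space of $x-y$, which is two\nobreakdash-dimensional since $\rank(x-y)=2$. The unique plane $\Pi$ generated by $x$ and $y$ is then $\{z\in\field{2}^{3\times 2}:\text{the column space of }z-x\text{ is contained in }W\}$, because any plane through $x$ and $y$ must contain the direction $x-y$ and hence, by equality of dimensions, have direction space exactly $W$.

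Now I take $z$ collinear with both $x$ and $y$. The cases $z\in\{x,y\}$ put $z\in\Pi$ trivially, so I may assume $\rank(z-x)=\rank(z-y)=1$ and let $U_1,U_2\subseteq\field{2}^3$ be the respective one\nobreakdash-dimensional column spaces. The key identity is $x-y=(z-y)-(z-x)$: it exhibits each column of $x-y$ as a difference of a column of $z-y$ (lying in $U_2$) and a column of $z-x$ (lying in $U_1$), whence $W\subseteq U_1+U_2$. Since $\dim W=2$ while $\dim(U_1+U_2)\le\dim U_1+\dim U_2=2$, this inclusion forces $U_1+U_2=W$; in particular the column space $U_1$ of $z-x$ is contained in $W$, which is exactly the condition $z\in\Pi$.

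The argument is short and I do not anticipate a genuine obstacle. The only step requiring care is the preliminary bookkeeping that identifies planes of $\RAG(3,2,\field{2})$ with two\nobreakdash-dimensional column spaces in $\field{2}^3$ and collinearity with rank one, so that the geometric assertion becomes the purely linear\nobreakdash-algebraic inclusion $W\subseteq U_1+U_2$. I note in passing that the binary hypothesis plays no essential role here: the same reasoning goes through verbatim over an arbitrary field.
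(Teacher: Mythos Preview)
Your proof is correct and is essentially the linear-algebraic unpacking of the paper's one-line argument: the paper simply notes that the plane spanned by the two lines $xz$ and $yz$ through $z$ contains $x$, $y$, and $z$, hence coincides with the unique plane generated by $x$ and $y$. Your identity $x-y=(z-y)-(z-x)$ together with the dimension count $W\subseteq U_1+U_2$, $\dim W=2=\dim(U_1+U_2)$, is exactly this ``span of two lines'' statement translated into column spaces.
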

  \begin{proof}
    This is evident, since the plane spanned by the lines $xz$ and
    $yz$ contains $x$, $y$, and $z$.
  \end{proof}
  Now we are ready to state and prove our theorem concerning the numbers
  $\minsize_k(3,2,\field{2})$.
\begin{theorem}
  \label{th:(3,2)}
  \begin{enumerate}[(i)]
  \item $\minsize_1(3,2,\field{2})=6$;
    \item $\minsize_2(3,2,\field{2})=22$.
  \end{enumerate}
\end{theorem}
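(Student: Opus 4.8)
The plan is to establish the two exact values by a combination of lower-bound (packing) arguments and explicit constructions, exploiting the detailed combinatorics of $\RAG(3,2,\field{2})$ collected above. Recall that $\minsize_1(3,2,\field{2})$ is the minimum size of a blocking set with respect to the $112$ lines (the $2$-flats being the planes), while $\minsize_2(3,2,\field{2})$ is the minimum size of a blocking set with respect to the $64$ points regarded as $1$-flats... wait---more precisely, a $k$-dense set meets every $(m-k)$-flat, so for $(m,n,k)=(3,2,1)$ we must block all $2$-flats (planes), and for $k=2$ we must block all $1$-flats (lines). I would state this reduction explicitly at the outset, since it governs which incidence data (the $28$ planes in $7$ parallel classes of size $4$, versus the $112$ lines in $7$ parallel classes of size $16$) is relevant in each part.

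For part~(i), I would first prove $\minsize_1(3,2,\field{2})\geq 6$. The bound $\minsize_1>q^{kn}=4$ from Theorem~\ref{th:m>n}(ii) already rules out $4$, so the work is to rule out $5$. Here I would argue by contradiction: a dense set $\matset{S}$ of size $5$ must meet each of the $7$ parallel classes of $4$ planes, and a counting argument on incidences between the $5$ points and the $28$ planes---each point lies in $7$ planes, giving $35$ incidences to be distributed so that all $28$ planes are covered---should force a configuration that fails to block some plane, using Lemma~\ref{le:triangle} to control collinearity within the spanned planes. For the matching upper bound I would exhibit an explicit $6$-point blocking set; a natural candidate is the union of a minimal plane-MRD structure with one extra point, or six points arranged so that each parallel class of planes is met, and then verify density directly via Lemma~\ref{le:blocking}(i).

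For part~(ii), the target is blocking all $112$ lines with as few points as possible, and I expect this to be the main obstacle. The upper bound $\minsize_2<q^{km}=64$ gives $\minsize_2\leq 64-q^{k(m-n)}+1=64-4+1=61$ from Theorem~\ref{th:m>n}(ii), which is far from $22$, so genuinely new structure is needed. My plan for the construction is to build a $22$-point set meeting every line in at least one point by working plane-by-plane: since each plane is a copy of $\RAG(2,2,\field{2})$ requiring a blocking set of size at least $4$, and the $28$ planes overlap along the $112$ lines, I would use Lemma~\ref{le:(3,2)} to understand how partial blocking sets of sizes $5,6,7,8$ inside each plane interact, aiming to cover all three parallel classes of lines within every plane economically by reusing points shared between planes. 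The lower bound $\minsize_2\geq 22$ is where the hard combinatorial accounting lies: I would set up a double-count of (point, blocked-line) incidences against the requirement that all $112$ lines be met, and then invoke parts~(i)--(iii) of Lemma~\ref{le:(3,2)} to show that any blocking set leaving too many lines unblocked in some plane (equivalently, whose complement contains three mutually non-collinear points in some plane) propagates a deficiency---via Lemma~\ref{le:triangle}---that cannot be repaired by points in neighbouring planes without exceeding $21$. The delicate step will be pinning down the exact extremal value $22$ rather than an off-by-a-few estimate, which is precisely why the auxiliary Lemma~\ref{le:(3,2)} distinguishing the cases $|\matset{S}|\in\{5,6,7,8\}$ was proved in advance; I anticipate the final argument to be a careful case analysis over the possible intersection patterns of the putative blocking set with the $7$ parallel classes of planes.
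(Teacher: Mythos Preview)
Your plan for part~(i) is essentially the paper's argument: rule out $|\matset{A}|=5$ by incidence counting, then exhibit a $6$-point example. One sharpening: you do not need Lemma~\ref{le:triangle} here. With $5$ points and $35$ point--plane incidences over $28$ planes, exactly $7$ planes contain $2$ points and $21$ contain $1$; but the $\binom{5}{2}=10$ pairs of points must each lie in at least one plane, and only $7$ planes can host a pair---contradiction. That is the whole lower-bound step.

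For part~(ii) your lower-bound strategy (look at plane sections, invoke Lemma~\ref{le:(3,2)} for sections of size $4,5,6$, then handle the residual patterns such as $(4,4,4,8)$, $(4,4,5,8)$, $(4,5,5,7)$ by the non-collinear-triple clause and Lemma~\ref{le:triangle}) is exactly what the paper does, and your anticipation that this is the delicate case analysis is accurate.

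The genuine gap is the \emph{construction} in part~(ii). ``Working plane-by-plane'' does not yield a concrete $22$-point blocking set, and it is not clear how you would reach $22$ this way: a single parallel class of four planes contains only $48$ of the $112$ lines, so assembling plane-wise blocking sets does not obviously control the remaining $64$ lines, and Lemma~\ref{le:(3,2)} is a structural tool for the lower bound, not a recipe for building examples. The paper's construction is quite different and is the key idea you are missing: identify $\field{2}^{3\times 2}$ with $\field{8}^{2}$ (columns as $\field{8}$-coordinates) and take the union of three linear $(3,2,1)$ MRD codes through~$\mat{0}$, namely $\matset{M}_1\cup\matset{M}_2\cup\matset{M}_4$ where $\matset{M}_i=\field{8}(1,\alpha^i)$. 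These are three lines of $\AG(2,\field{8})$ through the origin, pairwise meeting only in~$\mat{0}$, so the union has $3\cdot 8-2=22$ points. The verification that every line of $\RAG(3,2,\field{2})$ meets this set is itself a short but nontrivial computation (one shows that any line already meeting two of the $\matset{M}_i$ must meet the third, then counts). Without this explicit global construction, your upper bound for part~(ii) is not established.
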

\begin{proof}
  % We determine the numbers $\minsize_k(3,2,2)$, $k=1,2$.
%   In what follows we denote
%   the ring of $2\times 2$ matrices over $\field{2}$ by $R$. Note that
%   for $\mat{B}\in\field{q}^{3\times 2}$ the set
%   $\mat{B}R=\{\mat{BX};\mat{X}\in R\}$ consists of all matrices in
%   $\field{q}^{3\times 2}$ whose column space is contained in the
%   column space of $\mat{B}$.
  
%   Let us call the elements
%   of $\field{q}^{3\times 2}$ \emph{points}, subsets of
%   $\field{q}^{3\times 2}$ of the form $\mat{A}+\mat{B}R$ with
%   $\rank(B)=1$ \emph{lines} and subsets of this form with $\rank(B)=2$
%   \emph{planes}. Incidence is given by set inclusion. The resulting
%   geometry will be denoted by
%   $\AG(3,2,\field{2})$.\footnote{$\AG(3,2,\field{2})$ is an example of
%     an affine geometry over a ring, derived from the right $R$-module
%     $\field{2}^{3\times 2}$. The subspaces of this geometry are the
%     cosets of all submodules (including $\emptyset$). Except for
%     $\emptyset$ and $\field{q}^{3\times 2}$, we have described all of
%     them.}

  (i) Let $\matset{A}$ be a
  dense set of points, i.e.\ $\matset{A}$ meets every plane of
  $\AG(3,2,\field{2})$. In Th.~\ref{th:m>n}(ii) we have
  seen that $|\matset{A}|\geq 5$. If $|\matset{A}|=5$ then exactly one
  plane in every parallel class contains $2$ points of $\matset{A}$
  and the three other planes contain exactly $1$ point of
  $\matset{A}$. Thus $7$ planes meet $\matset{A}$ in $2$ points and
  $21$ planes meet $\matset{A}$ in $1$ point. On the other hand, every
  pair of distinct points of $\matset{A}$ is contained in at least one
  plane. Since there are $10$ such pairs, we have a
  contradiction. Hence necessarily $|\matset{A}|\geq 6$. In order to
  finish the proof, we exhibit a dense set $\matset{A}$ with
  $|\matset{A}|=6$:
  \begin{equation}
    \label{eq:dense6}
    \begin{pmatrix}
      1&0\\
      0&1\\
      0&0
    \end{pmatrix},
    \begin{pmatrix}
      1&1\\
      1&0\\
      0&0
    \end{pmatrix},
    \begin{pmatrix}
      0&1\\
      1&1\\
      0&0
    \end{pmatrix},
    \begin{pmatrix}
      0&0\\
      1&0\\
      1&0
    \end{pmatrix},
    \begin{pmatrix}
      0&1\\
      0&0\\
      0&1
    \end{pmatrix},
    \begin{pmatrix}
      0&0\\
      0&0\\
      1&1
    \end{pmatrix}.
  \end{equation}
  % If there is a plane $\matset{H}$ containing
  % $3$ points of $\matset{A}$, we must have $|\matset{A}|\geq 6$ (since
  % $\matset{A}$ meets the $3$ planes $\matset{H}'\neq\matset{H}$
  % parallel to $\matset{H}$).

(ii) First we exhibit a blocking set with respect to lines of size
$22$. We identify $\field{2}^{3\times 2}$ with $\field{8}^2$ by
viewing the columns of $\mat{A}\in\field{2}^{3\times 2}$ as coordinate
vectors relative to $1,\alpha,\alpha^2$, where
$\alpha^3+\alpha+1=0$. The lines of $\AG(2,\field{8})$ fall into three
parallel classes of lines belonging to $\LAG(3,2,\field{2})$ and
represented by $\field{8}(1,0)$, $\field{8}(0,1)$, $\field{8}(1,1)$,
and six parallel classes defining MRD codes in $\field{2}^{3\times 2}$
and represented by $\matset{M}_i=\field{8}(1,\alpha^i)$, $1\leq i\leq 6$. We now
set 
\begin{equation}
  \label{eq:22set}
  \matset{A}=\matset{M}_1\cup \matset{M}_2\cup\matset{M}_4,
\end{equation}
and show that $\matset{A}$ is a blocking set with respect to lines in
$\RAG(3,2,\field{2})$.

Any point $(a,b)\notin\matset{M}_i$, where $i\in\{1,2,4\}$ is fixed,
is collinear in $\RAG(3,2,\field{2})$ with exactly
$3$ points of $\matset{M}_i$ (the points $(x,y)\in\matset{M}_i$
satisfying $x=a$, $y=b$, or $x+y=a+b$). In particular this applies to
the nonzero points on one of the MRD codes $\matset{M}_i$, 
when considered relative to the other two MRD codes.
Now consider a line $\matset{L}$ of $\RAG(3,2,\field{2})$ containing
$2$ points of $\matset{A}$. We claim that $\matset{L}$ contains a
third point of $\matset{A}$, and hence that $\matset{L}$ meets
$\matset{A}$ in exactly $3$ points (since it is transversal to each
$\matset{M}_i$).

Viewed as a subset of $\field{8}^2$, the line
$\matset{L}$ has the form
$\bigl\{(x,y),(x+z,y),(x,y+z),(x+z,y+z)\big\}$ with $z\neq 0$ and
$(x,y)\notin\bigl\{(0,0),(z,0),(0,z),(z,z)\bigr\}$. Replacing $\alpha$
by $\alpha^2$ or $\alpha^4$, if necessary, we may assume
$(a,a\alpha)\in\matset{M}_1\cap\matset{L}$,
$(b,b\alpha^2)\in\matset{M}_2\cap\matset{L}$. Then either $a=b$,
$a\alpha=b\alpha^2$, or $a+a\alpha=b+b\alpha^2$. In the first case
$(a,a\alpha)$, $(a,a\alpha^2)\in\matset{L}$. Since
$1+\alpha+\alpha^2=\alpha^5$, the
remaining two points on $\matset{L}$ are $(a\alpha^5,a\alpha)$,
$(a\alpha^5,a\alpha^2)$, the last point being on $\matset{M}_4$.
In the remaining two cases we conclude similarly
$\matset{L}=\bigl\{(a,a\alpha),(a\alpha^6,a\alpha),(a,a\alpha^4),
(a\alpha^6,a\alpha^4)\bigr\}$, respectively, $\matset{L}=
\bigl\{(a,a\alpha),(a\alpha^4,a\alpha^6),(a,a\alpha^6),
(a\alpha^4,a\alpha)\bigr\}$. In all cases $\matset{L}$ contains a
point of $\matset{M}_4$, completing the proof of our claim.

With this property at hand it is now easy to show that $\matset{A}$ is
a blocking set with respect to lines in $\RAG(3,2,\field{2})$. Indeed,
$\matset{A}$ meets $7$ lines in $1$ point (the lines through
$\mat{0}$), $21$ lines in $3$ points (the lines connecting two points
in different MRD codes $\matset{M}_i$ and $\matset{M}_j$), and hence
$4\cdot 21=84$ lines in $1$ point (the remaining lines through a point
in one of the MRD codes). This accounts for $7+21+84=112$ lines, i.e.\
all the lines of $\RAG(3,2,\field{2})$,
proving the assertion.

Next we show that any blocking set $\matset{A}$ with respect to lines
in $\RAG(3,2,\field{2})$ has size at least $22$. To this end we
consider the plane sections of $\matset{A}$ with respect to some
parallel class $\matset{H}_1,\matset{H}_2,\matset{H}_3,\matset{H}_4$
of planes of $\RAG(3,2,\field{2})$. The indexing will be chosen in such a
way that $i\mapsto|\matset{A}\cap \matset{H}_i|$ is nondecreasing.

Since $\matset{A}\cap \matset{H}_i$ must be a blocking set in
$\matset{H}_i$, we have $|\matset{A}\cap \matset{H}_i|\geq 4$. If
$|\matset{A}\cap \matset{H}_i|\in\{4,5,6\}$ then there exists, by
Lemma~\ref{le:(3,2)}, an MRD code $\matset{M}\subset \matset{H}_i$
disjoint from $\matset{A}$. From the $7\times 4=28$ lines through the
points of $\matset{M}$ exactly $4\times 4=16$ are not contained in
$\matset{H}_i$. By Lemma~\ref{le:triangle} these lines are pairwise
disjoint and hence incident with at least $16$ points of $\matset{A}$.
This shows
\begin{equation*}
  |\matset{A}|\geq|\matset{A}\cap\matset{H}_i|+16,\quad\text{provided
    that $|\matset{A}\cap\matset{H}_i|\in\{4,5,6\}$}.
\end{equation*}
If some section $\matset{A}\cap\matset{H}_i$ has size $6$, we have
$|\matset{A}|\geq 22$ and are done. This leaves as obstructions
the intersection patterns $(4,4,4,8)$, $(4,4,5,8)$, $(4,5,5,7)$ with the
planes $\matset{H}_i$.
However, in such a case there exist, again by
Lemma~\ref{le:(3,2)}, MRD codes
$\matset{M}_i\subseteq\matset{A}\cap\matset{H}_i$ for $i=1,2,3$ and a
set $\matset{M}_4=\{x_1,x_2,x_3\}$ of three pairwise non-collinear points with
$\matset{M}_4\cap\matset{A}=\emptyset$ in the plane $\matset{H}_4$.
%(the plane having the largest intersection size with $\matset{A}$)
Each of the $4$ lines through a point $x_i$ which are
not contained in $\matset{H}_4$ must be incident with at least one point of
$\matset{A}$, and these $4$ (or more) points must be
transversal to
$\matset{M}_i$ for $i=1,2,3$.\footnote{By ``transversal'' we
  mean that $\matset{M}_i$ contains at most one of the points. (It
may contain none of the points.)}
Moreover, the point sets
arising in this way from $x_1,x_2,x_3$ must be pairwise disjoint.
In all three cases this cannot be
accomplished, showing that $|\matset{A}|\leq 21$ is impossible. 
\end{proof}

\section{Conclusion}
\label{sec:Conclusion}

Good random matrices are of particular interest and importance. In
this paper, we determined the minimum support size of a $k$-good
random matrix and established the relation between $k$-good random
matrices with minimum support size and MRD codes. We showed
that homogeneous weights on matrix rings give rise to $k$-good random
matrices, and we explored the connections between $k$-good random
matrices, $k$-dense sets of matrices and the geometry of certain matrix spaces.

However, our understanding of $k$-good random matrices
is still limited. Consider for example the following problem about
the collection $\grmat_k(m,n,\field{q})$ of all $k$-good random $m\times n$
matrices over $\field{q}$, which clearly forms a convex polytope in
$q^{mn}$-dimensional Euclidean space.

\begin{problem}
  \label{pr:polytope}
  Determine the structure of $\grmat_k(m,n,\field{q})$ in more detail.
\end{problem}
A basic question is that about the vertices of this polytope. It is
clear from Th.~\ref{th:KGoodRandomMatrixWithMinimumSupportSize}
that every $(m,n,k)$ MRD code over $\field{q}$ gives rise to a vertex
of $\grmat_k(m,n,\field{q})$ (the uniform distribution over this code). 
But there are other vertices, as the following binary example with
$(m,n)=(2,3)$ shows. 

The geometry $\RAG(2,3,\field{2})$ has $3$ parallel classes of lines
represented by the subspaces (lines through $\mat{0}$)
\begin{equation}
  \label{eq:(2,3)}
  \matset{L}_1=
  \begin{pmatrix}
    a_1&a_2&a_3\\
    0&0&0
  \end{pmatrix},\quad
  \matset{L}_2=
  \begin{pmatrix}
    0&0&0\\
    a_1&a_2&a_3
  \end{pmatrix},\quad
  \matset{L}_3=
  \begin{pmatrix}
     a_1&a_2&a_3\\
     a_1&a_2&a_3
  \end{pmatrix},
\end{equation}
where it is understood that $(a_1,a_2,a_3)$ runs through all of
$\field{2}^3$. Let us choose $3$ nonzero points
$\mat{A}_i\in\matset{L}_i$, $i=1,2,3$, which are linearly dependent
over $\field{2}$ (for example the points obtained by setting $a_1=1$,
$a_2=a_3=0$) and a $4$-dimensional $\field{2}$-subspace $\matset{A}$
of $\field{2}^{2\times 3}$ satisfying
$\matset{A}\cap\matset{L}_i=\{\mat{0},\mat{A}_i\}$ (continuing the
example, we can take $\matset{A}$ as the set of all matrices $\mat{A}
=(a_{ij})\in \field{2}^{2\times 3}$ satisfying
$a_{12}+a_{13}+a_{23}=a_{12}+a_{22}+a_{23}=0$). The set $\matset{A}$
meets every line of $\RAG(2,3,\field{2})$ in $2$ points and hence
determines a $1$-good random $2\times 3$ matrix $\rmat{A}$ (by
assigning probability $\frac{1}{16}$ to every element of
$\matset{A}$).  But $\matset{A}$ does not contain a $(2,3,1)$ MRD
code, since such a code would intersect some coset of the subspace
$\{\mat{0},\mat{A}_1,\mat{A}_2,\mat{A}_3\}$ of $\matset{A}$ in at
least $2$ points and hence contain $2$ points $\mat{A}$,
$\mat{A}+\mat{A}_i$ of some line $\mat{A}+\matset{L}_i$. Consequently
$\rmat{A}$ cannot be written as a convex combination of MRD codes.

The determination of all vertices of $\grmat_k(m,n,\field{q})$ seems to be a
challenging problem, even for moderately sized values of $m,n,q$.

We end this paper with another ``generic'' research problem.
%which might well be related to the previous problem.
\begin{problem}
  Investigate the geometries $\RAG(m,n,\field{q})$ from the viewpoint
  of Galois geometry. In particular determine either exactly (for
  small values of $m,n,q$) or approximately (i.e.\ by bounds) the
  minimum size of blocking sets (maximum size of arcs)
  in $\RAG(m,n,\field{q})$.  
  % Does the support of every $k$-good random $m\times n$ matrix contain
  % an $(m,n,k)$ MRD code?
\end{problem}
For what has been done in the classical case (i.e.\ $n=1$), see
\cite{hirschfeld-storme98,hirschfeld-storme01} and various chapters of
\cite{nova2011} for an overview.

\section*{Acknowledgements}

The authors are grateful to the reviewers for their careful reading of
the manuscript and for several helpful comments. The work of the
second author was supported by the National Natural
  Science Foundation of China (Grant No.\ 60872063) and the Chinese
  Specialized Research Fund for the Doctoral Program of Higher
  Education (Grant No.\ 200803351027).
 
% \bibliographystyle{abbrv}
% \bibliography{strings,th,mathe}
% \end{document}

\medskip

Received May 2011; revised July 2011.

% Please replace the following email addresses by yours.

\medskip
 {\it E-mail address: }yangst@codlab.net\\
 \indent{\it E-mail address: }honold@zju.edu.cn
\end{document}